\newtheorem{theorem}{Theorem}
\newtheorem{lemma}[theorem]{Lemma}
\newtheorem{proposition}[theorem]{Proposition}
\newtheorem{corollary}[theorem]{Corollary}
\theoremstyle{definition}
\newtheorem{definition}{Definition}
\theoremstyle{remark}
\newtheorem{remark}{Remark}
\newtheorem{example}{Example}
\newacro{DGR}{Data-Guided Regulation}
\newacro{F-DGR}{Fast Data-Guided Regulation}
\newacro{SVD}{Singular Value Decomposition}
\newacro{PBH}{Popov-Belevitch-Hautus}
\newacro{LQR}{Linear Quadratic Regulator}
\newacro{PI}{Policy Iteration}
\newacro{RLS}{Recursive Least Squares}
\newacro{sysID}{system identification}
\newacro{LMI}{Linear Matrix Inequalities}
\newacro{LTI}{Linear Time-Invariant}
\newacro{MPC}{Model Predictive Control}
\newcommand\BibTeX{{\rm B\kern-.05em{\sc i\kern-.025em b}\kern-.08em
		T\kern-.1667em\lower.7ex\hbox{E}\kern-.125emX}}
\newcommand{\RomanNumeralCaps}[1]
    {\MakeUppercase{\romannumeral #1}}
\newcommand\x{{\bm x}}
\def\u{{\bm u}}
\newcommand\z{{\bm z}}
\newcommand\w{{\bm w}}
\def\v{{\bm v}}
\newcommand \e{{\bm e}}
\newcommand\oomega{{\bm \omega}}
\newcommand\ogamma{{\bm \gamma}}
\newcommand\p{{\bm p}}
\newcommand\q{{\bm q}}
\newcommand\ozeta{{\bm \zeta}}
\newcommand\scalemath[2]{\scalebox{#1}{\mbox{\ensuremath{\displaystyle #2}}}}
\begin{document}
\title{
On Regularizability and its Application to Online Control of Unstable LTI Systems}

%
%
%

\author{Shahriar~Talebi, ~\IEEEmembership{Student~Member,~IEEE,}
Siavash~Alemzadeh, ~\IEEEmembership{Student~Member,~IEEE,}\\
Niyousha ~Rahimi, ~\IEEEmembership{Student~Member,~IEEE,}
and~Mehran~Mesbahi,~\IEEEmembership{Fellow,~IEEE}
\thanks{The research has been supported by AFOSR grant FA9550-20-1-0053. The authors are with the William E. Boeing Department of Aeronautics and Astronautics, University of Washington, Seattle, WA, USA. Emails: \tt\small \{shahriar, alems, nrahimi, mesbahi\}@uw.edu.}
\thanks{A preliminary version of this paper has appeared in the \textit{\nth{59} IEEE Conference on Decision and Control~\cite{talebi2020online}}.}}

\markboth{}%
{Talebi \MakeLowercase{\textit{et al.}}: On Regularizability and its Application to Online Control of Unstable LTI Systems}
%



\maketitle
\thispagestyle{firststyle}
\begin{abstract}
		%
		Learning, say through direct policy updates, often requires assumptions such as knowing \emph{a priori} that the initial policy (gain) is stabilizing, {\color{PineGreen}or persistently exciting (PE) input-output data, is available.}
		In this paper, we examine online regulation of (possibly unstable) partially unknown linear systems with no {\color{PineGreen} prior access to an initial stabilizing controller nor PE input-output data;
		we instead leverage the knowledge of the input matrix for online regulation}.
		First, we introduce and characterize the notion of ``regularizability'' for linear systems that gauges the extent by which a system can be regulated in finite-time in contrast to its asymptotic behavior (commonly characterized by stabilizability/controllability).
		Next, having access only to the input matrix, we propose the \ac{DGR} synthesis procedure that---as its name suggests---regulates the underlying state while also generating informative data that can subsequently be used for data-driven stabilization or system identification.
		We further improve the computational performance of DGR via a rank-one update and demonstrate its utility in online regulation of the X-29 aircraft.
		%
		%
\end{abstract}

\begin{IEEEkeywords}
    Online Regulation; Unstable Linear Systems; Single-Trajectory Learning; Iterative Control
\end{IEEEkeywords}

%
\IEEEpeerreviewmaketitle

\section{Introduction}
	
		
	
	
	\IEEEPARstart{F}{eedback} control is ubiquitous in modern technology including applications where it 
	provides means of stabilization in addition to performance.
	Control of open-loop unstable plants arising for instance, in industrial and flight control applications, underscores the importance of stabilization with robustness guarantees.
    As such, control of unstable systems is an ongoing research topic, particularly in the context of safety-critical systems.
	It is well-known that unstable systems are fundamentally more difficult to control~\cite{stein2003respect}; in fact, practical closed-loop systems with unstable subsystems are only locally stable \cite{sree2006control}.
	Yet, most of the existing synthesis literature has focused on model-based control where the designer has to discern fundamental limitations stemming from process instabilities~\cite{skogestad2002control}.
	
	Recent interest in model-free stabilization in the meantime, has been motivated by novel sensing technologies, robust machine learning, and efficient computational methods to reason about control and estimation of uncertain systems--all from measured (online) data \cite{hou2017data,sedghi2020multi}.
	Safety-critical systems have in fact necessitated non-asymptotic analysis on data-driven methods~\cite{faradonbeh2018finite, dean2019safely}.
	In particular, there has been a growing interest in examining
	finite-time control of unknown linear dynamical systems from time-series or a single trajectory \cite{alaeddini2018linear, sarkar2019finite, berberich2019robust, oymak2019non, fattahi2019learning, wagenmaker2020active, tsiamis2019finite}.
	Parallel to asymptotic analysis in traditional adaptive control and \ac{sysID} \cite{ljung2001system, narendra2012stable, aastrom2013adaptive}, model-based finite-time control has benefited from a least-squares approach to identification followed by robust synthesis--see for example~\cite{dean2017sample}.
	In this direction, probabilistic bounds on the estimation error related to the required run-time have been examined.
	While it has been shown that model-based methods require fewer measurements for certain control problems in general~\cite{tu2018gap},\footnote{That is, first finding a model estimate from data and then use that estimate for control design.} data collection required for \ac{sysID} can be expensive or impractical due to resource limitations and safety constraints.
	Furthermore, some of the aforementioned studies rely on \textit{a priori} information about the system, such as estimates of system parameters \cite{dean2019safely}, an initial stabilizing controller \cite{kim2005stable, bradtke1994adaptive, fazel2018global, alemzadeh2019distributed}, or assuming an open-loop stable system \cite{oymak2019non, sarkar2019finite}.

	{\color{PineGreen}
It is known that an input-output trajectory of a controllable linear time invariant (LTI) system can be parameterized by (offline) data trajectories generated from a persistently exciting (PE) input~\cite{willems2005note}. Building on this fact, there has been recent works on stabilization of 
LTI systems directly from the available data (e.g., see \cite{de2019formulas, coulson2019deepc, baros2020online, van2019data, yu2021controllability, berberich2021data}).
	However, ensuring a PE input-output data may not be practical for data-driven control or identification of unstable systems even in low dimensions without recourse to resets~\cite{de2019formulas}.\footnote{For instance, injecting white noise into an unstable system can result in ill-conditioned data matrices, that in turn, leads to numerical issues.}
    Hence, existing data-guided methods might not be directly applicable for safety critical control such as
    online flight control \cite{lozano2004robust} or infrastructure recovery \cite{gonzalez2017efficient}.
    Our work is motivated by such applications, requiring no reliance on an initial stabilizing controller nor a PE input-output data trajectory for data-guided control. 
	In this direction, we focus on instances where 
	the input matrix of the LTI system is known.
	This point of view has been adopted by the desire to ensure satisfactory performance for online data-guided control based on a single trajectory--even when the underlying system is unstable--from the onset.
    }
	
	In order to realize the above program in a systematic way, in the first part of the paper, we introduce a class of linear systems exhibiting a property called ``regularizability;''\footnote{Not to be confused with the notion of ``regularity" for singular systems~\cite{ozcaldiran1990regularizability}.} this notion captures the input ``effectiveness" as it relates to finite-time regulation.
	We then proceed to characterize regularizability using linear matrix inequalities (LMIs), as well as clarify  how it relates to spectral properties of the underlying LTI system. 
	Additionally, we show how this system-theoretic notion can be verified in a more transparent manner
    for a subclass of partially known systems.

	In the second part of this paper, by employing the notion of regularizability, we introduce the \acf{DGR} algorithm, an online iterative synthesis procedure that utilizes a single trajectory for an otherwise partially unknown (discrete time) LTI system.
    {\color{PineGreen}\ac{DGR} does not use prior assumptions on the linear state dynamics, nor access to an initial stabilizing  controller or an input-output dataset; instead, the algorithm only relies on the knowledge of the input matrix.
	The knowledge of the input matrix is motivated by scenarios where it is known how the control input affects the state dynamics, yet how the internal states of the system interact is uncertain (for example, consider the problem of controlling an unknown networked system from a given set of nodes).
	This assumption also proves useful for our setup in order to, 1) ensure a satisfactory performance for the system trajectory from the onset of the regulation process, 2) avoid requiring an initial stabilizing controller, and 3) avoid requiring a PE input-output trajectory from an unstable systems (that is often impractical and leads to ill-conditioned data matrices for post-processing). We postulate that
	in the case when the input matrix is also unknown, deriving nontrivial guarantees for closed loop performance of unstable systems from the onset might prove to be illusive.
	Finally, as pointed out above, having access to the input structure of a system is pertinent to a number of applications that involve learning~\cite{vrabie2009adaptive, wagenmaker2020active, nozari2017network, sharf2018network}; a similar assumption has been adopted for learning and control of nonlinear systems \cite{jagtap2020control}, where the system dynamics is affine in control with known input mapping and unknown state dynamics.}
	
	The contribution of the proposed work is as follows:
	(1) in addition to introducing the notion of regularizability for LTI systems, we show how it is distinct from related properties such as stabilizability.
	We believe that regularizability is of independent interest particularly as {\color{violet} it pertains to online regulation};
	(2) we derive conditions under which \ac{DGR} can eliminate unstable modes of the (unknown) system and regulate its state trajectory.\footnote{Here, regulation is ensured by bounding the norm of the system states during the learning process; see \Cref{sec:probSetup} for more details.}
	DGR essentially aims for simultaneous identification and regulation of the hidden unstable modes from a single trajectory in a feedback form.
	As such, \ac{DGR} can avoid some of the conditioning issues
	that arise in processing data generated by an unstable system.
	Using the notion of regularizability, we then proceed to derive upper bounds on the state trajectories based on a geometric quantity for LTI systems that we refer to as the ``instability number;'' 
	(3) we show that while \ac{DGR} performs well for a large class of unstable systems, special structures (e.g., symmetry) further facilitate deriving intuitive bounds on the system trajectory during the learning process;
	(4) finally, we show that the discrete nature of time-series data enables a recursive approach to DGR synthesis.
	%
	In this direction a recursive DGR is proposed that circumvents storing the entire data history and avoids demanding operations such as pseudoinverse computation or multiplying large matrices.\footnote{	
A preliminary version of this work is the manuscript~\cite{talebi2020online}. The contributions of the present work as compared with~\cite{talebi2020online} include various LMI characterizations of regularizability, its extension to polytopic uncertain systems, its use in the context of online regulation with a more general setup involving the control cost, recursive and efficient updates of the online regulation algorithm, as well as a more detailed discussion on the examples and relevant literature. Furthermore, the proofs and analysis that are not presented in the conference version of the paper have been included in this manuscript.}

	The rest of the paper is organized as follows.
	In \S\ref{sec:math}, we provide an overview of mathematical notions 
	used in the paper.
	In \S\ref{sec:probSetup}, we introduce the problem setup as well as a motivating example, followed
	by introducing the notion of regularizability for an LTI model.
	We further study the properties of regularizable systems~in \S\ref{sec:regularizability}.
	Additionally, the \ac{DGR} algorithm is proposed in \S\ref{sec:DGR} as the
 means of online regulation of (possibly) unstable systems.
	The subsequent part of \S\ref{sec:DGR} is devoted to the analysis of the {DGR-induced} closed loop system, deriving upper bounds on the state trajectories, and efficient implementation of \ac{DGR}.
	We provide an illustrative example in \S\ref{sec:simulation} followed by
	concluding remarks in \S\ref{sec:conclusion}.

\section{Mathematical Preliminaries}
\label{sec:math}
    We denote the fields of real and complex numbers by $\mathbb{R}$ and $\mathbb{C}$, respectively, and real $n\times m$ matrices by $\mathbb{R}^{n \times m}$.
    %
	%
	%
	The $n \times 1$ vector of all ones is denoted by $\mathbbm{1}$.
	The unit vector $\e_i$ is a column vector with identity at its $i$th entry and zero elsewhere.
	The $n \times n$ identity matrix is denoted by $\mathrm{I}_n$ (or simply $\mathrm{I}$).
	The $\mathrm{diag}(.)$ indicates a diagonal matrix constructed by elements of its argument in the same order starting from upper-left corner.
	For a real symmetric matrix $L$, we say that $L\succ 0$ when $L$ is positive-definite (PD) and $L\succeq 0$ for the 
 	positive-semidefinite (PSD) case.
    The {algebraic multiplicity} of an eigenvalue $\lambda$ is denoted by $m(\lambda)$;
$\lambda$ is called {simple} if $m(\lambda)=1$.
	The range and nullspace of a real matrix $M \in \mathbb{R}^{n \times m}$ are denoted by $\mathcal{R}(M) \subseteq \mathbb{R}^n$ and $\mathcal{N}(M)\subseteq \mathbb{R}^m$, respectively, the dimension of $\mathcal{R}(M)$ is designated by $\mathrm{rank} (M)$, and its transpose by $M^\intercal$.
	The dimension of a vector space is denoted by $\textbf{dim}$. 
	The span of a set of vectors over the complex field is denoted by
	$\mathrm{span}\{.\}$.
	The singular value decomposition of a matrix $M \in \mathbb{R}^{n\times m}$ is the factorization $M= {U} \Sigma {V}^\intercal$, where the unitary matrices ${U}\in \mathbb{R}^{n\times n}$ and ${V}\in \mathbb{R}^{m\times m}$ consist of the left and right ``singular" vectors of $M$, and $\Sigma \in \mathbb{R}^{n\times m}$ is the diagonal matrix of singular values in a descending order.
	The reduced order matrices $U_r,V_r$ can be obtained by truncating the factored matrices $U$ and $V$ in the SVD to the first $r$ columns, where $r = \mathrm{rank}(M)$. The thin SVD of $M$ is then the factorization $M = U_r \Sigma_r V_r^\intercal$, where $\Sigma_r \in \mathbb{R}^{r \times r}$ is now nonsingular.
    From SVD, one can also construct the Moore-Penrose generalized inverse ---\textit{pseudoinverse} for short--- of $M$ as $M^{\dagger}={V}\Sigma^{\dagger}{U}^\intercal$, in which $\Sigma^{\dagger}$ is obtained from $\Sigma$ by first replacing each nonzero singular value with its inverse (zero singular values remain intact) followed by a transpose.
	A square matrix $A \in \mathbb{R}^{n\times n}$ is Schur stable if $\rho(A) < 1$, where $\rho(.)$ denotes the spectral radius, i.e., maximum modulus of eigenvalues of its matrix argument. The matrix $A$ is (complex)~{diagonalizable} if there exist a diagonal matrix $\Lambda \in \mathbb{C}^{n\times n}$ and a nonsingular matrix $U \in \mathbb{C}^{n\times n}$ such that $A = U \Lambda U^{-1}$.
	In this case, $\Lambda$ consists of the eigenvalues of $A$ with columns of $U$ as the corresponding eigenvectors. 
    The orthogonal projection of a vector $\bm v$ on a linear subspace $S$ is denoted by $~\Pi_{S}(\bm v)$.\footnote{We will be working with finite dimensional vector spaces and as such all subspaces are closed.}
    When the columns of a matrix $U\in \mathbb{R}^{n \times k}$ form an orthonormal basis for the subspace $S$, then $~\Pi_{S} = U U^\intercal$.
	The {Euclidean norm} of a vector $\x \in\mathbb{R}^n$ is denoted by $\| \x \|=(\x^\intercal \x)^{1/2}$.
	For a matrix $M$, its {operator norm} is denoted by $\|M\|=\sup\{\| M {\u} \|: \|{\u} \|=1\}$.
	By $\mathcal{B}_{2}^r$, we refer to the $r$-dimensional Euclidean ball of unit radius.
	An $r$-dimensional multi-index $\alpha$ is an $r$-tuple of the form $(\alpha_1, \alpha_2, \cdots, \alpha_r)$ with all non-negative integers $\alpha_i$, where the sum of its elements is denoted by $|\alpha| = \sum_{i=1}^r \alpha_i$;
 $\alpha \in \{0,1\}^r$ signifies that each $\alpha_i\in\{0,1\}$ for $i=1,\dots,r$.
	We say that \textit{$\x_0$ excites $k$ modes of a matrix $A$} if $\x_0$ is contained in the (complex-)span of $k$ eigenvectors of $A$, but not in the span of any $k-1$ eigenvectors;
	we refer to those $k$ eigenvectors (for which $\x_0$ is in the span of) as the corresponding \textit{excited modes}.
\section{Problem Setup}
\label{sec:probSetup}

	In this section, we introduce the problem setup and
	highlight its unique features through an example.
	Consider a discrete-time LTI model of the form,
	\begin{equation}
    	\label{eqn:sys-dynamic}
    	\x_{t+1} = A \x_t + B \u_t, \hspace{10mm} \x_0\ \; \text{given},
	\end{equation}
	where $A\in\mathbb{R}^{n\times n}$ and $B\in\mathbb{R}^{n\times m}$ are the system parameters and $\x_t\in\mathbb{R}^{n}$ and $\u_t\in\mathbb{R}^{m}$ denote the state and control inputs at 
	time index $t$, respectively.
	We assume that the system matrix $A$ is unknown 
	and (possibly) unstable, and that the input matrix $B$ is known.
	The problem of interest is to design $\u_t$ from online state measurements (and not the system matrix $A$ nor the offline data) such that: \RomanNumeralCaps{1}) the system is regulated, with a norm uniformly bounded during the learning process, e.g., $\x_t$ evolves in a (safe) region with a quantifiable bounded norm, and the corresponding data matrix does not become ill-conditioned, and \RomanNumeralCaps{2}) the system generates informative data for post-processing, for example in the context of data-driven stabilization or system identification.\footnote{We interchangeably use the terms linear independence and \textit{informativity} of data to emphasize that the collected data has useful information content for decision-making; the orthogonal ``hidden" signal $\z_t$ in \Cref{lem:algo-dyn} further exemplifies this perspective.}
	
	{\color{PineGreen}
    Considering regulation by having access to the input matrix is of interest in applications where it is known a priori how various control inputs effect the dynamic states, e.g., how the elevator deflection effects the aircraft pitch dynamics, or  influencing a diffusive network from certain boundary nodes.
    %
    Intuitively, this assumption allows an online regulation mechanism to have a chance of stabilizing an unknown (and possibly) unstable system in real-time from the onset of the learning process.}
        
    The following example motivates our setup and underscores why the data-guided perspective requires introducing new system theoretic notions.
	\begin{example}
		\label{ex:motivateExample}
		For any positive integer $n$, define the system matrix $A \in \mathbb{R}^{n\times n}$ and the input matrix $B \in \mathbb{R}^{n}$ as,
		\begin{gather*}
		    A = \scalemath{0.8}{
		\left(
		\begin{array}{*5{c}}
		\lambda_1   & 1         &    0      &  \hdots   &     0     \\
		0           & \lambda_2 &    1      &           &  \vdots   \\
		0           & 0         & \lambda_3 &  \ddots   &     0     \\
		\vdots      &           &  \ddots   &  \ddots   &     1     \\
		0           &  \hdots   &           &     0     & \lambda_n \\
		\end{array}
		\right)}, \quad 
		B = \scalemath{0.8}{\left(
		\begin{array}{*1{c}}
		0\\
		~\\
		\vdots\\
		~\\
		0\\
		1
		\end{array}\right)}.
		\end{gather*}
		Note that for any choice of $\lambda_i\in\mathbb{R}$, the pair $(A,B)$ is controllable (and therefore stabilizable).
		Furthermore, since the set $\{\lambda_i\}$ coincides with the spectrum of $A$, if any subset of $\{\lambda_i\}$ are equal, then $A$ contains the corresponding Jordan block.
		Moreover, when $\lambda_i\neq \lambda_j$ ($i\neq j$), then $A$ is diagonalizable.
		Let $\x_0 = \e_1$ and observe that under \eqref{eqn:sys-dynamic}, we have $\e_1^\intercal \x_t = \lambda_1^t$ for all $0 \leq t < n$ regardless of the input $\u_t$.
		This implies that, for ``any'' choice of input, for the first $n$ iterations, the first state of the system grows exponentially fast with the rate $\lambda_1$ whenever $|\lambda_1|> 1$.
	\end{example}{}

	\begin{remark}
	\label{rmk:motivateExample}
	\Cref{ex:motivateExample} constructs a family of controllable systems where no controller can regulate their respective first states--at least for the first $n$ iterations.
	That is, a system state will grow exponentially fast regardless of the choice of $\u_t$, even when all eigenvalues of $A$ except $\lambda_1$ are stable (e.g., $|\lambda_i| < 1$ for $i=2,\cdots,n$).
	Note that in this example, the (right) eigenvector associated with the unstable mode of $A$ (i.e., the eigen-pair $(\lambda_1,\e_1)$) is orthogonal to $\mathcal{R}(B) = \mathcal{R}(\e_n)$.
	This is despite the fact that the \ac{PBH} controllability test holds (i.e., for any left eigenvector $\v$ of $A$ we have $\v^\intercal B \neq 0$).
	This example highlights that controllability of a pair $(A,B)$ does not capture ``regularizability'' of an unstable linear system, specially when closed loop regulation has to be achieved in a data-guided manner and from the onset of the learning process.
	%
	Finally, we point out that in the particular case when $\lambda_i = 0$ for $i=2,\dots,n$, the controllability matrix corresponding to $(A,B)$ is anti-diagonal with all anti-diagonal elements equal to identity.
	Therefore, it has singular values/eigenvalues all equal to $\pm 1$.
	This implies that the controllability matrix has condition number equal to identity; as such modes that are difficult to
	regularize are not distinguished by the controllability matrix.
	\end{remark}

	In order to formalize the behavior of the class
	of systems mentioned above, we introduce a system theoretic notion that captures the effectiveness of the input as pertinent to online regulation.
    In order to motivate this notion, note that the dynamics in \cref{eqn:sys-dynamic} can be represented as,
    \begin{align*}
        \x_{t+1} 
        &= ~\Pi_{\mathcal{R}(B)^\perp} A \x_t + ~\Pi_{\mathcal{R}(B)} A \x_t + B \u_t\\
        &= ~\Pi_{\mathcal{R}(B)^\perp} A \x_t + B ( B^\dagger A \x_t + \u_t).
    \end{align*}
    Setting $\u_t = -B^\dagger A \x_t + \Bar{\u}_t$, \cref{eqn:sys-dynamic} can be rewritten as 
    \(\x_{t+1} = \widetilde{A} \x_t + B \Bar{\u}_t \)
    where,
    \begin{equation}
        \label{eqn:Atilde}
        \widetilde{A} := ~\Pi_{\mathcal{R}(B)^\perp} A,
    \end{equation}
    and $\Bar{\u}_t$ is yet to be designed.
    Note that the signals $\widetilde{A} \x_t$ and $B \Bar{\u}_t$ are now orthogonal. 
    This implies that the control signal would not directly affect the part of dynamics that is generated by $~\Pi_{\mathcal{R}(B)^\perp} A$. 
    As such, in order to have even the possibility of achieving ``some" online performance for this system in finite-time, we require that this part of the dynamics be stable. 
	This observation thereby motivates
	the following definition.
	
	\begin{definition}
		\label[definition]{def:Regularizable}
		The pair $(A,B)$ is called \textit{regularizable} if $\widetilde{A}: = ~\Pi_{\mathcal{R}(B)^\perp} A~$ is Schur stable.
	\end{definition}
	
	As we will show subsequently, regularizability of a pair $(A,B)$ is related to the stabilizability of $(A,B)$ as well as detectability of $(A, B^\intercal)$; a combination that is not typically
	encountered in LTI analysis.
	This connection is intuitive, as regulation of a system in finite-time requires the states to be accessible (for control and observation) through the input matrix $B$.
    Regularizability also facilitates a new perspective on LTI systems, providing a basis for the analysis of online algorithms such as the one proposed in \Cref{sec:DGR}.

\section{Regularizable Systems}
\label{sec:regularizability}

    In order to get a better sense of the notion of regularizability, we study the spectral properties of $\widetilde{A}$ in \cref{eqn:Atilde} and its relation with
    system matrices $A$ and $B$. 
    First, the following example highlights why regularizability of a system is
    distinct from its controllability.
	\begin{example}
		\label{ex:regular-vs-stable}
    	Consider the linear system with $A$ defined as in \Cref{ex:motivateExample} such that $|\lambda_1|>1$ and $|\lambda_i|<1$ for $i=2,\dots,n$.
    	Note that the pair $(A,\e_n)$ is controllable (and thus stabilizable); however this pair is not  regularizable.
    	On the other hand, the pair $(A,\e_1)$ is  regularizable but not controllable.
	\end{example}
	
	Recall that a pair $(A,B)$ is stabilizable if and only if $(A^\intercal, B^\intercal)$ is detectable. The detectability of $(A, B^\intercal)$ is seldom of interest in linear system theory \cite{Hespanha2018linear}; however, we show that it is indeed, a necessary condition for $(A,B)$ to be regularizable.
	To this end, we first connect regularizability to the spectral properties of the pair $(A,B)$.
	\begin{lemma}
		\label[lemma]{lem:eigenpair}
		Let $\widetilde{A} = ~\Pi_{\mathcal{R}(B)^\perp} A$.
		Then for each right eigenpair $(\lambda,\v)$ of $A$ the following holds:
		\begin{itemize}
			\item $(\lambda,\v)$ is a right eigenpair of $\widetilde{A}$ whenever $\v \in \mathcal{R}(B)^\perp$ or $\lambda = 0$.
			
			\item $(0,\v)$ is a right eigenpair of $\widetilde{A}$ whenever $\v \in \mathcal{R}(B)$.
		\end{itemize}
	\end{lemma}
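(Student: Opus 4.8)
The plan is to establish all three claims by direct computation, relying on only two elementary facts. First, since $(\lambda,\v)$ is a right eigenpair of $A$, we have $A\v = \lambda\v$ with $\v \neq 0$. Second, the orthogonal projector $\Pi_{\mathcal{R}(B)^\perp}$ acts as the identity on $\mathcal{R}(B)^\perp$ and annihilates every vector in $\mathcal{R}(B)$; that is, $\Pi_{\mathcal{R}(B)^\perp}\v = \v$ when $\v \in \mathcal{R}(B)^\perp$ and $\Pi_{\mathcal{R}(B)^\perp}\v = 0$ when $\v \in \mathcal{R}(B)$. Both follow immediately from the definition $\Pi_{\mathcal{R}(B)^\perp} = \mathrm{I} - \Pi_{\mathcal{R}(B)}$ together with the fact that an orthogonal projector fixes its range and kills its orthogonal complement.

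For the first bullet, I would substitute the eigen-relation into the definition of $\widetilde{A}$ and use linearity of the projection: $\widetilde{A}\v = \Pi_{\mathcal{R}(B)^\perp} A\v = \lambda\, \Pi_{\mathcal{R}(B)^\perp}\v$. If $\v \in \mathcal{R}(B)^\perp$, the projector leaves $\v$ fixed, giving $\widetilde{A}\v = \lambda\v$, so $(\lambda,\v)$ is a right eigenpair of $\widetilde{A}$. If instead $\lambda = 0$, then $A\v = 0$ and hence $\widetilde{A}\v = \Pi_{\mathcal{R}(B)^\perp}(0) = 0 = \lambda\v$, again yielding the claim. For the second bullet the same substitution gives $\widetilde{A}\v = \lambda\, \Pi_{\mathcal{R}(B)^\perp}\v$, but now $\v \in \mathcal{R}(B)$ forces $\Pi_{\mathcal{R}(B)^\perp}\v = 0$, so $\widetilde{A}\v = 0 = 0\cdot\v$, identifying $(0,\v)$ as a right eigenpair of $\widetilde{A}$.

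The computation is routine; the only point requiring care is that each pair I produce is a genuine eigenpair, i.e.\ that the eigenvector is nonzero, which is inherited directly from $\v \neq 0$ since $(\lambda,\v)$ is assumed to be an eigenpair of $A$. I do not anticipate any real obstacle, as the argument reduces to a one-line substitution in each case. The substance of the lemma lies not in its proof but in recording how the spectrum of $A$ is reshaped by the projection: eigenvectors already orthogonal to $\mathcal{R}(B)$ (or carrying a zero eigenvalue) survive unchanged in $\widetilde{A}$, while those lying in $\mathcal{R}(B)$ are collapsed into the kernel of $\widetilde{A}$. This dichotomy is exactly what will drive the subsequent characterization of regularizability and its link to detectability of $(A,B^\top)$.
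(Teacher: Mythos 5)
Your proof is correct and is precisely the direct-from-definitions computation the paper has in mind; the paper itself omits the proof, stating only that it "directly follows from the definitions." Nothing further is needed.
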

	The proof of \Cref{lem:eigenpair} directly 
	follows from the definitions and therefore is omitted.
	Note that the above lemma does not address the scenario where $(\lambda, \v)$ is an eigenpair of $A$, with $\lambda \neq 0$, and $\v=\v_1+\v_2$, 
	with nontrivial $\v_1 \in \mathcal{R}(B)$ and $\v_2 \in \mathcal{R}(B)^\perp$.
	%
	%
	The following example illustrates that $\widetilde{A}$, as a product of matrix $A$ with an orthogonal projection operator, has a spectral radius distinct from $A$.
	\begin{example}
    	Consider the system in Example~\ref{ex:motivateExample}, where the identity off-diagonal elements of $A$ are replaced with $10$, $\lambda_1 = 0.9$ and $\lambda_i = 0$ for all $i=2,\dots, n$, and $B = \mathbbm{1}$.
    	It is straightforward to show that for all $n\geq 2$, $A$ is Schur stable with spectral radius of $0.9$ while $\widetilde{A}$ is not, i.e., $(A,B)$ is not regularizable.
    	In this case, in spite of $A$ being Schur stable, its operator norm is about $10$.
    	Furthermore, the spectral radius of $\widetilde{A}$ would be $4.55$ for $n=2$ and increases to about $10$ as $n$ increases.
    	This results in a pathological behavior despite the fact that the system is originally stable, e.g.,
    	any infinite horizon closed-loop \ac{LQR} controller for this system would demonstrate undesirable behavior ---similar to \Cref{ex:motivateExample}--- when initialized from $\x_0 = \mathbbm{1}$.\footnote{One practical remedy to this problem is to split the dynamics into multiple time-scales using, say, a sampling heuristics \cite{manohar2019optimized}.
    	However, time-scale separation often requires physical insights and non-trivial to identify for  general systems~\cite{naidu2001singular}, let alone for a system with an unknown dynamics.}.
    	%
    	%
    	Finally, it is worth noting that the controllability matrix of this pair is ill-conditioned in contrast to \Cref{ex:motivateExample}.
	\end{example}
	
	The preceding discussion exemplifies that even for a stable system, it is nontrival to assert that state trajectories over a finite time horizon are ``well-regulated.''
	It is no surprise then that, in spite of its severe limitations from a system theoretic perspective, most of the recent works on data-guided control focus on {\em contractible} systems as they streamline composition rules and analysis for consecutive iterations in a learning algorithm~\cite{lale2020regret, agarwal2019online}. However, the succeeding remark shows why regularizability, as introduced in this work, is less restrictive, and thus---by replacing contractility---can mitigate those system theoretic limitations. 
	%
	%
	
	\begin{remark}
	A pair $(A,B)$ is said to be contractible if there exists a controller $K$ such that $\|A - B K\| < 1$.
	Noting that
	\[A- B K = \widetilde{A} + ~\Pi_{\mathcal{R}(B)}(A - B K),\]
	for any vector $\x \in \mathbb{R}^n$, (by orthogonality)
	it follows that,
	\begin{align*}
	    \|\widetilde{A} \x\|^2  
	    &= \|(A-BK) \x\|^2 - \|~\Pi_{\mathcal{R}(B)} (A-BK)\x\|^2\\
	    &\leq \|(A-BK)\|^2 \, \|\x\|^2.
	\end{align*}
	This, in turn, implies that a contractible system is regularizable (as in that case $\|\widetilde{A}\|< 1$).
	In particular, if the original system matrix $A$ is non-expansive (at least on the subspace $A^{-1}\{\mathcal{R}(B)^\perp\}$), then $(A,B)$ is regularizable.
	\end{remark}
	

    The following results further clarifies the relation between regularizable systems and their system theoretic twins.
	
	\begin{proposition}
	   \label[proposition]{prop:dirReg-vs-stab}
    	If $(A,B)$ is  regularizable, then
    	\begin{itemize}
    	    \item $(A,B)$ is stabilizable, and
    	    \item $(A,B^\intercal)$ is detectable.
    	\end{itemize}{}
	\end{proposition}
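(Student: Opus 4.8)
The plan is to prove both claims by contraposition, leveraging the Popov–Belevitch–Hautus (PBH) characterizations of stabilizability and detectability together with \Cref{lem:eigenpair}. Recall that, in discrete time, an eigenvalue is unstable precisely when $|\lambda|\ge 1$, and that $(A,B)$ is stabilizable if and only if no \emph{left} eigenvector $\w$ of $A$ associated with an unstable $\lambda$ satisfies $\w^\top B = 0$, while $(A,B^\top)$ is detectable if and only if no \emph{right} eigenvector $\v$ of $A$ associated with an unstable $\lambda$ lies in $\mathcal{N}(B^\top)=\mathcal{R}(B)^\perp$. In both cases the strategy is to show that such a ``bad'' eigenvector would force $\widetilde{A}$ to inherit the unstable eigenvalue $\lambda$, contradicting the Schur stability of $\widetilde{A}$ guaranteed by regularizability.

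For the detectability claim, I would argue directly from \Cref{lem:eigenpair}. Suppose $(A,B^\top)$ is not detectable; then there is a right eigenpair $(\lambda,\v)$ of $A$ with $\v\in\mathcal{R}(B)^\perp$ and $|\lambda|\ge1$. The first bullet of \Cref{lem:eigenpair} then asserts that $(\lambda,\v)$ is also a right eigenpair of $\widetilde{A}$, so $\rho(\widetilde{A})\ge|\lambda|\ge 1$, contradicting regularizability. Hence $(A,B^\top)$ must be detectable. This step is essentially immediate once the PBH test is phrased in terms of $\mathcal{R}(B)^\perp$.

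The stabilizability claim requires a transpose argument, since stabilizability is a condition on left eigenvectors. Here I would exploit that the orthogonal projection is symmetric, so $\widetilde{A}^\top = A^\top\,\Pi_{\mathcal{R}(B)^\perp}$, and that $\widetilde{A}$ and $\widetilde{A}^\top$ share the same spectrum. Suppose $(A,B)$ is not stabilizable; then there is a left eigenvector $\w$ of $A$, i.e. $A^\top\w=\lambda\w$, with $\w^\top B=0$ and $|\lambda|\ge1$. The condition $\w^\top B=0$ means $\w\in\mathcal{R}(B)^\perp$, whence $\Pi_{\mathcal{R}(B)^\perp}\w=\w$. Consequently $\widetilde{A}^\top\w = A^\top\,\Pi_{\mathcal{R}(B)^\perp}\w = A^\top\w = \lambda\w$, so $\lambda$ is an eigenvalue of $\widetilde{A}^\top$ and hence of $\widetilde{A}$, again contradicting $\rho(\widetilde{A})<1$.

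The main subtlety to keep straight is the left/right eigenvector bookkeeping: detectability of $(A,B^\top)$ is naturally about right eigenvectors annihilated by $B^\top$, so \Cref{lem:eigenpair} applies verbatim, whereas stabilizability is about left eigenvectors annihilating $B$, which forces the passage to $\widetilde{A}^\top$ and the use of the symmetry of the projector. I would also note in passing that the converse fails in general, consistent with \Cref{ex:regular-vs-stable}, so these are genuinely only necessary conditions.
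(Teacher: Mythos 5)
Your proof is correct. The detectability half is essentially identical to the paper's: a contrapositive argument where a right eigenpair $(\lambda,\v)$ of $A$ with $|\lambda|\ge 1$ and $\v\in\mathcal{N}(B^\top)=\mathcal{R}(B)^\perp$ is shown, via \Cref{lem:eigenpair}, to survive as an eigenpair of $\widetilde{A}$, contradicting Schur stability. For the stabilizability half, however, you take a genuinely different route. The paper's argument is direct and constructive: it observes that $\widetilde{A} = A - \Pi_{\mathcal{R}(B)}A = A + BK$ with $K=-B^\dagger A$, so regularizability immediately exhibits an explicit stabilizing static feedback gain. You instead run a PBH contrapositive on left eigenvectors, passing to $\widetilde{A}^\top = A^\top\Pi_{\mathcal{R}(B)^\perp}$ and using the symmetry of the projector to show that an unstabilizable unstable mode forces $\rho(\widetilde{A})\ge 1$; the bookkeeping is correct ($\w^\top B=0$ gives $\w\in\mathcal{R}(B)^\perp$, hence $\Pi_{\mathcal{R}(B)^\perp}\w=\w$ and $\widetilde{A}^\top\w=\lambda\w$). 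What the paper's version buys is the concrete controller $K=-B^\dagger A$, which is reused throughout the rest of the paper (it is exactly the feedback that motivates the definition of $\widetilde{A}$ in \Cref{sec:probSetup}); what your version buys is a uniform spectral treatment of both bullets through the same mechanism, making the left/right eigenvector duality between stabilizability of $(A,B)$ and detectability of $(A,B^\top)$ explicit.
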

	\begin{proof}
	    For the first claim, note that $\widetilde{A} = A - ~\Pi_{\mathcal{R}(B)} A = A + B K$, where $K:= -B^\dagger A$. Thus if $(A,B)$ is  regularizable then $K$ is a stabilizing closed loop controller.
	    For the second claim, we establish a contrapositive.
    	Suppose that $(A,B^\intercal)$ is not detectable. Hence there exists a right eigenpair $(\lambda, \v)$ of $A$, where $|\lambda|\geq 1$ and $\v\in\mathcal{N}(B^\intercal)=\mathcal{R}(B)^{\perp}$.
    	Then, \Cref{lem:eigenpair} implies that $(\lambda, \v)$ must be a right eigenpair of $\widetilde{A}$.
    	Since $|\lambda|\geq 1$, $\widetilde{A}$ is not Schur stable and therefore $(A,B)$ is not  regularizable.
	\end{proof}
    
    Note that the consequents of \Cref{prop:dirReg-vs-stab} are equivalent whenever $A$ is symmetric, as detectability of $(A,B^\intercal)$ is equivalent to stabilizability of $(A^\intercal,B)$.
    Also, note that  \Cref{prop:dirReg-vs-stab} provides a necessary condition for regularizability, whereas the following counter-example underscores why the stabilizability of $(A,B)$, even when combined with detectability of $(A,B^\intercal)$, is not sufficient.
    
    \begin{example}
        Let the system matrices $A,B$ be defined as in Example \ref{ex:motivateExample} and consider the pair $(A_1,B_1) := (A+A^\intercal, B)$.
        By the structure of $A_1$, note that $(A_1,B_1)$ is controllable.
        Since $A_1$ is symmetric, $(A_1,B_1^\intercal)$ is also observable. By direct computation we observe that,
        \begin{gather*}
             \widetilde{A} = ~\Pi_{\mathcal{R}(B)^\perp} A = \scalemath{0.8}{
    		\left(
    		\begin{array}{*5{c}}
    		2 \lambda_1    &     1    &    0      &  \hdots   &  0        \\
    		1           & 2 \lambda_2 &    1      &  \ddots          &  \vdots    \\
    		0           & \ddots         & \ddots &   \ddots   &  0        \\
    		 \vdots     &      \ddots     & 1   &  2 \lambda_{n-1}   &  1        \\
    		0           &  \hdots   &    0       &   0       & 0 \\
    		\end{array}
    		\right)}.
        \end{gather*}
    		Now if any of $\lambda_i$'s, for $i=1, \dots, n-1$, is say, larger than $1/2$, then $\widetilde{A}$ would be unstable, implying that $(A_1,B_1)$ is not  regularizable.
    \end{example}
    
    In order to complete our understanding of regularizability, we provide several characterizations using \ac{LMI}.
    
    \begin{proposition}\label[proposition]{prop:LMI-charac}
    Consider a pair $(A,B)$, and denote $~\Pi_\perp : = ~\Pi_{\mathcal{R}(B)^\perp}$.
    Then the following are equivalent:
    \begin{enumerate}[(i)]
        \item The pair $(A,B)$ is  regularizable.
        \item $\exists P \succ 0$ such that $\rho(A^\intercal ~\Pi_\perp P ~\Pi_\perp A P^{-1}) < 1$.
        \item $\exists P \succ 0$ such that $\|P^{1/2} ~\Pi_\perp A P^{-1/2}\| < 1$.
        \item $\exists P \succ 0$ such that 
        \(A^\intercal ~\Pi_\perp P ~\Pi_\perp A - P \prec 0.\)
        \item $\exists W \succ 0$ such that  
        \(\scalemath{0.9}{\left(\begin{array}{cc}
	     W& ~\Pi_\perp A W \\
	     W A^\intercal ~\Pi_\perp & W
	\end{array}{}\right)} \succ 0.\)
	    \item $\exists P\succ 0$ and $G \in \mathbb{R}^{n\times n}$ such that,
	    \[\scalemath{0.9}{\left(\begin{array}{cc}
	     P&  A^\intercal ~\Pi_\perp G^\intercal \\
	     G ~\Pi_\perp A & G + G^\intercal -P
	\end{array}{}\right)} \succ 0.\]
	    \item $\exists P \succ 0$, and $G, H \in \mathbb{R}^{n\times n}$ such that, \[\scalemath{0.9}{\left(\begin{array}{cc}
	     G A + A^\intercal G^\intercal - P &  A^\intercal H^\intercal - G \\
	     H A - G^\intercal & ~\Pi_\perp P ~\Pi_\perp - H -H^\intercal
	\end{array}{}\right)} \prec 0.\]
    \end{enumerate}
    \end{proposition}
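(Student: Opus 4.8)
The plan is to use condition (iv) as a hub and show every other item is equivalent to it. The single observation driving the whole argument is that $\Pi_\perp$ is an orthogonal projection, so $\Pi_\perp = \Pi_\perp^\top = \Pi_\perp^2$, and consequently $A^\top \Pi_\perp P \Pi_\perp A = \widetilde{A}^\top P \widetilde{A}$ with $\widetilde{A} := \Pi_\perp A$ in every occurrence. Thus each of (ii)--(vii) is a disguised statement about $\widetilde{A}$, and (iv) reads exactly $\widetilde{A}^\top P \widetilde{A} - P \prec 0$. By the discrete-time Lyapunov stability theorem, such a $P \succ 0$ exists iff $\rho(\widetilde{A}) < 1$, i.e. iff $(A,B)$ is regularizable; this settles (i) $\Leftrightarrow$ (iv) and reduces the proposition to linking (iv) with the remaining LMIs.

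First I would dispatch the congruence/duality items. A congruence of (iv) by $P^{-1/2}$ turns it into $M^\top M \prec \mathrm{I}$ with $M := P^{1/2}\Pi_\perp A P^{-1/2} = P^{1/2}\widetilde{A} P^{-1/2}$, which is $\|M\| < 1$, i.e. (iii). Since $A^\top \Pi_\perp P \Pi_\perp A P^{-1} = \widetilde{A}^\top P \widetilde{A} P^{-1}$ is similar to $M^\top M$, its spectral radius equals $\|M\|^2$, so (ii) and (iii) coincide for the same $P$ upon squaring. For (v), a Schur complement on its block matrix gives $W - \widetilde{A} W \widetilde{A}^\top \succ 0$, the dual Lyapunov inequality, which admits a solution $W \succ 0$ iff $\rho(\widetilde{A}^\top) = \rho(\widetilde{A}) < 1$; hence (v) is just the dual form of (iv).

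The two genuinely nontrivial items are the slack-variable LMIs (vi) and (vii). For (vi) I would invoke the de~Oliveira--Bernussou--Geromel dilation: the forward direction is immediate on taking $G = P$ and eliminating the off-diagonal block by a Schur complement, while for the converse, positivity forces the $(2,2)$ block $G + G^\top - P \succ 0$ (hence $G$ invertible), and the trivial inequality $(G-P)P^{-1}(G-P)^\top \succeq 0$ gives $G + G^\top - P \preceq G P^{-1} G^\top$; replacing the $(2,2)$ block by the larger $G P^{-1} G^\top$ keeps the matrix positive definite, and a Schur complement then eliminates $G$ and returns $P - \widetilde{A}^\top P \widetilde{A} \succ 0$, i.e. (iv). For (vii) I would recognize the block matrix as the Finsler/projection-lemma multiplier form of (iv): writing $w = A x$, condition (iv) states that $w^\top \Pi_\perp P \Pi_\perp w - x^\top P x < 0$ for all $(x,w) \neq 0$ in the nullspace of $\begin{bmatrix} A & -\mathrm{I}\end{bmatrix}$, and Finsler's lemma makes this equivalent to the existence of a multiplier $\begin{bmatrix} G & H\end{bmatrix}^\top$ rendering $\mathrm{diag}(-P,\ \Pi_\perp P \Pi_\perp) + \begin{bmatrix} G & H\end{bmatrix}^\top \begin{bmatrix} A & -\mathrm{I}\end{bmatrix} + \begin{bmatrix} A & -\mathrm{I}\end{bmatrix}^\top \begin{bmatrix} G & H\end{bmatrix}$ negative definite, which, after transposing the free blocks $G, H$, is exactly the matrix in (vii).

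The main obstacle I anticipate is the bookkeeping in (vi) and (vii): matching transposes and signs so that the dilation and the Finsler multiplier form reproduce the stated blocks verbatim, and---more importantly---checking that each manipulation (the congruence, the Schur complements, the $G P^{-1} G^\top$ majorization, and the Finsler equivalence) is reversible, so that what is produced is a genuine two-way equivalence rather than a single implication.
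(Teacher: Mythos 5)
Your proposal is correct and follows essentially the same route as the paper: both reduce every item to the Schur stability of $\widetilde{A} = \Pi_\perp A$ via the identity $A^\top \Pi_\perp P \Pi_\perp A = \widetilde{A}^\top P \widetilde{A}$ and then apply the standard Stein/Lyapunov and dilated-LMI equivalences. The only difference is that where the paper delegates to citations (Theorem 7.7.7 of Horn--Johnson for (i)--(iv) and the two de Oliveira et al.\ theorems for (vi)--(vii)), you supply the underlying arguments yourself---the congruences, the $GP^{-1}G^\top$ majorization, and the Finsler multiplier form---and these all check out.
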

    \begin{proof}
        Noting that regularizability of $(A,B)$ is equivalent to Schur stability of $~\Pi_\perp A$, the first four equivalences are direct consequences of Theorem 7.7.7 in \cite{horn2012matrix}.
        By using Schur complements and constructing a congruence induced by $\text{diag}(I,P^{-1})$, {(iv)} becomes equivalent to {(v)}.
        The last two equivalences are due to Theorem 1 in \cite{Oliveiraa1999AND} and Theorem 1 in \cite{Oliveira1999LMICO}, respectively.
    \end{proof}
    
    We conclude this section by providing a sufficient condition for
    guaranteeing when a polytopic uncertain LTI system is regularizable.
    
    \begin{proposition}
        \label[proposition]{prop:LMI-robustness}
        Consider $A_i \in \mathbb{R}^{n\times n}$ for $i=1,\dots,N$ and suppose there exist matrices $P_i \succ 0$ and $G,H \in \mathbb{R}^{n\times n}$ satisfying,
    	\[\scalemath{0.9}{\left(\begin{array}{cc}
    	     G A_i + A_i^\intercal G^\intercal - P_i &  A_i^\intercal H^\intercal - G \\
    	     H A_i - G^\intercal & ~\Pi_{S} P_i ~\Pi_{S} -H-H^\intercal
    	\end{array}{}\right)} \prec 0,\]
    	for some linear subspace $S \subseteq \mathbb{R}^n$.
    	Then a pair $(A, B)$ is regularizable whenever $A \in \mathrm{convhull}\{A_i\}_1^N$ and
    	\[\scalemath{0.9}{\left(\begin{array}{cc}
    	     P_i &  P_i ~\Pi_{\mathcal{R}(B)^\perp} \\
    	     ~\Pi_{\mathcal{R}(B)^\perp} P_i  & ~\Pi_{S} P_i ~\Pi_{S}
    	\end{array}{}\right)} \succeq 0, \quad \forall i=1,\dots, N.\]
    \end{proposition}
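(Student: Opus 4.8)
The plan is to reduce the claim to characterization (vii) of \Cref{prop:LMI-charac} by exploiting that the dilated LMI there depends \emph{affinely} on the data $(A,P)$ once the slack variables $G,H$ are fixed. Throughout, write $\Pi_\perp := \Pi_{\mathcal{R}(B)^\perp}$, let $A_\alpha = \sum_{i=1}^N \alpha_i A_i$ with $\alpha_i \ge 0$ and $\sum_i \alpha_i = 1$, and set $P_\alpha := \sum_{i=1}^N \alpha_i P_i$. Since each $P_i \succ 0$ and the weights are nonnegative with positive sum, $P_\alpha \succ 0$. The common slacks $G,H$ are precisely what make the convex-combination argument go through, so they are held fixed while $A_i$ and $P_i$ vary.

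First I would run the convex-combination step on the first hypothesis. The block matrix
\[ M_i := \begin{pmatrix} G A_i + A_i^\top G^\top - P_i &  A_i^\top H^\top - G \\ H A_i - G^\top & \Pi_{S} P_i \Pi_{S} - H - H^\top \end{pmatrix} \]
is affine in $(A_i, P_i)$ for fixed $G,H$, and $M_i \prec 0$ by assumption. Hence $\sum_i \alpha_i M_i \prec 0$ (a nonnegative combination of strictly negative-definite matrices with positive total weight is negative definite), and by affineness $\sum_i \alpha_i M_i$ equals the same block matrix evaluated at $(A_\alpha, P_\alpha)$; that is,
\[ \begin{pmatrix} G A_\alpha + A_\alpha^\top G^\top - P_\alpha &  A_\alpha^\top H^\top - G \\ H A_\alpha - G^\top & \Pi_{S} P_\alpha \Pi_{S} - H - H^\top \end{pmatrix} \prec 0. \]
This is almost characterization (vii), except that the lower-right block carries $\Pi_S P_\alpha \Pi_S$ rather than the required $\Pi_\perp P_\alpha \Pi_\perp$.

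The role of the second hypothesis is exactly to bridge this gap. Since $P_i \succ 0$, a Schur complement applied to the $\succeq 0$ block matrix of the second condition yields $\Pi_S P_i \Pi_S - \Pi_\perp P_i P_i^{-1} P_i \Pi_\perp = \Pi_S P_i \Pi_S - \Pi_\perp P_i \Pi_\perp \succeq 0$, using that $\Pi_\perp$ is symmetric. Taking the same convex combination gives $\Pi_S P_\alpha \Pi_S \succeq \Pi_\perp P_\alpha \Pi_\perp$, so the matrix $\mathrm{diag}(0,\, \Pi_\perp P_\alpha \Pi_\perp - \Pi_S P_\alpha \Pi_S)$ is negative semidefinite. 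Adding this PSD-ordered perturbation to the strict LMI from the previous step preserves strictness and replaces $\Pi_S P_\alpha \Pi_S$ by $\Pi_\perp P_\alpha \Pi_\perp$ in the lower-right block. The resulting inequality is precisely the LMI of characterization (vii) in \Cref{prop:LMI-charac} with $P = P_\alpha \succ 0$ and the given $G,H$; therefore $(A_\alpha, B)$ is regularizable.

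The argument is essentially bookkeeping once the right decomposition is in hand, and I expect the only genuinely delicate point to be strictness: one must confirm that the convex combination of the first LMIs is \emph{strictly} negative definite (not merely semidefinite) and that adding the negative-semidefinite correction extracted from the second condition keeps the sum strict. Both follow because $\sum_i \alpha_i = 1$ forces positive weight on at least one strictly definite term. A secondary point worth stating explicitly is that $G$ and $H$ must be \emph{common} across the vertices $A_i$ (only the Lyapunov certificates $P_i$ are allowed to vary), since it is exactly this decoupling in the dilated LMI that permits the affine convex-combination argument without demanding a single common Lyapunov matrix.
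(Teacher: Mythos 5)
Your proof is correct and follows essentially the same route as the paper's: convex-combine the dilated LMIs using the affineness in $(A_i,P_i)$ for the common slacks $G,H$, Schur-complement the second hypothesis to obtain $\Pi_{S} P_\alpha \Pi_{S} \succeq \Pi_{\mathcal{R}(B)^\perp} P_\alpha \Pi_{\mathcal{R}(B)^\perp}$, substitute into the lower-right block, and invoke part (vii) of the characterization. Your explicit remarks on preserving strictness and on the necessity of common $G,H$ are points the paper leaves implicit, but the argument is the same.
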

    
    \begin{proof} 
    Since $A \in \mathrm{convhull}\{A_i\}_1^N$, there exists scalars $ \alpha_i \in [0,1]$ with $\sum_1^N \alpha_i = 1$ such that $A = \sum_1^N \alpha_i A_i$.
    By defining  $P = \sum_1^N \alpha_i P_i$ and taking the convex combinations  of the negative definite matrices in the hypothesis with weights $\alpha_i$ we obtain,
    \begin{gather}
    \scalemath{0.9}{
    \left(\begin{array}{cc}
	     G A + A^\intercal G^\intercal - P &  A^\intercal H^\intercal - G \\
	     H A - G^\intercal & ~\Pi_{S} P ~\Pi_{S} - H -H^\intercal
    	\end{array}{}\right)} \prec 0.     
	    \label{eqn:LMI-in-proof}
    \end{gather}
	Now by taking the Schur complement of the \ac{LMI} in the hypothesis involving the input matrix $B$ it follows that,
	\[~\Pi_{S} P_i ~\Pi_{S} \succeq ~\Pi_{\mathcal{R}(B)^\perp} P_i ~\Pi_{\mathcal{R}(B)^\perp}, \quad \forall i=1,\dots, N.\]
Convex combinations of these \ac{LMI}s with the same coefficients lead to,
	\(~\Pi_{S} P ~\Pi_{S} \succeq ~\Pi_{\mathcal{R}(B)^\perp} P ~\Pi_{\mathcal{R}(B)^\perp}.\)
	This, together with the \ac{LMI} in \cref{eqn:LMI-in-proof} imply the \ac{LMI} in \Cref{prop:LMI-charac}.\textit{(vii)}. 
	As $P \succ 0$, we conclude that the pair $(A,B)$ is regularizable.
    \end{proof}

\begin{remark}
\color{Blue}
    Note that the proof above also shows that the last \ac{LMI} in the statement of \Cref{prop:LMI-robustness} is equivalent to 
    \begin{equation}\label{eqn:LMI-monotone}
        ~\Pi_{S} P_i ~\Pi_{S} \succeq ~\Pi_{\mathcal{R}(B)^\perp} P_i ~\Pi_{\mathcal{R}(B)^\perp}, \quad \forall i=1,\dots, N;
    \end{equation}
    which is certainly satisfied when $S = \mathcal{R}(B)^\perp$.
    Thus, a direct consequence of \Cref{prop:LMI-robustness}--together with the characterization in \Cref{prop:LMI-charac}.(vii)--is as follows: if there exists an input matrix $B$ such that $(A_i,B)$ is regularizable for each $i=1,\dots, N$, then we can conclude that $(A, B)$ is regularizable for any (unknown) matrix $A \in \mathrm{convhull}\{A_i\}_1^N$. 
    This observation does not follow directly from the definition as spectral radius is not subadditive.
    Moreover, \Cref{prop:LMI-robustness} provides the flexibility of working with the linear subspace $S$ independently of $\mathcal{R}(B)$, which proves
    to be useful for design purposes, e.g., devising an input matrix in order to make a polytopic uncertain system regularizable.
\end{remark}
{\color{Blue}
Finally, \Cref{prop:LMI-robustness}--in view of \cref{eqn:LMI-monotone}--implies that 
regularizability is a monotonic system theoretic property
with respect to the input, in the sense that enlarging $\mathcal{R}(B)$ would
not destroy its regularizability. In fact, ``enlarging'' $\mathcal{R}(B)$ for a system would make it ``more'' regularizable (as $\rho(\widetilde{A})$ will be smaller).}

\section{\acf{DGR} Algorithm}
\label{sec:DGR}

The primary focus of this section is devising an online, data-driven feedback controller to regulate the system's state trajectories, quantified in terms
of a signal norm.
In this direction, we propose an iterative procedure for updating the feedback gain (policy);
the form of the controller can be motivated by considering, at each iteration $t$,
the following optimization problem with a ``one-step quadratic cost'',\footnote{The setup resembles dead-beat control design, with the caveat that the synthesis is data-guided.}

\begin{equation}
	\begin{aligned}	\label{eqn:optimization}
    	&\textstyle \min_{\u_t} \hspace{3mm} \|\x_{t+1}\|^2 + \alpha \|\u_t\|^2  \\
    	&\hspace{2mm} \text{s.t.} \hspace{5mm} \x_{t+1} = A \x_t + B \u_t,
	\end{aligned}
\end{equation}
where $\x_t$ is measured over time but the system matrix $A$ is unknown, and $\alpha \geq 0$ is a regularization factor for the controller design.\footnote{\color{Blue} We note that considering a more elaborate form of cost (e.g., finite/infinite horizon \ac{LQR} cost) for this optimization problem is certainly relevant. However, in this specific problem setup, i.e., no prior knowledge on the matrix $A$ and absence of any prior input-state data, we have observed no significant numerical advantage in considering a more elaborate cost--particularly for upper bounding the state trajectories from the onset of the learning process.}
%
%
In the case of known $A$, it is straightforward to characterize the set of minimizers of the above optimization problem through the first order optimality condition,
\begin{equation*}
	(\alpha I_m + B^\intercal B) \u_t + B^\intercal A \x_t = 0;
\end{equation*}
as such, the corresponding input belongs to a linear subspace in $\mathbb{R}^m$ parameterized by the system matrices and data.
{\color{violet} The following proposition illustrates why regularizability as presented in \S\ref{sec:regularizability} is pertinent to online regulation of LTI systems.
\begin{proposition}
    For every $\alpha \in [0, \varepsilon)$, with some small enough $\varepsilon>0$, the minimum norm solution of the iterative optimization \cref{eqn:optimization} stabilizes the system \cref{eqn:sys-dynamic} if and only if the pair $(A,B)$ is regularizable.
\end{proposition}
\begin{proof}
Given a fixed $\alpha \geq 0$, the minimum norm solution to \cref{eqn:optimization} at iteration $t$ is $\u_t^* = - G_\alpha A \x_t$, where $G_\alpha \coloneqq (\alpha I + B^\intercal B)^\dagger B^\intercal$. Therefore, this iterative solution stabilizes the system in  \cref{eqn:sys-dynamic} if and only if $A - B G_\alpha A$ is Schur stable. Using properties of the pseudoinverse, $A - B G_0 A = (I - B B^\dagger (B B^\dagger)^\intercal) A = (I - B B^\dagger)A = \widetilde{A}$, where $\widetilde{A}$ is as defined in \Cref{def:Regularizable}. The proof now follows by continuity of the spectral radius with respect to $\alpha$.
\end{proof}

Note that $G_\alpha \to 0$ as $\alpha \to \infty$, implying that $\u_t \to 0$ for all $t$. As such, in general, the solution to \cref{eqn:optimization} is stabilizing when $\alpha$ is small enough.}
The formulation of the optimization problem \cref{eqn:optimization} requires the knowledge of system parameters; nonetheless, it forms the basis for the proposed algorithm when $A$ is unknown and potentially unstable. 
%
The corresponding synthesis procedure is detailed in \Cref{alg:Controller}.
Specifically, for any $\alpha \geq 0$, at iteration $t$, \ac{DGR} sets
\begin{equation}
    \label{eqn:opt-input}
    \u_t^*= -K_t^* \x_t, \hspace{7mm} K_t^* \coloneqq G_\alpha \mathcal{Y}_{t} \mathcal{X}_{t-1}^{\dagger},
\end{equation}
{\color{violet}
where $G_\alpha \coloneqq (\alpha I + B^\intercal B)^\dagger B^\intercal$ and $\mathcal{X}_{t-1}, \mathcal{Y}_{t} \in \mathbb{R}^{n \times t}$ are 
the measured data matrices,
\begin{align*}
\mathcal{X}_{t-1} &\coloneqq \begin{pmatrix} \x_0 &\dots & \x_{t-1}\end{pmatrix},\\
\mathcal{Y}_{t} &\coloneqq \begin{pmatrix} \x_1 -B\u_0 &\dots & \x_t -B\u_{t-1}\end{pmatrix}.
\end{align*}}
%
%
\begin{algorithm}[!t]
	\caption{\acf{DGR}}
	\begin{algorithmic}[1]
		\State \textbf{Initialization} \hspace{1mm} (at $t=0$)
		\State \hspace{5mm} Measure $\x_0$; set $K_0 = \mathbf{0}$, {$G_\alpha = (\alpha I + B^\intercal B)^\dagger B^\intercal$}
		\State \hspace{5mm} Set $\mathcal{X}_0 = \left(\begin{array}{c} \x_0 \end{array}\right)$ and $\mathcal{Y}_0 = \left(\begin{array}{c} \ \end{array}\right)$
		\State \textbf{While stopping criterion not met}\footnotemark
		\State \hspace{5mm} Compute \hspace{1mm} $\u_t = -K_t \x_t$
		\State \hspace{5mm} Run system \eqref{eqn:sys-dynamic} and measure $\x_{t+1}$
		\State \hspace{5mm} Update \hspace{1mm} $\mathcal{Y}_{t+1} =
	    \left(\begin{array}{cc}
		\mathcal{Y}_t & \x_{t+1} - B \u_t
		\end{array}\right)$
		\State \hspace{18mm} $K_{t+1} = G_\alpha \mathcal{Y}_{t+1} \mathcal{X}_t^{\dagger}$
		\State \hspace{18.5mm} $\mathcal{X}_{t+1} = \left(\begin{array}{cc}
			\mathcal{X}_t & \x_{t+1}
		\end{array}\right)$
		\State \hspace{5mm} $t = t+1$
	\end{algorithmic}
	\label{alg:Controller}
\end{algorithm}
\footnotetext{The stopping criterion can be application specific.
For instance, for \ac{sysID} generating $n$ linearly independent data is sufficient, while mere stabilization may require less; see~\cite{van2019data}.}
Intuitively, collecting more data results in capturing the
essential (e.g., unstable) modes in the dynamics.
{\color{PineGreen}As such, it is important to note that \ac{DGR} is particularly relevant for
online regulation of unstable systems, when the controller does not have
access to enough state data for the purpose of identification or stabilization.}
The proposed technique is close in spirit to modal analysis where regression-based methods are leveraged to extract and control the dominant modes of the system \cite{simon1968theory, proctor2016dynamic}.
The emphasis of \ac{DGR}, however, is on the significance of each temporal action
for safety-critical applications; in these scenarios, it might be rather
unrealistic to generate sufficient data from the inherent unstable modes.

From an implementation perspective, the \ac{DGR} algorithm can become 
computationally expensive for large-scale systems.
This is primary due to steps 7-9 of \Cref{alg:Controller}, where the entire temporal data is stored in $\mathcal{X}_{t+1}$ and $\mathcal{Y}_{t+1}$;
the pseudoinverse operation in the meantime has 
complexity $\mathcal{O}(n^2t)$ required at iteration $t$.
{\color{PineGreen} While for the purpose of analysis, we present the basic form of \ac{DGR} (as in \Cref{alg:Controller}), in \Cref{sec:complexity} we will propose \ac{F-DGR} to circumvent the complexity of storing and computing on large datasets using a rank-one update on the data matrices, resulting in a recursive evaluation of $\mathcal{Y}_{t+1} \mathcal{X}_{t}^{\dagger}$ (see \Cref{alg:Controller_fast}). 
}
%
\subsection{Analysis of DGR}
\label{sec:results}
    
In this subsection, we provide the performance analysis for
\ac{DGR} in the general setting;
as pointed out previously,
\ac{DGR} is particularly relevant when $t\leq n$,
where $n$ denotes the dimension of the underlying system.
We examine the effects of \ac{DGR} on the system's state trajectory and deduce effective guarantees in terms of norm upper-bound and informativity of generated data.
%
In addition, we will see how a particular structure of the system
matrix $A$, such as $\mathcal{R}(A)\subset\mathcal{R}(B)$ or its diagonalizability, facilitates
further insights into the operation of \ac{DGR} as presented in the next subsection.
	%



%
    First, we show why regularizability is essential for the analysis of the trajectory generated under \Cref{alg:Controller}; in hindsight, justifying its introduction in the first place.

{
\begin{lemma}
	\label[lemma]{lem:algo-dyn}
	For all $t>0$, the trajectory generated by
	\Cref{alg:Controller} satisfies,
	\begin{equation*}
	\x_{t+1} = ~\Pi_{\mathcal{R}(B)^\perp} A \x_t + ~\Pi_{\mathcal{R}(B)} \; A \z_t + \Delta_\alpha \w_t ,
	\end{equation*}
	where $\Delta_\alpha \coloneqq B(B^\dagger - G_\alpha) A$, $\z_0 \coloneqq \x_0$, $\w_0 = 0$, and $\z_t \coloneqq ~\Pi_{\mathcal{R}(\mathcal{X}_{t-1})^\perp} \x_t~$ and $\w_t \coloneqq ~\Pi_{\mathcal{R}(\mathcal{X}_{t-1})} \x_t~$ for $t>0$. Furthermore, $\{\z_0,\z_1,\cdots,\z_t\}$ is a set of ``orthogonal'' vectors {(possibly including the zero vector)}, and $\Delta_0 = 0$.
\end{lemma}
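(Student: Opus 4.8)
The plan is to substitute the DGR control law directly into the dynamics \eqref{eqn:sys-dynamic} and rearrange using orthogonal projections. The first step is to notice that step 7 of \Cref{alg:Controller} appends $\x_{t+1} - B\u_t = A\x_t$ to $\mathcal{Y}_t$, so that $\mathcal{Y}_t = A\mathcal{X}_{t-1}$ holds for every $t$. Consequently the applied input is
\[
\u_t = -K_t^* \x_t = -G_\alpha \mathcal{Y}_t \mathcal{X}_{t-1}^{\dagger}\x_t = -G_\alpha A\,\mathcal{X}_{t-1}\mathcal{X}_{t-1}^{\dagger}\x_t.
\]
Recognizing $\mathcal{X}_{t-1}\mathcal{X}_{t-1}^{\dagger} = \Pi_{\mathcal{R}(\mathcal{X}_{t-1})}$ as the orthogonal projector onto the column space of the data matrix, this reduces to $\u_t = -G_\alpha A \w_t$ with $\w_t = \Pi_{\mathcal{R}(\mathcal{X}_{t-1})}\x_t$.

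Next I would plug this into $\x_{t+1} = A\x_t + B\u_t$ and split $A$ via the identity $I = \Pi_{\mathcal{R}(B)^\perp} + \Pi_{\mathcal{R}(B)}$ (the same decomposition used to define $\widetilde{A}$ in \eqref{eqn:Atilde}), together with $\x_t = \z_t + \w_t$. The term $\Pi_{\mathcal{R}(B)^\perp}A\x_t$ is kept intact, while $\Pi_{\mathcal{R}(B)}A\x_t = \Pi_{\mathcal{R}(B)}A\z_t + \Pi_{\mathcal{R}(B)}A\w_t$. Collecting the two contributions acting on $\w_t$ and using $\Pi_{\mathcal{R}(B)} = BB^{\dagger}$ yields
\[
\Pi_{\mathcal{R}(B)}A\w_t - BG_\alpha A\w_t = B(B^{\dagger} - G_\alpha)A\,\w_t = \Delta_\alpha \w_t,
\]
which gives exactly the claimed decomposition. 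The identity $\Delta_0 = 0$ is then immediate from $G_0 = (B^\top B)^{\dagger}B^\top = B^{\dagger}$, whence $B(B^{\dagger} - G_0)A = 0$.

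It remains to establish orthogonality of $\{\z_0,\dots,\z_t\}$. The key observation is that $\z_s = \x_s - \Pi_{\mathcal{R}(\mathcal{X}_{s-1})}\x_s$ lies in $\mathcal{R}(\mathcal{X}_s)\subseteq \mathcal{R}(\mathcal{X}_{t-1})$ for every $s \le t-1$, since both $\x_s$ and $\Pi_{\mathcal{R}(\mathcal{X}_{s-1})}\x_s$ are combinations of columns of $\mathcal{X}_{t-1}$; meanwhile $\z_t = \Pi_{\mathcal{R}(\mathcal{X}_{t-1})^\perp}\x_t$ is by construction orthogonal to $\mathcal{R}(\mathcal{X}_{t-1})$. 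Hence $\z_t \perp \z_s$ for all $s < t$, and the claim follows by induction on $t$ starting from $\z_0 = \x_0$, $\w_0 = 0$. This is the sequential Gram--Schmidt property of the residuals, and I expect it to be the least mechanical part; the substitution steps are routine once $\mathcal{Y}_t = A\mathcal{X}_{t-1}$ and $\Pi_{\mathcal{R}(\mathcal{X}_{t-1})} = \mathcal{X}_{t-1}\mathcal{X}_{t-1}^{\dagger}$ are in hand. The only real care needed is the bookkeeping of the nested subspaces $\mathcal{R}(\mathcal{X}_{s-1})\subseteq \mathcal{R}(\mathcal{X}_{t-1})$ and consistency of the indexing of the data matrices with the algorithm's updates.
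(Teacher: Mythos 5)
Your proposal is correct and follows essentially the same route as the paper's proof: substituting the DGR control law with $\mathcal{Y}_t = A\mathcal{X}_{t-1}$ and $\mathcal{X}_{t-1}\mathcal{X}_{t-1}^{\dagger} = \Pi_{\mathcal{R}(\mathcal{X}_{t-1})}$, splitting $A\x_t$ via $\Pi_{\mathcal{R}(B)^\perp} + \Pi_{\mathcal{R}(B)}$ and $\x_t = \z_t + \w_t$, and collecting the $\w_t$ terms into $\Delta_\alpha$. The orthogonality argument via the nesting $\z_s \in \mathcal{R}(\mathcal{X}_s) \subseteq \mathcal{R}(\mathcal{X}_{t-1})$ for $s \le t-1$ is likewise the same as in the paper.
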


	\begin{proof}
		Let $B = U_r \Sigma_r V_r^\intercal$ be the ``thin" SVD of $B$, where $r = \mathrm{rank}(B)$.
		Since $B B^\dagger= U_r U_r^\intercal = \Pi_{\mathcal{R}(U_r)}$, 
		\begin{align*}
    		\x_{t+1} &= A \x_t + B \u_t\\
    		&=\big[A - B G_\alpha \mathcal{Y}_t \mathcal{X}_{t-1}^\dagger \big] \x_t\\
    		&= \left[A - ~B G_\alpha \; A \; ~\Pi_{\mathcal{R}(\mathcal{X}_{t-1})} \right] \x_t\\
    		&= ~\Pi_{\mathcal{R}(U_r)^\perp} A \x_t + B B^\dagger A (\z_t+\w_t) - B G_\alpha \; A \; \w_t\\
    		&= ~\Pi_{\mathcal{R}(U_r)^\perp} A \x_t + ~\Pi_{\mathcal{R}(U_r)} \; A \z_t + B(B^\dagger - G_\alpha) A \w_t.
		\end{align*}
		%
		Thus, the first claim follows as $\mathcal{R}(U_r) = \mathcal{R}(B)$.
		For the second claim, note that the definition of $\z_t$ implies that $\z_t \perp \mathcal{R}(\mathcal{X}_{t-1})$ for all $t>0$, and $\z_k \in \mathcal{R}(\mathcal{X}_{t-1})$ for all $k=1,\ldots,t-1$ and all $t>0$.
		Hence $\{\z_0,\z_1,\cdots,\z_t\}$ consists of orthogonal vectors. Finally, $\Delta_0 = 0$ follows by the definition of $G_\alpha$ and properties of pseudoinverse.
	\end{proof}
}


	
	{The preceding lemma implies that in the case of $\alpha = 0$}, the time series generated by \Cref{alg:Controller} can be considered as the trajectory of a linear system with parameters $(\widetilde{A},\widetilde{B})$ and ``input'' $\z_t$ where,  
	\begin{equation}
    	\label{eqn:DGR-tilde-sys}
    	\widetilde{A} \coloneqq ~\Pi_{\mathcal{R}(B)^\perp} A, \qquad \widetilde{B} \coloneqq ~\Pi_{\mathcal{R}(B)} A,
	\end{equation}
	and $\z_t = \widetilde{K}_t \x_t$, with the {time-varying}, state-dependent feedback gain $\widetilde{K}_t = ~\Pi_{\mathcal{R}(\mathcal{X}_{t-1})^\perp}$.
	%
	{
	Note that, in this case, if $~\Pi_{\mathcal{R}(B)}$ and $A$ commute,\footnote{This is the case if (and only if) both matrices are simultaneously diagonalizable (Theorem 1.3.21 in \cite{horn2012matrix}). If $A$ is symmetric, then these matrices commute if (and only if) they are congruent (Theorem 4.5.15 in \cite{horn2012matrix}).} then $\widetilde{A}\widetilde{B} = 0$ and $\x_{t+1} = \widetilde{A}^{t+1} \x_0 + \widetilde{B} \z_t$.
	%
	Moreover, $\widetilde{A} = 0$ whenever $\mathcal{R}(A) \subset \mathcal{R}(B)$, i.e., the system dynamics will only be driven by the feedback signal $\z_t$; these cases
	will be examined further subsequently.
	In case of general $\alpha$, the system trajectories evolve as,
	\begin{equation}
    	\label{eqn:sys-tilde}
    	\textstyle
    	\x_{t+1} = \widetilde{A}^{t+1} \x_0 + \sum_{r=0}^{t} \widetilde{A}^{t-r} \big[\widetilde{B} \z_r + \Delta_\alpha \w_r\big].
	\end{equation}%
    }%
	Finally, an attractive feature of \ac{DGR} hinges upon the orthogonality
	of the ``hidden" states $\z_t$ generated during the process.

    \subsubsection{Bounding the State Trajectories Generated by DGR}
	\label{subsec:bounds}
	
	\noindent In the open loop setting, the generated data from an unstable system can grow exponentially fast with a rate dictated by the largest unstable mode.
	We show that DGR can prevent this undesirable phenomenon for unstable systems when the system is regularizable.
	The key property for such an analysis involves the notion of instability number.
	
	\begin{definition}
		\label[definition]{def:instNo}
		Given the matrix $A \in \mathbb{R}^{n \times n}$, {\color{PineGreen} for any positive integer $t \leq n$}, its \textit{instability number of order $t$} is defined as,
		\begin{equation*}
		\textstyle
    		M_t(A) \coloneqq \sup_{\{\v_1, \ldots, \v_t\}\in \mathcal{O}_t^n} \hspace{2mm} \|A\v_1\| \, \|A\v_2\| \, \hspace{1mm}\cdots \hspace{1mm}  \|A\v_t\|,
		\end{equation*}
		where $\mathcal{O}_t^n$ is the collection of all sets of $t$ ``orthonormal'' vectors in $\mathbb{R}^n$; {\color{PineGreen} for $t > n$ we define $M_t(A) = 0$.}
	\end{definition}
	
	\begin{figure}
		\centering
		\includegraphics[width = 0.4\textwidth]{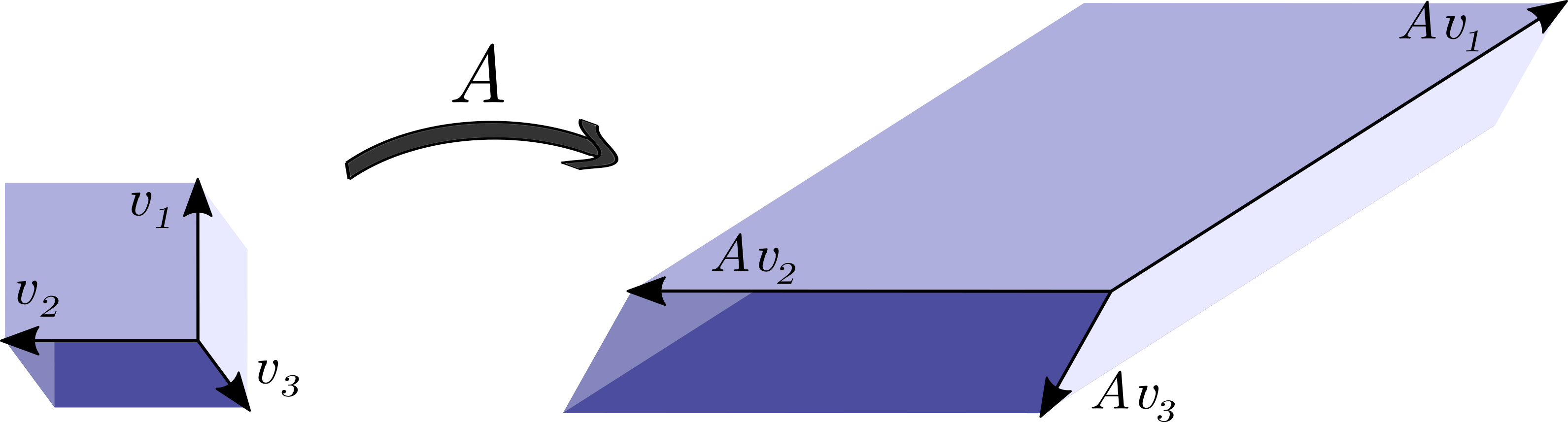}
		\caption{A unit cube in the domain of $A$ that is mapped to a parallelepiped in its range space.}
		\label{fig:map}
		\vspace{-0.3cm}
	\end{figure}
	
	{\color{Blue} Note that $M_t(A)\leq \|A\|^t$ for all $t$, where $\|.\|$ denotes the induced operator norm. However the behavior of $M_t(A)$ is fundamentally distinct from $\|A\|^t$.}
	In fact, the instability number of a matrix is distinct 
	from products of any subset of its eigenvalues.
	Consider for example, a $t$-dimensional hypercube with its image under $A$ as a parallelotope (see \Cref{fig:map} for a 3D schematic).
	The instability number is related to the multiplication of the lengths of edges radiating from one vertex of the parallelotope, while $\det (A^\intercal A)$ is related to its volume.
	The instability number of a matrix can in fact be difficult to compute.
	In what follows, we first provide upper and lower bounds on $M_t(A)$ characterizing its growth rate with respect to the largest singular value of $A$.
	Subsequently, these bounds will be used to provide a bound on the norm of the state trajectory generated by \ac{DGR}.
	
	\begin{lemma}
	    \label[lemma]{lem:Mbound-gen}
		Let $\sigma_1, \cdots, \sigma_n$ denote the singular values of $A\in\mathbb{R}^{n\times n}$ in a descending order.
		Then for $t \leq n$,
		\begin{gather*}
    		\left[ \frac{\sigma_1^2}{t} \right]^t \leq M_t^2(A) \leq \left[ \frac{\sigma_1^2}{t} \right]^t +  \sum_{j=1}^{t-1} \left[\frac{\sigma_1^2}{t-j} \right]^{t-j} \binom t j \delta^j + \delta^t,
		\end{gather*}
		where $\delta \coloneqq \sum_{i=2}^t \sigma_i^2$, with $M_t(A)$ as defined in \Cref{def:instNo}.
	\end{lemma}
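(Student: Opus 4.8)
The plan is to reduce the product of norms to a product of Rayleigh quotients of the positive semidefinite matrix $S := A^\top A$ and then control it by separating the contribution of the top singular direction from the rest. Writing $S = \sum_{i=1}^n \sigma_i^2 u_i u_i^\top$ in its eigenbasis (so the $\sigma_i^2$ are the eigenvalues and the $u_i$ orthonormal eigenvectors), for any orthonormal set $\{\v_1,\dots,\v_t\}\in\mathcal{O}_t^n$ we have $\|A\v_k\|^2 = \v_k^\top S \v_k = \sigma_1^2 p_k + \beta_k$, where $p_k := (u_1^\top \v_k)^2$ and $\beta_k := \sum_{i\geq 2}\sigma_i^2 (u_i^\top \v_k)^2$. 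Thus $M_t^2(A) = \sup \prod_{k=1}^t (\sigma_1^2 p_k + \beta_k)$, and both bounds will follow from two elementary facts about the feasible $(p_k,\beta_k)$: (a) writing $P=\sum_k \v_k\v_k^\top$ for the orthogonal projection onto $\mathrm{span}\{\v_k\}$, one has $\sum_k p_k = u_1^\top P u_1 \le 1$; and (b) since $\sum_{i\ge 2}(u_i^\top\v_k)^2\le \|\v_k\|^2 = 1$, each $\beta_k \le \sigma_2^2 \le \delta$.

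For the lower bound I would exhibit an explicit competitor. Since $t\le n$, extend $u_1$ to an orthonormal frame $F = [\,u_1\ z_2\ \cdots\ z_t\,]\in\mathbb{R}^{n\times t}$, pick a $t\times t$ orthogonal matrix $Q$ whose first row is $\tfrac{1}{\sqrt t}\mathbbm 1^\top$ (such $Q$ exists because $\tfrac{1}{\sqrt t}\mathbbm 1$ is a unit vector that can be completed to an orthonormal basis), and set $V = FQ$ with columns $\v_1,\dots,\v_t$. These are orthonormal and satisfy $u_1^\top \v_k = Q_{1k} = 1/\sqrt t$, whence $\|A\v_k\|^2 = \sigma_1^2 p_k + \beta_k \ge \sigma_1^2/t$. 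Multiplying over $k$ gives $M_t^2(A)\ge (\sigma_1^2/t)^t$.

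For the upper bound I would expand the product over subsets and group by how many factors contribute a $\beta$. Expanding gives $\prod_{k=1}^t(\sigma_1^2 p_k+\beta_k) = \sum_{R\subseteq\{1,\dots,t\}} (\sigma_1^2)^{t-|R|}\big(\prod_{k\notin R}p_k\big)\big(\prod_{k\in R}\beta_k\big)$. For a fixed size $|R|=j$ I would bound the $p$-part uniformly by AM--GM, $\prod_{k\notin R}p_k \le \big(\tfrac{1}{t-j}\sum_{k\notin R}p_k\big)^{t-j}\le (t-j)^{-(t-j)}$, using fact (a); and I would handle the $\beta$-part after summing over $R$, recognizing $\sum_{|R|=j}\prod_{k\in R}\beta_k = e_j(\beta_1,\dots,\beta_t)\le \binom{t}{j}\delta^j$, which follows from fact (b) since each of the $\binom{t}{j}$ monomials is a product of $j$ factors each at most $\delta$. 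Collecting the $j=0$, $1\le j\le t-1$, and $j=t$ contributions reproduces the three pieces $(\sigma_1^2/t)^t$, $\sum_{j=1}^{t-1}(\sigma_1^2/(t-j))^{t-j}\binom{t}{j}\delta^j$, and $\delta^t$ respectively; taking the supremum over $\mathcal{O}_t^n$ then yields the claim.

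The crux—and what makes the stated bound take exactly this form—is fact (b): the per-coordinate estimate $\beta_k\le\sigma_2^2\le\delta$ is what lets one replace a joint optimization of the $\beta_k$ by the crude but clean symmetric-function bound $e_j\le\binom{t}{j}\delta^j$, thereby avoiding any $1/t^j$ factors. The main obstacle to watch is keeping the two constraints decoupled correctly when bounding the expanded product: the $p$-factors are controlled through the single global constraint $\sum_k p_k\le1$ (producing the $(t-j)^{-(t-j)}$ term), while the $\beta$-factors are controlled purely through the uniform per-term bound, and one must justify multiplying these two estimates termwise, which is legitimate since all quantities involved are nonnegative.
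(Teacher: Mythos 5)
Your argument is essentially the paper's proof: the same decomposition of $\|A\v\|^2$ into the $u_1$-contribution plus a remainder bounded by $\delta$, the same Bessel-type constraint $\sum_k p_k\le 1$ combined with AM--GM over sub-products to produce the $(t-j)^{-(t-j)}$ factors, and the same equal-overlap ($1/\sqrt{t}$) construction for the lower bound; the only cosmetic differences are that you expand the product before replacing the remainders by $\delta$, and that you work with $A^\top A$ in $\mathbb{R}^n$ rather than first truncating to the top $t\times t$ block $\Sigma_t$ as the paper does. The one caveat is that your fact (b), $\beta_k\le\sigma_2^2\le\delta$, fails for $t=1$ (where $\delta=0$ while $\beta_1$ need not vanish), so that trivial case should be disposed of separately via $M_1(A)=\sigma_1$.
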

	
	
    The lower and upper bounds in \Cref{lem:Mbound-gen} show that, particularly when $\delta<1$, $M_t(A)$ initially grows similar to $( {\sigma_1}/{\sqrt{t}} )^t$ for $t\leq \sigma_1^2$, in contrast to the exponential growth of $\|A\|^t = \sigma_1^t$. 
	{\color{Blue}
	This difference becomes more pronounced for $t>\sigma_1^2$ when $( {\sigma_1}/{\sqrt{t}} )^t$ starts \emph{decreasing}.
    This fact is illustrated via an example in \Cref{fig:Mt}, where the first five dominant terms of the upper bound are plotted and the green region shows where the actual value of $M_t^2(A)$ lies.
    \begin{figure}[pt]
    \color{Blue}
        \centering
        \includegraphics[width=\columnwidth]{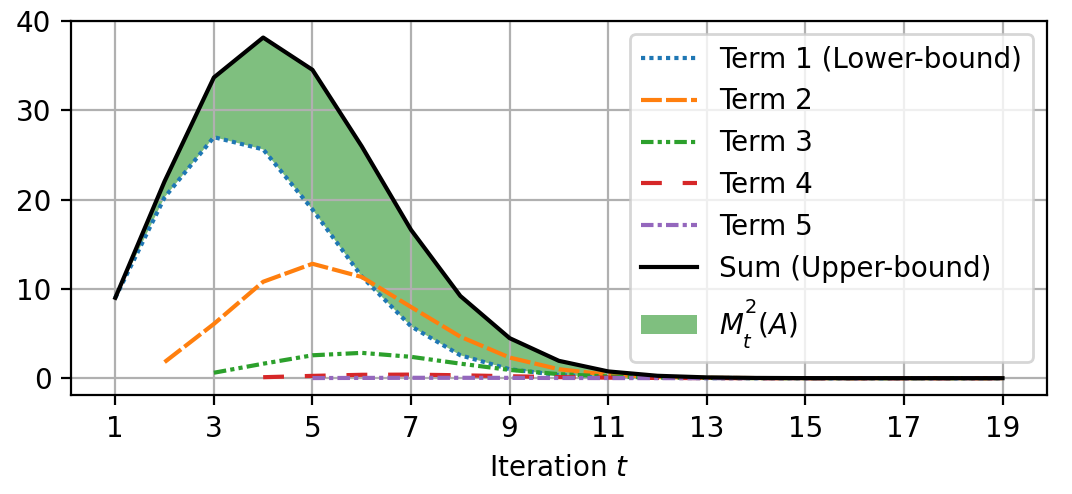}
        \caption{Illustration of the upper and lower bounds for instability number of a system with $\sigma_1 = 3$ and $\delta = 0.1$ as in \Cref{lem:Mbound-gen}.}
        \label{fig:Mt}
        \vspace{-0.3cm}
    \end{figure}
	The following result provides an upper bound on the state trajectories for the most general case through the lens of regularizability.
	}

	\begin{theorem}
    \label{thm:recursion}
	 
        For any regularizable pair $(A,B)$, the trajectory generated by \Cref{alg:Controller} satisfies the following bound for all $t>0$,
		\begin{equation*}
		\|\x_{t+1}\| \leq L_{t+1} \|\x_0\| \,,
		\end{equation*}
		where $L_t$ satisfies the recursion,
		\[ L_{t+1} = a_t + \sum_{r=1}^{t} b_{t,r} L_r, \hspace{5mm} L_1 = \|A \overline{\z}_0\|, \]
		with $$b_{t,r} =\sqrt{ \|\widetilde{A}^{t-r}\widetilde{B}\overline{\z}_r\|^2 + \|\widetilde{A}^{t-r}\Delta_\alpha \overline{\w}_r\|^2},$$ and $a_t = \|\widetilde{A}^t A \overline{\z}_0\|$, 
		where $\overline{\z}_r = \z_r / \|\z_r\|$ (if $\z_r \neq 0$, otherwise $\overline{\z}_r = 0$), and $\overline{\w}_r$ is similarly defined.
	\end{theorem}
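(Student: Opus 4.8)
The plan is to proceed by induction on $t$, taking the closed-form expansion of the trajectory furnished by \Cref{lem:algo-dyn} as the starting point. Specifically, \cref{eqn:sys-tilde} reads
\[
\x_{t+1} = \widetilde{A}^{t+1} \x_0 + \sum_{r=0}^{t} \widetilde{A}^{t-r} \left[\widetilde{B} \z_r + \Delta_\alpha \w_r\right],
\]
so the whole argument reduces to estimating the norm of each summand and then unwinding a recursion. For the base case $t=0$ I would use $\z_0 = \x_0$ and $\w_0 = 0$ to get $\x_1 = \widetilde{A}\x_0 + \widetilde{B}\x_0 = A\x_0$, so that $\|\x_1\| = \|A\overline{\z}_0\|\,\|\x_0\| = L_1\|\x_0\|$, which matches the stated initialization.

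For the inductive step I would first fold the leading term $\widetilde{A}^{t+1}\x_0$ together with the $r=0$ summand. Since $\w_0 = 0$ and $\z_0 = \x_0$, these combine into $\widetilde{A}^{t}(\widetilde{A}+\widetilde{B})\x_0 = \widetilde{A}^{t} A \x_0$, using $\widetilde{A} + \widetilde{B} = A$; its norm is exactly $a_t\|\x_0\|$, which produces the additive constant $a_t$ in the recursion. Applying the triangle inequality to the residual sum over $r=1,\dots,t$ then leaves the task of bounding each term $\|\widetilde{A}^{t-r}[\widetilde{B}\z_r + \Delta_\alpha \w_r]\|$ by $b_{t,r}\|\x_r\|$, after which the inductive hypothesis $\|\x_r\|\leq L_r\|\x_0\|$ closes everything into $L_{t+1} = a_t + \sum_{r=1}^{t} b_{t,r}L_r$.

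The crux — and the one step that dictates the precise square-root form of $b_{t,r}$ — is this per-term bound. Writing $\z_r = \|\z_r\|\overline{\z}_r$ and $\w_r = \|\w_r\|\overline{\w}_r$ and using the triangle inequality gives
\[
\|\widetilde{A}^{t-r}[\widetilde{B}\z_r + \Delta_\alpha \w_r]\| \leq \|\z_r\|\,\|\widetilde{A}^{t-r}\widetilde{B}\overline{\z}_r\| + \|\w_r\|\,\|\widetilde{A}^{t-r}\Delta_\alpha \overline{\w}_r\|.
\]
I would then read the right-hand side as the inner product of $(\|\z_r\|,\|\w_r\|)$ and $(\|\widetilde{A}^{t-r}\widetilde{B}\overline{\z}_r\|, \|\widetilde{A}^{t-r}\Delta_\alpha \overline{\w}_r\|)$ in $\mathbb{R}^2$ and invoke Cauchy--Schwarz. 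What makes this sharp is that $\z_r$ and $\w_r$ are complementary orthogonal projections of $\x_r$ onto $\mathcal{R}(\mathcal{X}_{r-1})^\perp$ and $\mathcal{R}(\mathcal{X}_{r-1})$, so $\x_r = \z_r + \w_r$ with $\z_r \perp \w_r$ and hence $\sqrt{\|\z_r\|^2 + \|\w_r\|^2} = \|\x_r\|$; the other factor is by definition $b_{t,r}$. I expect the main subtlety to lie not in any hard estimate but in correctly recombining the leading term with the $r=0$ summand and in recognizing the orthogonality $\z_r\perp\w_r$ (inherited from \Cref{lem:algo-dyn}) that turns the Cauchy--Schwarz factor into $\|\x_r\|$ exactly; the remainder is bookkeeping on \cref{eqn:sys-tilde}.
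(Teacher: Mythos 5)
Your proposal is correct and follows essentially the same route as the paper's proof: the same folding of the $r=0$ summand into $\widetilde{A}^{t}A\x_0$ via $\widetilde{A}+\widetilde{B}=A$ and $\w_0=0$, the same triangle inequality, the same Cauchy--Schwarz step in $\mathbb{R}^2$ combined with the orthogonality identity $\|\z_r\|^2+\|\w_r\|^2=\|\x_r\|^2$, and the same closing induction. No gaps.
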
{}
    
	\begin{proof}
		Knowing that $\x_1= A \x_0$, it follows that $\|\x_1\| \leq L_1 \|\x_0\|$.
		Furthermore, for $t\geq 1$, \cref{eqn:sys-tilde} leads to,
		\begin{equation*}
    		\textstyle \x_{t+1}=\widetilde{A}^{t} A \x_0 + \sum_{r=1}^{t}
    		\widetilde{A}^{t-r} \left[\widetilde{B} \z_r + \Delta_\alpha \w_r\right],
		\end{equation*}
		since $\widetilde{A} + \widetilde{B} = A$ and $\w_0=0$ by definition.
		This implies that,
		\begin{align*}
    		\|\x_{t+1}\| 
    		\leq & \, \|\widetilde{A}^{t} A \x_0\| \\
    		&\textstyle + \sum_{r=1}^{t} \|\widetilde{A}^{t-r} \widetilde{B} \overline{\z}_r\|\|\z_r\| + \|\widetilde{A}^{t-r} \Delta_\alpha \overline{\w}_r\|\|\w_r\|,\\
    		\leq & \textstyle  \, a_t \|\x_0\| + \sum_{r=1}^{t} b_{t,r} \|\x_r\|,
		\end{align*}
		where we have used Cauchy–Schwarz inequality in conjunction with the equality $\|\z_r\|^2 + \|\w_r\|^2 = \|\x_r\|^2$.
		Using this recursive bound, the rest of the proof follows by induction.
	\end{proof}
	\begin{remark}
	    \label{rmk:bound_behavior}
	    Note that in the analysis above, when the system is regularizable, $a_t$ eventually decreases exponentially fast as $t$ increases.
	    Furthermore, the term $b_{t,r}$ in the sum increases as $r$ approaches a fixed $t$.
	    %
		%
		Finally, one can show that the obtained upper bound is tight by considering \Cref{ex:motivateExample} with $\lambda_1>0$ and $\lambda_i = 0$ for $i>1$.
	\end{remark}{}
    {\color{Blue}
	Note that computing/estimating the upper bound in \Cref{thm:recursion} requires knowledge on the matrix $A$, making these estimates more practical for structured systems (for example, see \Cref{cor:upperbound} and \Cref{rem:simple-upperbound}).
    }
	In order to shed light on the intuition behind this upper bound, we next study simpler cases {with $\alpha = 0$,} where there exists small enough $\kappa$ for which
	$b_{t,r} \leq \|\widetilde{A}^{t-r}\widetilde{B}\| \leq \kappa$ for all $r<t$.
	In particular, we can show that if the system is regularizable and $\widetilde{A}\widetilde{B}= 0$, then the trajectories of the closed loop system
	will be bounded by a combination of instability number of different orders. 
	This is stated in the following corollary of \Cref{thm:recursion}.
	
	\begin{corollary}
	    \label[corollary]{cor:upper-bound-gen}
		For any  regularizable pair $(A,B)$ with $\widetilde{A} \widetilde{B} = 0$, and $M_t(A)$ as in \Cref{def:instNo}, {the system trajectory generated by \Cref{alg:Controller} with $\alpha = 0$} satisfies the following for all $t>0$,
		\begin{equation*}
		\frac{\|\x_{t+1}\|}{\|\x_0\|} \leq a_t + \sum_{r=1}^{t-1} M_r(A) a_{t-r} + M_{t+1}(A)\,.
		\end{equation*}
	\end{corollary}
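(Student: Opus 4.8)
The plan is to specialize the general recursion of \Cref{thm:recursion} to the setting $\alpha = 0$ and then exploit the structural hypothesis $\widetilde{A}\widetilde{B}=0$ to collapse that recursion into a one-step form that telescopes cleanly. First, since $\alpha = 0$ gives $\Delta_0 = 0$ by \Cref{lem:algo-dyn}, the coefficient reduces to $b_{t,r} = \|\widetilde{A}^{t-r}\widetilde{B}\overline{\z}_r\|$. The hypothesis $\widetilde{A}\widetilde{B}=0$ forces $\widetilde{A}^{t-r}\widetilde{B} = \widetilde{A}^{t-r-1}(\widetilde{A}\widetilde{B}) = 0$ for every $r < t$, so all off-diagonal coefficients vanish and only the $r=t$ term survives, namely $b_{t,t} = \|\widetilde{B}\,\overline{\z}_t\|$. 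Consequently the recursion of \Cref{thm:recursion} degenerates to $L_{t+1} = a_t + b_{t,t} L_t$ with $L_1 = \|A\overline{\z}_0\|$.

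Next I would bound the surviving coefficient. Since $\widetilde{B} = \Pi_{\mathcal{R}(B)} A$ and orthogonal projection is non-expansive, $b_{t,t} = \|\Pi_{\mathcal{R}(B)} A \overline{\z}_t\| \leq \|A\overline{\z}_t\|$; writing $c_r := \|A\overline{\z}_r\|$ gives $L_{t+1} \leq a_t + c_t L_t$ with $L_1 = c_0$. Unrolling this scalar inequality (all quantities being nonnegative, so the bounds propagate) telescopes into
\[
L_{t+1} \leq \sum_{k=1}^{t} a_k \prod_{j=k+1}^{t} c_j \;+\; c_0 \prod_{j=1}^{t} c_j,
\]
where each summand carries a product of \emph{consecutive} factors $c_j = \|A\overline{\z}_j\|$.

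The crux is to recognize these consecutive products as instability numbers. By \Cref{lem:algo-dyn} the hidden states $\{\z_0,\z_1,\dots,\z_t\}$ are mutually orthogonal, so after normalization the nonzero $\overline{\z}_j$ form an orthonormal set; hence any block of $q$ consecutive factors satisfies $\prod \|A\overline{\z}_j\| \leq M_q(A)$ directly from \Cref{def:instNo} (if some $\overline{\z}_j = 0$ the product merely vanishes, preserving the inequality). Applying this with $q = t-k$ to the first sum and $q = t+1$ to the last term (which is the product $\prod_{j=0}^{t} c_j$ over all $t+1$ orthonormal directions) yields $L_{t+1} \leq \sum_{k=1}^{t} a_k M_{t-k}(A) + M_{t+1}(A)$. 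Reindexing by $r = t-k$ and using the empty-product convention $M_0(A) = 1$ turns the sum into $a_t + \sum_{r=1}^{t-1} M_r(A)\,a_{t-r}$, and combining with $\|\x_{t+1}\| \leq L_{t+1}\|\x_0\|$ from \Cref{thm:recursion} gives the claim.

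I expect the main obstacle to be purely in the bookkeeping of the telescoped products: correctly matching each block of consecutive factors $\|A\overline{\z}_{k+1}\|\cdots\|A\overline{\z}_t\|$ to the right order $M_{t-k}(A)$, and cleanly handling the edge cases (vanishing $\overline{\z}_j$ and the empty product $M_0$). The conceptual content rests entirely on two earlier facts---the orthogonality of the $\z_r$ from \Cref{lem:algo-dyn} and the non-expansiveness of $\Pi_{\mathcal{R}(B)}$---so no new estimate is needed beyond organizing the recursion.
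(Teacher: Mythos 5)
Your proof is correct, and since the paper omits its own proof of this corollary ``for brevity,'' there is nothing to diverge from: your route---specializing the recursion of \Cref{thm:recursion} via $\Delta_0=0$ and $\widetilde{A}^{q}\widetilde{B}=0$ for $q\ge 1$, unrolling the resulting one-step bound $L_{t+1}\le a_t+\|A\overline{\z}_t\|L_t$, and bounding the consecutive products $\prod_j\|A\overline{\z}_j\|$ by $M_q(A)$ using the orthogonality of the $\z_j$ from \Cref{lem:algo-dyn}---is precisely the argument the paper itself hints at in \Cref{rmk:bound_behavior}. Your handling of the edge cases (the empty product giving $M_0(A)=1$ for the $a_t$ term, and vanishing $\overline{\z}_j$) is also sound.
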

	\begin{proof}
	    \color{PineGreen}
		For brevity, let $b_t = b_{t,t}$, then as $\widetilde{A} \widetilde{B} = 0$, the recursion in \Cref{thm:recursion} reduces to $L_{t+1} = a_t + b_t L_t$ with $L_1 = \|A \overline{\z}_0\|$; and its solution has the following form for all $t> 0$,
		\begin{equation}\label{eq:L-spec-proof}
		\textstyle
		    L_{t+1} =a_t+ b_t \cdots b_2 b_1 L_1 + \sum_{r=1}^{t-1} b_t \, \cdots \, b_{t+1-r} \, a_{t-r}.
		\end{equation}
		As $\alpha = 0$ and orthogonal projection is non-expansive, we claim that $b_r = \|\widetilde{B} \overline{\z}_r\| \leq \|A \overline{\z}_r\|$ which vanishes whenever $\z_r = 0$.
        In the meantime, by \Cref{lem:algo-dyn}, $\{\z_r\}_0^t$ must be a set of orthogonal vectors for any $t > 0$, and thus $\{\overline{\z}_r\}_0^t$ is a set of orthogonal vectors that are either normal or zero. 
        Note that if $t \geq n$, then $\{\overline{\z}_r\}_0^t$ must contain at least one zero vector for dimensional reasons.
        Therefore, by \Cref{def:instNo}, we conclude that
		\( b_t \, \cdots \, b_{t+1-r} \leq 
		M_{r}(A),\)
		for each $r= 1,\dots, t-1$.
		Similarly, as $L_1 = \|A \overline{\z}_0\|$, we have
		\(b_t \, \cdots \, b_{2} b_{1} L_1 \leq
		M_{t+1}(A).\)
		By using these inequalities in \cref{eq:L-spec-proof},  the claim follows by \Cref{thm:recursion}.
		\vspace{-0.2cm}
	\end{proof}
    The above observation further highlights the importance of the instability number in the context of \ac{DGR}. {\color{PineGreen} Note that the terms in the upper bound involving $M_r(A)$ vanishes if $r > n$.}
\subsubsection{Informativity of the \ac{DGR} Generated Data}
\label{subsec:informativity}

{\color{Blue} In the sequel, we show that \ac{DGR} generates linearly independent state-trajectory data; we refer to this as informativity of data.}
We then proceed to make a connection between this independence structure 
and the number of excited modes in the system.
Before we proceed, let us define $L_k^t(A)$, that is based on eigenvalues corresponding to $k$ modes of a matrix $A$, as,
\begin{equation}
\label{eqn:LA}
L_k^t(A) \coloneqq \scalemath{0.9}{\left(
\begin{array}{*4{c}}
1& \lambda_{1} & \cdots & (\lambda_{1})^{t-1} \\
1& \lambda_{2} & \cdots & (\lambda_{2})^{t-1} \\
\vdots& \vdots & \ddots & \vdots  \\
1& \lambda_{k} & \cdots & (\lambda_{k})^{t-1} \\
\end{array}\right)}, \quad 1\leq t\leq n.
\end{equation}
\begin{remark}
Note that $L_k^t(A)$ has a specific structure that hints
at its invertibility.
In fact, for $t=k$, $L_k^k(A)$ is the Vandermonde matrix formed by $k$ eigenvalues of $A$ which would be invertible if and only if $\lambda_1, \cdots, \lambda_k$ are distinct.
More generally, if $\{ \lambda_1, \cdots, \lambda_k \}$ consists of $r$ distinct eigenvalues (where $r\leq k$), then $L_k^r(A)$ has full column rank. 
\end{remark}
{\color{Blue}
Intuitively, informative data--due to its linear independence structure--contain useful information for decision-making purposes.}
{\color{violet} In particular, if the choice of $\x_0$ results in exciting all modes of $A$, one might expect that a useful online regulation algorithm should generate informative data at the same time that it is regulating the state-trajectory. 
The next theorem formalizes how \ac{DGR} realizes this expectation depending on what modes of the system are excited by the initial condition.}


\begin{theorem}
    \label{thm:linIndp}
    Let $\x_0$ excite $k_1+k_2$ modes of $A$, such that $k_1$ modes are in $\mathcal{R}(B)$ and $k_2$ modes are in $\mathcal{R}(B)^\perp$.
    If the excited modes correspond to distinct eigenvalues, then  $\{\x_0,\dots,\x_{r-1}\}$, generated by \Cref{alg:Controller} with $\alpha = 0$, is a set of linearly independent vectors for any $r \leq \max\{k_1,k_2\}$.
\end{theorem}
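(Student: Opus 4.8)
The plan is to reduce the claim to a pair of Vandermonde-type rank arguments, one for the components of the trajectory lying in $\mathcal{R}(B)$ and one for the components lying in $\mathcal{R}(B)^\perp$. First I would recall from \Cref{lem:algo-dyn} that, with $\alpha=0$, the hidden states $\z_t=\Pi_{\mathcal{R}(\mathcal{X}_{t-1})^\perp}\x_t$ are precisely the Gram--Schmidt residuals of $\x_0,\dots,\x_t$; consequently $\{\x_0,\dots,\x_{r-1}\}$ is linearly independent if and only if $\z_t\neq 0$ for every $t<r$, and it suffices to exhibit, for each such $t$, a single projection under which $\x_t$ escapes the span of its predecessors. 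Accordingly, I would write the excited eigenvectors as $\v_1,\dots,\v_{k_1}\in\mathcal{R}(B)$ with distinct eigenvalues $\mu_i$ and $\w_1,\dots,\w_{k_2}\in\mathcal{R}(B)^\perp$ with distinct eigenvalues $\nu_j$, so that $\x_0=\sum_i c_i\v_i+\sum_j d_j\w_j$ with every $c_i\neq 0$ and $d_j\neq 0$ by the definition of exciting exactly these modes.

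Next I would extract the decoupled dynamics. By \Cref{lem:algo-dyn} with $\alpha=0$ the trajectory obeys $\x_{t+1}=\widetilde{A}\x_t+\widetilde{B}\z_t$ with $\widetilde{A}=\Pi_{\mathcal{R}(B)^\perp}A$ and $\widetilde{B}=\Pi_{\mathcal{R}(B)}A$, and by \Cref{lem:eigenpair} the operator $\widetilde{A}$ annihilates each $\v_i$ and scales each $\w_j$ by $\nu_j$, while $\widetilde{B}$ scales each $\v_i$ by $\mu_i$ and annihilates each $\w_j$; in particular the trajectory stays in $V:=\mathrm{span}\{\v_i,\w_j\}$. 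Projecting the one-step recursion onto $\mathcal{R}(B)^\perp$ and onto $\mathcal{R}(B)$ then separates the two parts: the perpendicular part is autonomous, $\x_t^\perp=\widetilde{A}^t\x_0^\perp=\sum_j d_j\nu_j^t\w_j$, whereas the parallel part carries the feedback, $\x_{t+1}^\parallel=\widetilde{B}\z_t$.

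For the perpendicular part the coefficient array $[d_j\nu_j^t]$ factors as $\mathrm{diag}(d_j)$ times a Vandermonde matrix in the distinct $\nu_j$, hence is invertible for $t=0,\dots,k_2-1$; since independence of the projections forces independence of the vectors, this yields linear independence of $\{\x_0,\dots,\x_{k_2-1}\}$. The parallel part is the crux. Here I would prove by induction that $\x_t^\parallel=\sum_i c_i\,p_t(\mu_i)\,\v_i$ for a monic polynomial $p_t$ of degree $t$ that is \emph{common to all $i$}: writing the Gram--Schmidt residual as $\z_t=\x_t-\sum_{s<t}\beta_{t,s}\x_s$ with scalar coefficients $\beta_{t,s}$ shared by every eigen-component, and applying $\widetilde{B}$ (which scales $\v_i$ by $\mu_i$ and kills every $\w_j$), gives the recursion $p_{t+1}(\mu)=\mu\big(p_t(\mu)-\sum_{s<t}\beta_{t,s}\,p_s(\mu)\big)$, which is monic of degree $t+1$. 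Then $[c_i\,p_t(\mu_i)]$ factors as $\mathrm{diag}(c_i)$ times a unit-triangular change of basis times the ordinary Vandermonde in the distinct $\mu_i$, so it is invertible for $t=0,\dots,k_1-1$, giving linear independence of $\{\x_0,\dots,\x_{k_1-1}\}$.

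Finally I would combine the two estimates: for any $r\le\max\{k_1,k_2\}$ either $r\le k_2$, handled by the perpendicular argument, or else $r\le k_1$, handled by the parallel one, so in either case $\{\x_0,\dots,\x_{r-1}\}$ is linearly independent. The main obstacle is precisely the parallel case: because the feedback $\z_t$ couples all eigen-components through the Euclidean Gram--Schmidt process, the coefficients are \emph{not} simply $\mu_i^t$, and a naive Vandermonde argument fails. The resolution is the observation that the coupling enters only through scalars $\beta_{t,s}$ shared across components, so the coefficients are $c_i$ times a single monic polynomial evaluated at $\mu_i$, which restores a generalized Vandermonde determinant.
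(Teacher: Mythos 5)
Your proposal is correct and follows essentially the same route as the paper: your common monic polynomial $p_t$ with the recursion $p_{t+1}(\mu)=\mu\bigl(p_t(\mu)-\sum_{s<t}\beta_{t,s}p_s(\mu)\bigr)$ is exactly the paper's inductive expansion \cref{eqn:xspan-gen} (with $\xi$-coefficients collected into the unit-triangular matrix $I-\Xi$), and your case split on $r\le k_1$ versus $r\le k_2$ mirrors the paper's observation that one of the two stacked Vandermonde blocks $L_{k_1}^r(A)(I-\Xi)$ or $\widehat{L}_{k_2}^r(A)$ must have full column rank. The only difference is presentational --- you project the hypothetical dependence onto $\mathcal{R}(B)$ and $\mathcal{R}(B)^\perp$ separately rather than writing the single block equation \cref{eqn:alpha-zero} --- which is an equivalent argument.
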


\begin{proof}
    Without loss of generality, let $\lambda_1,\dots, \lambda_{k_1}$ be the eigenvalues corresponding to the excited modes $u_1, \dots, u_{k_1} \in \mathcal{R}(B)$, and similarly $\lambda_{k_1+1},\dots, \lambda_{k_1+k_2}$ be corresponding to $u_{k_1+1}, \dots, u_{k_1+k_2} \in \mathcal{R}(B)^\perp$.
    Recall that $\mathcal{X}_{t-1} = [\x_0\ \x_1\ \dots\ \x_{t-1}]$; then by definition of $\z_t$ in \Cref{lem:algo-dyn}, for $t\geq 1$ there exists scalar coefficients $ \zeta^t_{0}, \cdots, \zeta^t_{t-1} \in \mathbb{R}$ such that $\z_t = \x_t - \sum_{j=0}^{t-1} \zeta_j^t \x_j$.  This together with the dynamics in \Cref{lem:algo-dyn} imply that $\x_1 = A \x_0$ and for $t\geq 2$,
    \begin{equation}\label{eqn:xt-proof}
         \textstyle \x_{t}=  A \x_{t-1}  -  \Pi_{\mathcal{R}(B)} \sum_{j=0}^{t-2} \zeta_j^t A \x_j.
    \end{equation}
    Since $\x_0$ excites $k_1+k_2$ modes of the system, we have $\x_0 = \sum_{\ell=1}^{k_1+k_2} \beta_\ell \u_\ell$, where $\beta_\ell$ are some nonzero real coefficients and $(\lambda_\ell,\u_\ell)$ are eigenpairs of $A$.
    Hence $\x_1 = A \x_0 = \sum_{\ell=1}^{k_1+k_2} \beta_\ell \lambda_\ell \u_\ell$, and we claim that for $t \geq 2$ there exist scalar coefficients $ \xi^t_{1}, \cdots, \xi^t_{t-1} \in \mathbb{R}$ such that,
    \begin{gather}    \label{eqn:xspan-gen}
        \x_t = \hspace{-0.1cm} \sum_{\ell=1}^{k_1} \beta_\ell \Big[ (\lambda_\ell)^t  -  \hspace{-0.1cm} \sum_{i=1}^{t-1} \xi^{t}_{i}  (\lambda_\ell)^i   \Big]\u_\ell + \hspace{-0.3cm} \sum_{\ell=k_1+1}^{k_1+k_2} \beta_\ell (\lambda_\ell)^t \u_\ell.
    \end{gather}
    The proof of the last claim is by induction.
    Note that $A^t \x_0 = \sum_{\ell=1}^{k_1+k_2} \beta_\ell (\lambda_\ell)^t \u_\ell$, and by substituting this into \cref{eqn:xt-proof} for $t=2$ we have that,
    \begin{align*}
        \x_2 
        &= A \x_1 -\zeta_0^2 \Pi_{\mathcal{R}(B)} A\x_0 \\
        &=  \Pi_{\mathcal{R}(B)} \left[A^2 \x_0 - \zeta_0^2 A \x_0\right] +  \Pi_{\mathcal{R}(B)^\perp} A^2 \x_0\\
        &= \textstyle  \sum_{\ell=1}^{k_1} \beta_\ell  \left[(\lambda_\ell)^2 - \zeta_0^2 \lambda_\ell \right] \u_\ell +  \sum_{\ell=k_1+1}^{k_1+k_2} \beta_\ell (\lambda_\ell)^2 \u_\ell,
    \end{align*}
    where the last equality is due to the fact that $\u_\ell \in \mathcal{R}(B)$ for $\ell \leq k_1$ and $\u_\ell \in \mathcal{R}(B)^\perp$ for $\ell > k_1$.
    By choosing $\xi_1^2 = \zeta_0^2$, we have shown that \cref{eqn:xspan-gen} holds for $t=2$.
    Now suppose that \cref{eqn:xspan-gen} holds for all $2,\dots, t-1$; it now suffices to show that this relation also holds for $t$. By substituting the hypothesis for $2,\dots, t-1$ into \cref{eqn:xt-proof},
     \begin{align*}
         \x_t 
         =& \textstyle \sum_{\ell=1}^{k_1} \beta_\ell \Big[ (\lambda_\ell)^{t}  -  \sum_{i=1}^{t-2} \xi^{t-1}_{i}  (\lambda_\ell)^{i+1}   \Big]\u_\ell \\
         & \textstyle + \sum_{\ell=k_1+1}^{k_1+k_2} \beta_\ell (\lambda_\ell)^{t} \u_\ell - \sum_{\ell=1}^{k_1} \beta_\ell[ \zeta_0^t \lambda_\ell + \zeta_1^t(\lambda_\ell)^2] \u_\ell \\
         & \textstyle -\sum_{j=2}^{t-2} \zeta_j^t \sum_{\ell=1}^{k_1} \beta_\ell \Big[ (\lambda_\ell)^{j+1}  -  \sum_{i=1}^{j-1} \xi^{j}_{i}  (\lambda_\ell)^{i+1}   \Big]\u_\ell.
     \end{align*}
     Therefore, $\x_t 
         = \textstyle \sum_{\ell=1}^{k_1} \beta_\ell \left[ \star \right]\u_\ell +  \sum_{\ell=k_1+1}^{k_1+k_2} \beta_\ell (\lambda_\ell)^{t} \u_\ell,$
    where $\star$ replaces the expression,
    \begin{equation*}
         (\lambda_\ell)^{t}  -  \sum_{i=1}^{t-2} \xi^{t-1}_{i}  (\lambda_\ell)^{i+1} - \sum_{j=0}^{t-2} \zeta_j^t(\lambda_\ell)^{j+1}+ \sum_{j=2}^{t-2} \sum_{i=1}^{j-1} \zeta_j^t \xi^{j}_{i}  (\lambda_\ell)^{i+1}.
     \end{equation*}
    By appropriate choices of $ \xi^t_{1}, \cdots, \xi^t_{t-1} \in \mathbb{R}$, we can rewrite $\star = (\lambda_\ell)^{t} -  \sum_{i=1}^{t-1} \xi^{t}_{i}  (\lambda_\ell)^i $.
    This completes the proof of the claim in \cref{eqn:xspan-gen} by induction. 
    Now, let $\widehat{\x} = \sum_{j= 0}^{r-1} \gamma_j \x_j$ for some $\gamma_j \in \mathbb{C}$ and some $r\leq \max\{k_1,k_2\}$.
    Then, by substituting $\x_j$ from (\ref{eqn:xspan-gen}) and exchanging the sums over $j$ and $\ell$ we have, 
    \begin{gather*}
    	\textstyle \widehat{\x} =
    	\sum_{\ell=1}^{k_1}  \beta_\ell \Big[\gamma_0  + \gamma_1 \lambda_\ell + \sum_{j=2}^{r-1} \gamma_j\big[ (\lambda_\ell)^j   -  \sum_{i=1}^{j-1} \xi^{j}_{i}  (\lambda_\ell)^i  \big] \Big] \u_\ell \\
    	\textstyle + \sum_{\ell=k_1+1}^{k_1+k_2} \beta_\ell \sum_{j=0}^{r-1} \gamma_j (\lambda_\ell)^j \u_\ell .
    \end{gather*}
    Now, by exchanging the sums over $i$ and $j$, it follows that,
    \begin{gather*}
    \textstyle	\widehat{\x} = \sum_{\ell=1}^{k_1} \beta_\ell  \Big[ \gamma_0 + \sum\limits_{i=1}^{r-2} \big[ \gamma_i - \sum\limits_{j=i+1}^{r-1} \gamma_j \xi_{i}^{j} \big] (\lambda_{\ell})^i + \gamma_{r-1} (\lambda_{\ell})^{r-1} \Big] \u_\ell  \\
     \textstyle   +\sum_{\ell=k_1+1}^{k_1+k_2} \beta_\ell \big[ \sum_{j=0}^{r-1} \gamma_j (\lambda_\ell)^j \big] \u_\ell .
    \end{gather*}
    Since $\{\u_\ell\}_1^{k_1+k_2}$ are eigenvectors associated with distinct eigenvalues, they are linearly independent.
    Thus, noting that $\beta_\ell \neq 0$ for all $\ell = 1, \cdots, k_1+k_2$, then $\widehat{\x}= 0$ implies that,
    \begin{equation*}
    \textstyle
        \gamma_0 + \sum\limits_{i=1}^{r-2} \Big[ \gamma_i - \sum\limits_{j=i+1}^{r-1} \gamma_j \xi_{i}^{j} \Big] (\lambda_{\ell})^i + \gamma_{r-1} (\lambda_{\ell})^{r-1} = 0,
    \end{equation*}
for all $\ell = 1, \dots,k_1$; and $\sum_{j=0}^{r-1} \gamma_j (\lambda_\ell)^j=0$,
    for all $\ell = k_1+1, \dots,k_1+k_2$. By rewriting the last two sets of equations in matrix form we get,
    \begin{equation}\label{eqn:gamma-zero}
           \left(\begin{array}{*2{c}}
    		L_{k_1}^r(A)(I-\Xi)  \\
            \widehat{L}_{k_2}^r(A)
        	\end{array}\right) \ogamma= 0,
    \end{equation}
    where $\widehat{L}_{k_2}^r(A)$ is the last $k_2$ rows of $L_{k_1+k_2}^r (A)$ and
    \small
    \begin{equation} \label{eq:xi-gamma-def}
     \Xi \coloneqq \left(\begin{array}{*6c}
    		0 & 0 & 0 & 0 & \dots & 0\\
            0 & 0 & \xi_1^2 & \xi_1^3 & \cdots & \xi_1^{r-1}\\
            0 & 0 & 0 &   \xi_2^3 & \cdots & \xi_2^{r-1}       \\
             0 & 0 & 0 &0  & \ddots & \vdots      \\
          \vdots & \vdots & \vdots & \vdots & \ddots & \xi_{r-1}^{r-1}  \\
            0&0&0&0&\cdots&0
    		\end{array}\right), \; \ogamma \coloneqq \left(\begin{array}{*4c}
    		\gamma_0\\
            \gamma_1\\
            \\
            \vdots\\
            \\
            \gamma_{r-1}
    		\end{array}\right).
    \end{equation}
    \normalsize
    Note that $I - \Xi$ is invertible by construction.
    Since the excited modes correspond to distinct eigenvalues, if $r\leq \max\{k_1,k_2\}$, then either 
    $L_{k_1}^{r}(A)$ or $\widehat{L}_{k_2}^r(A)$ has full column rank.
    Either way, \cref{eqn:gamma-zero} implies that $\ogamma =0$ and thus $\{\x_0, \dots, \x_{r-1}\}$ is a set of linearly independent vectors.
    This observation completes the proof as $r\leq \max\{k_1,k_2\}$ was chosen arbitrary.
\end{proof}
    {\color{violet}
    The preceding theorem guarantees 
    the linear independence of the state-trajectory generated by \ac{DGR} whenever the exited modes lie in $\mathcal{R}(B)$ or $\mathcal{R}(B)^\perp$, even though our observations suggest that it must remain valid for arbitrary excitation of the modes. Nonetheless, \ac{DGR} remains effective in terms of online regulation from an arbitrary choice of $x_0$ as guaranteed in \Cref{thm:recursion}, \Cref{cor:upper-bound-gen}, and subsequently in \Cref{cor:upperbound}.
    }
    \subsection{\color{violet}Special Case of $\mathcal{R}(A) \subset \mathcal{R}(B)$ with $\alpha = 0$}
	\label{sec:results-rank-n-B}
	
    In order to better understand the behavior of \ac{DGR}, in this subsection, we study the more special case where $\mathcal{R}(A) \subset \mathcal{R}(B)$. 
    This includes the case where $\mathrm{rank}(B)=n$, i.e., one 
    can directly control each state of the system (e.g. see \cite{sharf2018network,friedkin1990social}).
	Note that $\mathcal{R}(A) \subset \mathcal{R}(B)$ implies that
	$\widetilde{A}= 0$ which, in turn, results in regularizability of $(A,B)$.
	This, together with \Cref{cor:upper-bound-gen}, results in the following corollary.
	
	\begin{corollary}
	    \label[corollary]{cor:upperbound}
		For any matrix  $A \in \mathbb{R}^{n\times n}$ and $B \in \mathbb{R}^{n \times m}$, where $\mathcal{R}(A) \subseteq \mathcal{R}(B)$, the trajectory generated by 
		\Cref{alg:Controller} {with $\alpha = 0$} satisfies the following for all $t>0$,
		\begin{equation*}
		\|\x_{t+1}\| \leq M_{t+1}(A) \|\x_0\| \, ,
		\end{equation*}
		with $M_t(A)$ as in \Cref{def:instNo}.
	\end{corollary}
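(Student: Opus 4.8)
The plan is to reduce the statement to \Cref{cor:upper-bound-gen}, whose hypotheses are automatically satisfied here. Since $\mathcal{R}(A)\subseteq\mathcal{R}(B)$ forces $\widetilde{A}=~\Pi_{\mathcal{R}(B)^\perp}A=0$, the pair $(A,B)$ is trivially regularizable and $\widetilde{A}\widetilde{B}=0$, so \Cref{cor:upper-bound-gen} applies with $\alpha=0$. The only remaining work is to evaluate its right-hand side in this degenerate regime: with $\widetilde{A}=0$ the coefficients $a_t=\|\widetilde{A}^t A\overline{\z}_0\|$ vanish for every $t\ge 1$ (only $a_0=\|A\overline{\z}_0\|$ survives). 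Consequently the bound $M_{t+1}(A)+a_t+\sum_{r=1}^{t-1}M_r(A)\,a_{t-r}$ collapses to $M_{t+1}(A)$, because both the standalone $a_t$ and every $a_{t-r}$ appearing in the sum carry an index $\ge 1$. Re-indexing gives $\|\x_t\|\le M_t(A)\|\x_0\|$ for $t\ge 2$; the remaining case $t=1$ is $\|\x_1\|=\|A\x_0\|\le M_1(A)\|\x_0\|$ (recall $M_1(A)=\|A\|$), and $t=0$ is vacuous.

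I would also record a short, self-contained derivation that avoids quoting \Cref{cor:upper-bound-gen}, since it exposes the mechanism directly. First observe that $\mathcal{R}(A)\subseteq\mathcal{R}(B)$ gives $~\Pi_{\mathcal{R}(B)}A=A$, so $\widetilde{B}=A$ in \cref{eqn:DGR-tilde-sys}; combined with $\widetilde{A}=0$ and $\alpha=0$ (whence $\Delta_0=0$), \Cref{lem:algo-dyn} reduces the closed-loop recursion to the memoryless form $\x_{t+1}=A\z_t$. Writing $\overline{\z}_t=\z_t/\|\z_t\|$ and using that $\z_t$ is an orthogonal projection of $\x_t$ (so $\|\z_t\|\le\|\x_t\|$), one gets the one-step estimate $\|\x_{t+1}\|=\|A\overline{\z}_t\|\,\|\z_t\|\le\|A\overline{\z}_t\|\,\|\x_t\|$.

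Iterating this inequality from index $t-1$ down to $0$ yields $\|\x_t\|\le\big(\prod_{r=0}^{t-1}\|A\overline{\z}_r\|\big)\|\x_0\|$. The final step invokes \Cref{lem:algo-dyn} once more: the family $\{\z_0,\dots,\z_{t-1}\}$ is orthogonal, so the normalized vectors $\{\overline{\z}_0,\dots,\overline{\z}_{t-1}\}$ form an orthonormal $t$-tuple, and the product $\prod_{r=0}^{t-1}\|A\overline{\z}_r\|$ is therefore dominated by the supremum defining $M_t(A)$ in \Cref{def:instNo}. This delivers $\|\x_t\|\le M_t(A)\|\x_0\|$.

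I do not expect a genuine obstacle, as this is a corollary; the main care is bookkeeping rather than depth. Two points warrant attention. First, the index matching in the reduction: one must confirm that the memory terms $a_{t-r}$ in \Cref{cor:upper-bound-gen} really do vanish, which holds precisely because the sum touches only indices $\ge 1$. Second, the degenerate case $\z_r=0$: if some $\z_r=0$ then $\x_{r+1}=A\z_r=0$ and the trajectory is identically zero thereafter, so the claimed inequality holds trivially; at the same time the convention $\overline{\z}_r=0$ makes the product vanish, keeping the two arguments consistent and removing the need for a full orthonormal $t$-tuple in that case.
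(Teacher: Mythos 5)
Your proposal is correct and its first paragraph is essentially the paper's own proof: the paper likewise observes that $\mathcal{R}(A)\subseteq\mathcal{R}(B)$ forces $\widetilde{A}=0$, hence $a_k=0$ for $k\geq 1$, and invokes \Cref{cor:upper-bound-gen}. Your supplementary self-contained derivation (iterating $\x_{t+1}=A\z_t$ and using the orthogonality of the $\z_r$ from \Cref{lem:algo-dyn} to match the supremum in \Cref{def:instNo}) is also sound and, while not in the paper at this point, faithfully exposes the mechanism behind the instability number.
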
{}
	
	\begin{proof} 
    	Note that $\widetilde{A} = ~\Pi_{\mathcal{R}(B)^\perp} A = 0$ whenever $\mathcal{R}(A) \subseteq \mathcal{R}(B)$.
    	The claim now follows by \Cref{cor:upper-bound-gen} since $\widetilde{A}\widetilde{B} = 0$ and $a_k = 0$ for all $k = 1, \dots, t$.
	\end{proof}
	
	{\color{PineGreen} Note that under the hypothesis of \Cref{cor:upperbound}, in particular $\x_{t+1} = 0$ for all $t \geq n$ whenever \ac{DGR} is in effect for the noiseless dynamics in \cref{eqn:sys-dynamic} (see \Cref{fig:algo}); however, this might happen even before $t$ reaches $n$ as will be discussed in \Cref{prop:zero}.}
	Also, the latter bound becomes more structured for a symmetric $A$ by combining the results from \Cref{cor:upperbound} and \Cref{lem:Mbound-gen} which we skip for brevity.
	
	\begin{remark}\label{rem:simple-upperbound}
		In order to further illustrate the bound stated in \Cref{cor:upperbound}, assume that $\delta e \leq 1$.
		Then, from \Cref{lem:Mbound-gen},
		\begin{gather*}
    	\frac{\|\x_t\|^2}{\|\x_0\|^2} \leq \left[\frac{\sigma_1^2}{t}\right]^t + \sum\limits_{j=1}^{\lfloor t/2 \rfloor} \left[\frac{\sigma_1^2 }{t-j} \right]^{t-j}\left[\frac{t}{j}\right]^j
    		+ \hspace{-0.3cm} \sum\limits_{j=\lfloor t/2 \rfloor + 1}^{t-1} \left[\frac{t \, \sigma_1^2 }{(t-j)^2} \right]^{t-j} \hspace{-0.3cm} +1,
		\end{gather*}
		\noindent where we have also used $\binom t j \leq (e \, t/j)^j$.
		This implies that as $t$ gets larger than $\sigma_1^2$, the terms with large powers admit smaller bases and those with large bases will gain smaller powers comparing to $\sigma_1^{2t}$.
		This is despite the fact that for small $t$, the relative norm of the state might grow.
	\end{remark}{}
	
	\begin{figure}
		\centering
		\includegraphics[width=0.7\columnwidth]{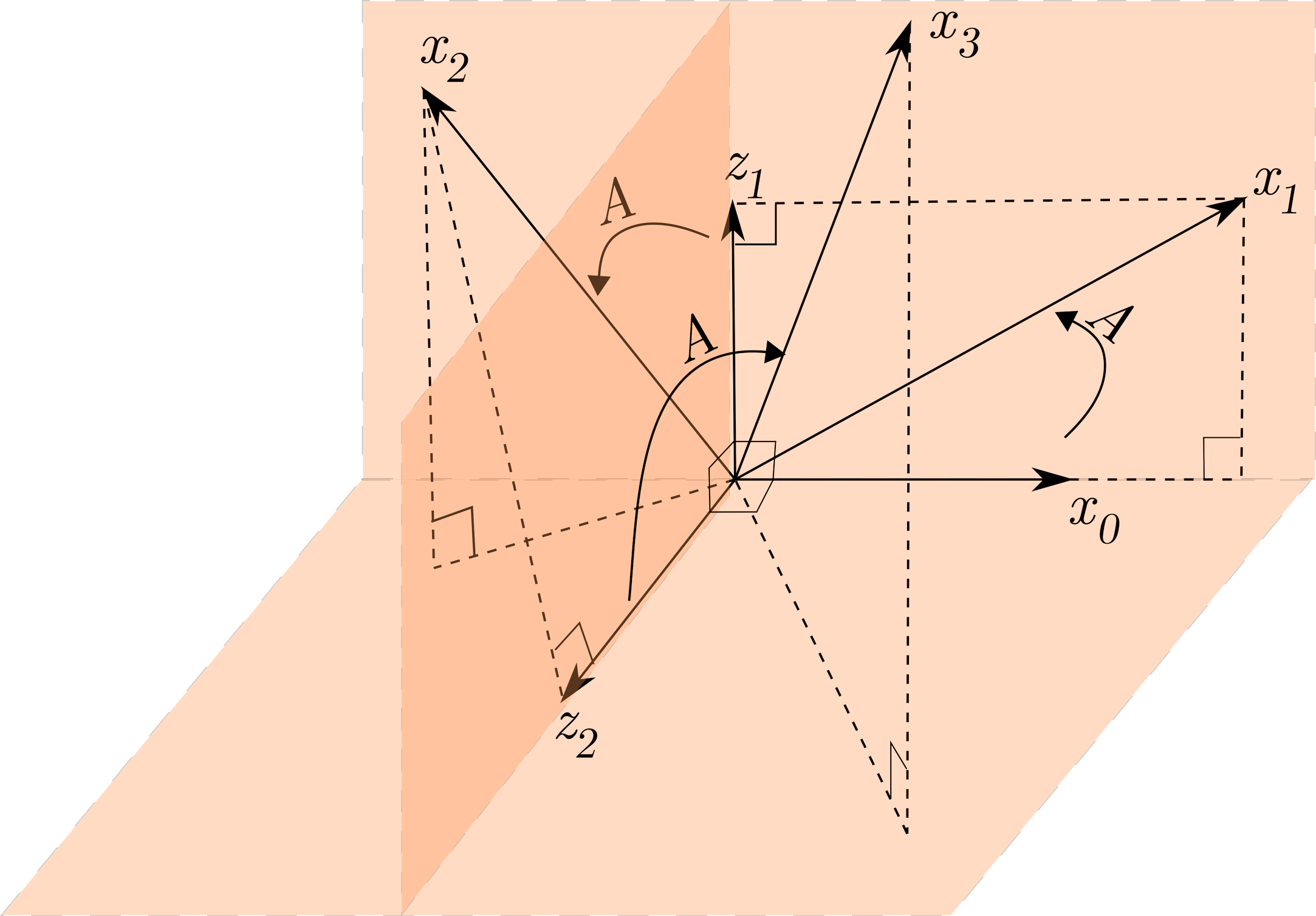}
		\caption{A geometric schematic of \ac{DGR} when $\mathcal{R}(A) \subseteq \mathcal{R}(B)$. Since $\z_0\coloneqq \x_0$, $\z_t\perp \mathcal{R}(\mathcal{X}_{t-1})$ and $\z_{t}\in\mathcal{R}(\mathcal{X}_{t})$ for $t=1,2$, the set $\{\z_0,\z_1,\z_2\}$ consists of orthogonal vectors.}
		\label{fig:algo}
		\vspace{-0.3cm}
	\end{figure}
	
	In the sequel, as a result of linear independence established in \Cref{thm:linIndp} we show how the simplified bounds (derived in \Cref{sec:results}) clarify the elimination of the unstable modes in the system.
	\begin{corollary}
		\label[corollary]{cor:linIndp-special}
		Suppose 
		$\mathcal{R}(A) \subseteq \mathcal{R}(B)$ and let $\x_0$ excite $k$ modes of $A$.
		If $r$ eigenvalues corresponding to the $k$ excited modes are distinct for some $r\leq k$, then $\{\x_0,\dots,\x_{r-1}\}$, generated by \Cref{alg:Controller} with $\alpha = 0$, is a set of linearly independent vectors.
	\end{corollary}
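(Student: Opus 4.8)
The plan is to specialize the argument of \Cref{thm:linIndp} to the regime $\mathcal{R}(A)\subseteq\mathcal{R}(B)$---where the projection onto $\mathcal{R}(B)$ acts as the identity on the image of $A$---and then to relax its distinct-eigenvalue hypothesis to the weaker requirement that merely $r$ of the excited eigenvalues be distinct. First I would record the structural simplification: since $\mathcal{R}(A)\subseteq\mathcal{R}(B)$ we have $\widetilde{A}=\Pi_{\mathcal{R}(B)^\perp}A=0$ and, crucially, $\Pi_{\mathcal{R}(B)}A=A$, because every column of $A$ already lies in $\mathcal{R}(B)$. Feeding this into \Cref{lem:algo-dyn} with $\alpha=0$ collapses the dynamics to $\x_{t+1}=A\z_t$, and the recursion \cref{eqn:xt-proof} from the proof of \Cref{thm:linIndp} reduces to $\x_t=A[\x_{t-1}-\sum_{j=0}^{t-2}\zeta_j^t\x_j]$, with no surviving $\mathcal{R}(B)^\perp$ component. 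Writing $\x_0=\sum_{\ell=1}^{k}\beta_\ell\u_\ell$ over the excited eigenpairs $(\lambda_\ell,\u_\ell)$ with all $\beta_\ell\neq0$, a short induction mirroring that in the proof of \Cref{thm:linIndp} then yields $\x_t=\sum_{\ell=1}^{k}\beta_\ell\,p_t(\lambda_\ell)\,\u_\ell$ with $p_t(\lambda)=\lambda^t-\sum_{i=1}^{t-1}\xi_i^t\lambda^i$; this is precisely formula \cref{eqn:xspan-gen} with the second group of modes absent, i.e.\ with $k_1=k$ and $k_2=0$.

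Next I would carry out the linear-independence step. Setting $\sum_{j=0}^{r-1}\alpha_j\x_j=0$, substituting the closed form and regrouping by $\u_\ell$, the linear independence of the eigenvectors (guaranteed by diagonalizability of $A$) together with $\beta_\ell\neq0$ forces $\sum_{j=0}^{r-1}\alpha_j\,p_j(\lambda_\ell)=0$ for every $\ell$. Collecting powers of $\lambda_\ell$ and exchanging the order of summation exactly as in the proof of \Cref{thm:linIndp} rewrites this system as $L_k^r(A)(I-\Xi)\oalpha=0$, where $I-\Xi$ is invertible by its triangular construction. Because the $k$ excited modes exhibit $r$ distinct eigenvalues, the remark preceding \Cref{thm:linIndp} ensures $L_k^r(A)$ has full column rank; hence $(I-\Xi)\oalpha=0$ and therefore $\oalpha=0$, which proves that $\{\x_0,\dots,\x_{r-1}\}$ is linearly independent. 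Equivalently, $q(\lambda):=\sum_j\alpha_j p_j(\lambda)$ has degree at most $r-1$ yet vanishes at $r$ distinct points, so $q\equiv0$, and the triangular basis $\{p_0,\dots,p_{r-1}\}$ again forces $\oalpha=0$.

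The step needing the most care is the possible zero-eigenvalue mode: when $\lambda_\ell=0$ its eigenvector need not lie in $\mathcal{R}(B)$, so one cannot simply declare all excited modes to be in $\mathcal{R}(B)$ and cite \Cref{thm:linIndp} verbatim. The resolution is that the closed-form induction only ever invokes $\Pi_{\mathcal{R}(B)}A\u_\ell=\lambda_\ell\u_\ell$, which holds for every excited mode---nontrivially when $\lambda_\ell\neq0$ (then $\u_\ell\in\mathcal{R}(A)\subseteq\mathcal{R}(B)$) and trivially when $\lambda_\ell=0$ (both sides vanish). A zero mode therefore contributes only to $\x_0$ through $p_0\equiv1$ and drops out of every $\x_t$ with $t\geq1$ since $p_t(0)=0$, leaving the Vandermonde structure untouched. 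The only other departure from \Cref{thm:linIndp}, namely allowing repeated eigenvalues, is absorbed automatically: duplicated eigenvalues merely repeat rows of $L_k^r(A)$, and the presence of $r$ distinct values among them still guarantees the full column rank needed to conclude.
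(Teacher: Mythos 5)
Your proof is correct and follows essentially the same route as the paper: specialize the induction of \Cref{thm:linIndp} to the case $k_1=k$, $k_2=0$ (so $\widetilde{A}=0$ and \cref{eqn:xspan-gen} loses its second sum), reduce the dependence relation to $L_k^r(A)(I-\Xi)\oalpha=0$, and conclude from the full column rank of $L_k^r(A)$ together with the invertibility of $I-\Xi$. Your extra care with a possible zero-eigenvalue mode---whose eigenvector need not lie in $\mathcal{R}(B)$---is a point the paper's proof glosses over when it asserts that all excited modes are contained in $\mathcal{R}(B)$, but as you observe it does not change the argument since only the identity $\Pi_{\mathcal{R}(B)}A=A$ is actually used.
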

	
	\begin{proof}
		Given that $\mathcal{R}(A) \subseteq \mathcal{R}(B)$, all the modes of $A$ are contained in $\mathcal{R}(B)$, so without loss of generality, let $\lambda_1,\dots, \lambda_{k}$ be the eigenvalues corresponding to the excited modes $u_1, \dots, u_{k} \in \mathcal{R}(B)$.
		Then, following the proof of \Cref{thm:linIndp}, \Cref{eqn:xspan-gen} reduces to,
		\begin{equation*}
        \textstyle
        \x_t =  \sum_{\ell=1}^{k} \beta_\ell \left[ (\lambda_\ell)^t  -  \sum_{j=1}^{t-1} \xi^{t}_{j}  (\lambda_\ell)^j   \right]\u_\ell.
        \end{equation*}
        Now, let $\widehat{\x} = \sum_{j= 0}^{r-1} \gamma_j \x_j$ for some $\gamma_j \in \mathbb{C}$ and $r\leq k$.
        Then following the same argument as in the proof of \Cref{thm:linIndp} about $\widehat{\x}$, \cref{eqn:gamma-zero} reduces to $L_{k}^r(A)(I-\Xi) \ogamma= 0$,
        with similar definitions of $\Xi$ and $\ogamma$ as in \cref{eqn:LA}, and $L_{k}^r(A)$ as in \cref{eq:xi-gamma-def}.
        Since $r$ eigenvalues corresponding to $k$ excited modes are distinct, $L_k^{r}(A)$ has full column rank.
        As $I-\Xi$ is invertible, we conclude that $\ogamma = 0$ meaning that $\{\x_0, \dots, \x_{r-1}\}$ are linearly independent.
	\end{proof}

	An immediate consequence of the above corollary is that \ac{DGR} generates data that is effective for simultaneous identification of modes even with multiplicity greater than one.
	 
	\begin{proposition}
	    \label[proposition]{prop:zero}
		Suppose that $A$ is diagonalizable with $\mathcal{R}(A) \subseteq \mathcal{R}(B)$, and
		let $\x_0$ excite $k$ modes of $A$ corresponding to $r$ distinct eigenvalues (where possibly, $r \leq k$).
		Then, in exactly $r$ iterations {of \Cref{alg:Controller} with $\alpha = 0$}, $\mathrm{span}\{\x_0,\dots,\x_{r-1}\}$ coincides with the subspace containing these excited modes; furthermore, $\x_{r +1}=0$.
	\end{proposition}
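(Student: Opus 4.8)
The plan is to combine the explicit modal expansion of the trajectory derived in the proof of \Cref{cor:linIndp-special} with the linear independence established there, reducing both assertions to a dimension count followed by a one-line orthogonality argument.

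First I would simplify the dynamics. Since $\mathcal{R}(A)\subseteq\mathcal{R}(B)$ forces $\widetilde{A}=\Pi_{\mathcal{R}(B)^\perp}A=0$ and $\Pi_{\mathcal{R}(B)}A=A$, \Cref{lem:algo-dyn} with $\alpha=0$ (so $\Delta_0=0$) collapses the \ac{DGR} recursion to $\x_{t+1}=A\z_t$, where $\z_t=\Pi_{\mathcal{R}(\mathcal{X}_{t-1})^\perp}\x_t$. Writing $\x_0=\sum_{\ell=1}^{k}\beta_\ell\u_\ell$ in the excited eigenbasis and grouping the $k$ excited modes according to their $r$ distinct eigenvalues $\mu_1,\dots,\mu_r$, I would define the aggregated eigenvectors $\bm{p}_s\coloneqq\sum_{\ell:\,\lambda_\ell=\mu_s}\beta_\ell\u_\ell$, each a nonzero eigenvector of $A$ for $\mu_s$. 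As the $\mu_s$ are distinct, $\{\bm{p}_1,\dots,\bm{p}_r\}$ is linearly independent, so $\mathcal{V}\coloneqq\mathrm{span}\{\bm{p}_1,\dots,\bm{p}_r\}$ has dimension $r$; this $\mathcal{V}$ is the smallest $A$-invariant subspace containing $\x_0$ and is exactly the subspace carrying the excited modes. Because the bracketed coefficient in the formula $\x_t=\sum_{\ell=1}^{k}\beta_\ell\big[(\lambda_\ell)^t-\sum_{j=1}^{t-1}\xi^t_j(\lambda_\ell)^j\big]\u_\ell$ from \Cref{cor:linIndp-special} depends on $\ell$ only through $\lambda_\ell$, it regroups into $\x_t=\sum_{s=1}^{r}\big[(\mu_s)^t-\sum_{j=1}^{t-1}\xi^t_j(\mu_s)^j\big]\bm{p}_s$, showing $\x_t\in\mathcal{V}$ for every $t$.

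For the first claim, \Cref{cor:linIndp-special} already guarantees that $\{\x_0,\dots,\x_{r-1}\}$ is linearly independent under the present hypotheses. These are $r$ linearly independent vectors sitting inside the $r$-dimensional space $\mathcal{V}$, so they form a basis of $\mathcal{V}$ and $\mathrm{span}\{\x_0,\dots,\x_{r-1}\}=\mathcal{V}$, i.e.\ the span coincides with the subspace of excited modes. For the second claim, observe that $\x_r\in\mathcal{V}=\mathrm{span}\{\x_0,\dots,\x_{r-1}\}=\mathcal{R}(\mathcal{X}_{r-1})$, whence its orthogonal-complement component vanishes, $\z_r=\Pi_{\mathcal{R}(\mathcal{X}_{r-1})^\perp}\x_r=0$. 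Substituting into the simplified dynamics yields $\x_{r+1}=A\z_r=0$, as required.

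I expect the only genuinely delicate point to be the identification of ``the subspace containing these excited modes'' with the $r$-dimensional cyclic subspace $\mathcal{V}$ in the case $r<k$: the individual excited eigenvectors $\u_\ell$ are then not separately recoverable, but their eigenvalue-aggregated components $\bm{p}_s$ are precisely what the trajectory explores, and the dimension count forces $\mathrm{span}\{\x_0,\dots,\x_{r-1}\}$ to exhaust $\mathcal{V}$. Everything else is routine bookkeeping inherited from \Cref{cor:linIndp-special} together with the one-line reduction of the dynamics under $\mathcal{R}(A)\subseteq\mathcal{R}(B)$.
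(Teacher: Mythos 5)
Your proof is correct and follows essentially the same route as the paper: both define the $r$-dimensional subspace spanned by the eigenvalue-aggregated vectors $\sum_{\ell:\lambda_\ell=\mu_s}\beta_\ell\u_\ell$, show the trajectory remains in it, invoke the linear independence from \Cref{cor:linIndp-special} for the dimension count, and conclude $\z_r=0$ hence $\x_{r+1}=A\z_r=0$. The only cosmetic difference is that you establish $\x_t\in\mathcal{V}$ by regrouping the explicit modal expansion, whereas the paper does it by induction using the $A$-invariance of the subspace and the recursion $\x_t=A\z_{t-1}$; both are valid.
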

	
	\begin{proof}
		 Without loss of generality, let $\x_0$ excite the $k$ modes of $A$ corresponding to $\lambda_1, \cdots, \lambda_{r}$.
		Since $A$ is diagonalizable, let $A = U \Lambda U^{-1}$ be its eigen-decomposition and so $\x_0$ excite $\{\u_1,\cdots, \u_k\}$, i.e., $\x_0=\sum_{i=1}^k \beta_i \u_i$, with $\beta_i\not=0$.
		Define
		\begin{align*}
		    \mathcal{I}(\lambda_i) = \left\{j: \u_j~\text{is the eigenvector corresponding to}~\lambda_i\right\},
		\end{align*}
		for $i=1,\cdots, r$.
		Furthermore, define the $r$-dimensional subspace,
		\begin{align*}
		\textstyle
		    S \coloneqq \mathrm{span}\left\{\sum_{j\in \mathcal{I}(\lambda_1)}\beta_j \u_j,\ \cdots, \sum_{j\in \mathcal{I}(\lambda_{r})} \beta_j \u_j \right\},
		\end{align*}
		where the span is taken over the complex field.
		We prove by induction that $\x_t \in S$ for all $t = 1,\cdots,r$.
		Notice that $\x_0 \in S$ and suppose that $\{\x_0,\dots,\x_{t-1}\} \subset S$; recall from the proof of \Cref{cor:linIndp-special} that $\x_t = A\z_{t-1}$, where $\z_{t-1} =  ~\Pi_{\mathcal{R}(\mathcal{X}_{t-2})^\perp} (\x_{t-1})$.
		Since $\x_{t-1}\in S$ and $\mathrm{span}\{\x_0,\dots,\x_{t-2}\} \subset S$, one can conclude that $\z_{t-1}\in S$, and from the definition of $S$, $\x_{t}=A\z_{t-1}\in S$.
		\noindent On the other hand, 
		since $\lambda_1, \lambda_2, \cdots, \lambda_r$ are distinct eigenvalues, by \Cref{cor:linIndp-special}, $\textbf{dim}\left(\mathrm{span}\{\x_0,\dots,\x_{r-1}\}\right) = r$.
		By hypothesis of the induction $\mathrm{span}\{\x_0,\dots,\x_{r-1}\} \subset S$, and since $\textbf{dim}(S)=r$, we conclude that $\mathrm{span}\{\x_0,\dots,\x_{r-1}\}$ must be the entire $S$, i.e. $\mathrm{span}\{\x_0,\dots,\x_{r-1}\} = S$, proving the first claim.
		Lastly, $\z_r =~\Pi_{\mathcal{R}(\mathcal{X}_{r-1})^\perp}(\x_r)=0$ since $\x_r \in S$, and thus $\x_{r+1} = A \z_r = 0$, thereby completing the proof.
		\end{proof}
    
    Following \Cref{prop:zero}, if $\x_0$ excites $k$ modes of the system corresponding to distinct eigenvalues with trivial algebraic multiplicities, then \Cref{alg:Controller} identifies all the excited modes of the system in $k$ iterations. Furthermore, this implies that $\x_{k +1}=0$, 
    i.e., \ac{DGR} eliminates the unstable modes and regulates the unknown system in exactly $k+1$ iterations.
    {\color{PineGreen} This also implies that, if $k< n$ then online regulation of the system is achieved, even before enough data is available for full identification of system parameters.}
    

\subsection{\color{PineGreen} Boosting the Performance of \ac{DGR}}
\label{sec:complexity}

\ac{DGR} as introduced in \Cref{alg:Controller} can become computationally burdensome for large-scale systems.
This is mainly due to storing the entire history of data in $\mathcal{X}_t$ and $\mathcal{Y}_t$ followed by the update of the controller that finds the pseudoinverse as well as multiplication of these data matrices (steps 7-9).
Assuming the \ac{SVD}-based computation of pseudoinverse, the complexity of the method is\footnote{The multiplication $\mathcal{Y}_{t+1} \mathcal{X}_t^\dagger$ requires another $\mathcal{O}(n^2t)$ operations that can be significant for large $n$.} $\mathcal{O}(n^2 t)$.
In this section, we show that such computational burden can be circumvented using rank-1 modifications of data matrices as a result of the discrete nature of data collection in our setup.
Note that for computing $K_{t+1}$  from \eqref{eqn:opt-input}
we only need to access
$\mathcal{Y}_{t+1} \mathcal{X}_t^\dagger$
(rather than $\mathcal{X}_t^\dagger$).
To this end, we leverage the results of \cite{Meyer1973generalized} in order to find $\mathcal{Y}_{t+1} \mathcal{X}_t^\dagger$ recursively as a function of $\mathcal{Y}_{t} \mathcal{X}_{t-1}^\dagger$, $\mathcal{X}_{t-1} \mathcal{X}_{t-1}^\dagger$, and $\x_t$.

\begin{proposition}
    \label[proposition]{prop:pseudo}
    
    Let $\mathcal{X}_{t-1}$ be as in \Cref{alg:Controller}, $\x_t$ be the state measurement at iteration $t$ and $\z_t = ~\Pi_{\mathcal{R}(\mathcal{X}_{t-1})^\perp} \x_t~$.
    If $\x_t\not\in\mathcal{R}(\mathcal{X}_{t-1})$ then
    \begin{equation}
        \label{eq:pseudo_informative}
		\scalemath{0.9}{\mathcal{X}_t^\dagger = \left(\begin{array}{*1{c}}
			\mathcal{X}_{t-1}^\dagger - \ogamma_t \z_t^\dagger \\ \z_t^\dagger
		\end{array}\right)};
	\end{equation}
	otherwise,
	\begin{equation}
	    \label{eq:pseudo_noninformative}
		\scalemath{0.9}{\mathcal{X}_t^\dagger = \left(\begin{array}{*1{c}}
			\mathcal{X}_{t-1}^\dagger - \epsilon_t \ogamma_t \ozeta_t^\intercal \\ \epsilon_t \ozeta_t^\intercal
		\end{array}\right)},
	\end{equation}
    where $\epsilon_t\in\mathbb{R}$, $\ogamma_t\in\mathbb{R}^t$, and $\ozeta_t\in\mathbb{R}^n$ are defined as,
    \begin{equation}
        \label{eq:fast_params}
        \begin{aligned}
            \epsilon_t \coloneqq \frac{1}{\|\ogamma_t\|^2+1}, \hspace{2mm} \ogamma_t \coloneqq \mathcal{X}_{t-1}^\dagger \x_t, \hspace{2mm}  \ozeta_t  \coloneqq \big( \mathcal{X}_{t-1}^\dagger \big)^\intercal \ogamma_t.
        \end{aligned}
    \end{equation}
\end{proposition}

\begin{proof}
    Rearrange $\mathcal{X}_t$ into,
    \begin{align*}
        \mathcal{X}_t = \left(\begin{array}{*2{c}}
            \mathcal{X}_{t-1} & 0
        \end{array}\right) + \x_t \e_{t+1}^\intercal.
    \end{align*}
    Then, it is implied from Theorem 1 in \cite{Meyer1973generalized} that
    \begin{align*}
		\mathcal{X}_t^{\dagger} = \left(\begin{array}{*2{c}}
		    \mathcal{X}_{t-1} & 0
		\end{array}\right)^\dagger
		+ \left[ \e_{t+1}-\left(\begin{array}{*2{c}}
		    \mathcal{X}_{t-1} & 0
		\end{array}\right)^\dagger \x_t \right] \z_t^\dagger,
	\end{align*}
	whenever $\x_t\not\in\mathcal{R}(\mathcal{X}_{t-1})$.
	Hence, by leveraging the SVD of $\mathcal{X}_{t-1}$ and definition of pseudoinverse we get
	\begin{gather*}
	    \mathcal{X}_t^{\dagger} = \left(\begin{array}{c}
			\mathcal{X}_{t-1}^\dagger \\ 0
		\end{array}\right) + \Big[ \e_{t+1} - \left(\begin{array}{c}
			\mathcal{X}_{t-1}^\dagger \\ 0
		\end{array}\right) \x_t \Big] \z_t^\dagger = \left(\begin{array}{c}
			\mathcal{X}_{t-1}^\dagger - \ogamma_t \z_t^\dagger \\ \z_t^\dagger
		\end{array}\right).
	\end{gather*}
	For the case when $\x_t\in\mathcal{R}(\mathcal{X}_{t-1})$, Theorem 3 in \cite{Meyer1973generalized} gives,
	\begin{align*}
	\scalemath{0.9}{
        \mathcal{X}_t^{\dagger} = \left(\begin{array}{c}
		    \mathcal{X}_{t-1}^\dagger \\ 0
		\end{array}\right) + \e_{t+1} \ogamma_t^\intercal \mathcal{X}_{t-1}^\dagger - \frac{1}{\sigma} \p  \q^\intercal,}
    \end{align*}
	where,
    \(\sigma = \|\ogamma_t\|^2 + 1, \; \p=-\|\ogamma_t\|^2 \e_{t+1} - \scalemath{0.8}{\left(\begin{array}{c}
		    \ogamma_t \\ 0
		\end{array}\right)}, \; \q = -\ozeta_t.\)
    The rest of the proof follows from rearranging the terms and using the identities in \eqref{eq:fast_params}.
\end{proof}
\noindent As mentioned earlier, the update of the controller requires $\mathcal{Y}_{t}\mathcal{X}_{t-1}^{\dagger}$ that could become prohibitive for large $n$.
However, we can take advantage of \Cref{prop:pseudo} to find this term recursively in order to avoid memory usage as well as computational burden.

\begin{theorem}
    \label{thm:recursive_DGR}
    Let $\mathcal{X}_{t-1}$ be as in \Cref{alg:Controller} and $\x_t$ be the state measurement collected at $t$.
    For $t>0$, define $\mathcal{P}_{t-1} \coloneqq \mathcal{X}_{t-1} \mathcal{X}_{t-1}^\dagger$, $\mathcal{Q}_{t-1} \coloneqq \mathcal{Y}_{t} \mathcal{X}_{t-1}^\dagger$ and $\z_t \coloneqq ~\Pi_{\mathcal{R}(\mathcal{X}_{t-1})^\perp} \x_t~$.
    Then
    \begin{align*}
        \z_t &= \big[ \mathrm{I} - \mathcal{P}_{t-1} \big] \x_t,\\
        \mathcal{Q}_t &= \mathcal{Q}_{t-1} + [\x_{t+1} - B\u_t  - \mathcal{Q}_{t-1} \x_t] \z_t^\dagger, \\
        \mathcal{P}_t &= \mathcal{P}_{t-1} + \z_t \z_t^\dagger .
    \end{align*}
\end{theorem}

\begin{proof}
    The expression for $\z_t$ follows directly by properties of pseudoinverse.
    Next, observing that $\x_{t+1} - B\u_t = A \x_t$ and so $\mathcal{Q}_t = A\mathcal{P}_t$, the recursive relation for $\mathcal{Q}_t$ can be derived from the one for $\mathcal{P}_t$.
    Finally, \Cref{prop:pseudo} implies that, if $\x_t\not\in\mathcal{R}(\mathcal{X}_{t-1})$ then
    \begin{gather*}
        \begin{aligned}
            \mathcal{P}_t = \mathcal{X}_t \mathcal{X}_t^\dagger &= \left(\begin{array}{cc}
                \mathcal{X}_{t-1} & \x_t
            \end{array}\right) \left(\begin{array}{c}
    			\mathcal{X}_{t-1}^\dagger - \ogamma_t \z_t^\dagger \\ \z_t^\dagger
    		\end{array}\right) = \mathcal{P}_{t-1} + \z_t \z_t^\dagger;
        \end{aligned}
    \end{gather*}
    otherwise, $\mathcal{P}_{t} = \mathcal{P}_{t-1}$ because $\mathcal{X}_{t-1} \ogamma_t = \mathcal{X}_{t-1} \mathcal{X}_{t-1}^\dagger \x_t = \x_t$.
    But in this case,  $\z_t=(\mathrm{I}-\mathcal{P}_{t-1})\x_t=0$ and therefore, the same recursion holds.
\end{proof}

\noindent Given the recursions introduced in \Cref{thm:recursive_DGR}, the refined (fast) version of \ac{DGR} is displayed in \Cref{alg:Controller_fast}.
At each iteration $t$, we update $\mathcal{Q}_t$ based on the information from the new data and the projection $\mathcal{P}_{t-1}$ (hidden in $\z_t^\dagger$), which itself gets updated as a part of the recursion.
The $n \times n$ matrix $\mathcal{Q}_t$ is then employed for the controller's update.
Notice that $\z_t$ is the same as in \Cref{lem:algo-dyn}, however, here we compute it using $\mathcal{P}_{t-1}$ which is obtained recursively.

\begin{algorithm}[b]
	\caption{\acf{F-DGR}}
	\begin{algorithmic}[1]
		\State \textbf{Initialization}
		\State \hspace{5mm} Measure $\x_0$, set $K_0 = 0$ and {$G_\alpha = [\alpha I + B^\intercal B]^\dagger B^\intercal$}
        \State \hspace{5mm} Set $\mathcal{P}_{-1} = \mathcal{Q}_{-1} = \mathbf{0}$ and $t = 0$
		\State \textbf{While stopping criterion not met}
		\State \hspace{5mm} Compute $\u_t = -K_t \x_t$
		\State \hspace{5mm} Run system \eqref{eqn:sys-dynamic} and measure $\x_{t+1}$
		\State \hspace{5mm} Set \hspace{1mm} $\z_t = [\mathrm{I}-\mathcal{P}_{t-1}]\x_t$
		\State \hspace{12mm} $\mathcal{Q}_t = \mathcal{Q}_{t-1} + [\x_{t+1} - B\u_t - \mathcal{Q}_{t-1} \x_t]\z_t^\dagger$
		\State \hspace{12.5mm} $\mathcal{P}_t = \mathcal{P}_{t-1} + \z_t\z_t^\dagger$
		\State \hspace{12.5mm} $K_{t+1} =  G_\alpha \mathcal{Q}_t$
		\State \hspace{5mm} $t = t+1$
	\end{algorithmic}
	\label{alg:Controller_fast}
\end{algorithm}


{\color{violet}
Next, we discuss the convergence of \ac{DGR} algorithm in the following. Note that, by \Cref{thm:recursive_DGR}, \ac{DGR} and \ac{F-DGR} are equivalent, and the following corollary provides a useful necessary and sufficient conditions for the convergence of $\mathcal{Q}_t$.
\begin{corollary}\label[corollary]{cor:convergence}
    Let $\mathcal{Q}_t$ be as defined in \Cref{alg:Controller_fast} and $\mathcal{X}_{t-1}$ as in \Cref{alg:Controller}. For any $t>0$, if $\x_{t+k} \in \mathcal{R}(\mathcal{X}_{t-1})$ for all $k \geq 0$, or $\mathcal{R}(A^\intercal) \subseteq \mathcal{R}(\mathcal{X}_{t-1})$ then \Cref{alg:Controller_fast} has converged at $t$, i.e., $\mathcal{Q}_{t+k} = \mathcal{Q}_{t-1}$ for all $k \geq 0$.
    Conversely, if \Cref{alg:Controller_fast} has converged at some $t>0$, then for each $k \geq 0$, we must have $\x_{t+k} \in \mathcal{R}(\mathcal{X}_{t+k-1})$ unless $\mathcal{R}(A^\intercal) \subseteq \mathcal{R}(\mathcal{X}_{t+k-1})$.
\end{corollary}
}
{\color{violet}
\begin{proof}
By \Cref{thm:recursive_DGR}, we know that 
\[\mathcal{Q}_t - \mathcal{Q}_{t-1} = A (\mathcal{P}_t - \mathcal{P}_{t-1}) = A \z_t \z_t^\dagger.\]
Therefore, \Cref{alg:Controller_fast} has converged at $t$ if and only if $A \z_{t+k} \z_{t+k}^\dagger = 0$ for all $k \geq 0$. 
And the latter statement is valid if and only if, for each $k \geq 0$, either $\z_{t+k} = 0$ or $\z_{t+k} \in~\mathcal{N}(A)$, which is equivalent to either $\x_{t+k} \in~\mathcal{R}(\mathcal{X}_{t+k-1})$ or $\mathcal{R}(A^\intercal) \subseteq \mathcal{R}(\mathcal{X}_{t+k-1})$.
\end{proof}

A simple implication of the preceding corollary is that \Cref{alg:Controller_fast} converges as soon as the collected data $\mathcal{X}_{t-1}$ is \emph{rich} enough. For instance, in the worst case scenario--when $A \in \mathbb{R}^{n\times n}$ has no zero eigenvalues and all of its modes are excited--the algorithm converges as soon as $n$ linearly independent state measurements have been collected from the noise-less dynamics in \cref{eqn:sys-dynamic}. From this point on, $\u_t$'s as computed in \ac{DGR} and \ac{F-DGR} coincide with the  solution of \cref{eqn:optimization}. However, the convergence of the controller in \ac{DGR} and \ac{F-DGR} may happen earlier in the process whenever the future state measurements lie in the range of previous ones. For example, under hypothesis of \Cref{prop:zero}, when $\x_0$ excites $k$ modes of $A$ corresponding to $r \leq k < n$ distinct eigenvalues, then both \ac{DGR} and \ac{F-DGR} converge in $r$ iterations. In this case, the proposed controller may not coincide with the actual solution of the optimization in \cref{eqn:optimization}. Nonetheless, the online regulation of the system is guaranteed in general by \Cref{thm:recursion}. Note that, in presence of process noise in the dynamics \cref{eqn:sys-dynamic}, \Cref{cor:convergence} is not valid and the convergence behavior of \ac{DGR} (and \ac{F-DGR}) will be dictated by the noise stochastics (see \Cref{fig:convergence} in \Cref{sec:simulation}).
}

Finally, $\z_t$ reflects the informativity of the newly generated data $\x_t$.
In fact, based on its definition, $\mathcal{Q}_t$ provides an estimate of $A$ up to iteration $t$.
Hence, the update of $\mathcal{Q}_t$ as in \Cref{alg:Controller_fast} essentially adjusts the prior estimate of $A$ based on the new information encoded in the term $A \z_t\z_t^\dagger$.
%
%
All in all, the machinery provided in this section circumvents the computational load of finding 
pseudoinverses by leveraging the recursive nature of the solution methodology.

\section{Simulations}
\label{sec:simulation}

    In order to showcase the advantages of the proposed method in practical settings, 
    we have implemented \ac{DGR} on data collected from the X-29A aircraft.
    The Grumman X-29A is an experimental aircraft initially tested for its forward-swept wing;
    it was designed with a high degree of longitudinal static instability (due to the location of the aerodynamic center on the wings) for maneuverability, where linear models were leveraged to determine the closed-loop stability (\Cref{fig:x-29real}).
    The primary task of the control laws is to stabilize the longitudinal motion of the aircraft.
    To this end, the dynamic elements of the flight control system is designed for two general modes: 1) the Normal Digital Powered Approach (ND-PA) used in the takeoff and landing phase of the flight, and 2) the Normal Digital Up-and-Away (ND-UA) when otherwise.

   For both flight modes, we study the case where the dynamics of the aircraft has been perturbed and unknown.
   This can be due to a mis-estimation of system parameters and/or any unpredicted flaw in the flight dynamics due to malfunction/damage.
   In this setting, the control laws designed for the original system fail and the system can become highly unstable.
   We then let \ac{DGR} regulate the system; in this case, since the aircraft continues to operate safely, one can use any data-driven identification, stabilization, or robust control methods once enough data has been collected.
    %

    \begin{figure}[t]
        \centering
        \includegraphics[width=1\columnwidth]{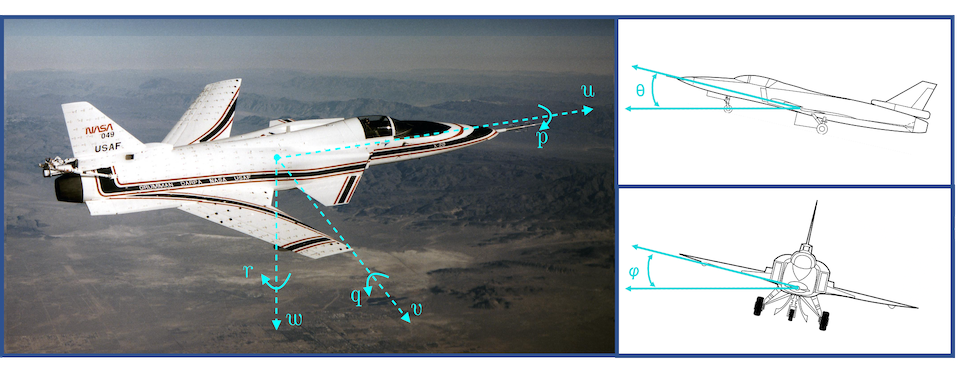}
        \caption{Grumman X-29A \textit{(Credits: NASA Photo)}, mainly known for its extreme instability while providing high-quality maneuverability; the longitudinal and lateral-directional states are illustrated.}
        \label{fig:x-29real}
        \vspace{-0.3cm} 
    \end{figure}

     The longitudinal and lateral-directional dynamics each contains 4 states (see \Cref{fig:x-29real}). The nominal system parameters in each operating mode are obtained from Tables 9-10 and 13-14 in \cite{bosworth1992linearized} (with fixed  discretization step-size $0.05$), whereas perturbation $\Delta A$ is assumed to shift the dynamics to,
    \[\x_{t+1} = (A+\Delta A) \x_t + B \u_t+\oomega_t,\]
    where the elements of $\Delta A$ are sampled from a normal distribution  $\mathcal{N}(0,0.05)$, and $\oomega_t\sim \mathcal{N}(0,0.01)$ denotes the process noise.
   Note that even though the nominal dynamics is known in this example, the proposed machinery makes no such \textit{a priori} estimate, and assumes a completely unknown dynamics $A_{\text{new}} \coloneqq A+\Delta A$.
   {\color{violet}
   The original controller for the unperturbed system in each mode is assumed to be a closed-loop infinite horizon \ac{LQR} with state and input weights $Q=I$ and $R =10^{-7}$.
   }
    
    We now aim to regulate the unstable system $A_{\text{new}}$ from random initial states (where each state is sampled from $\mathcal{N}(0, 10.0)$).
    Note that both the original system and the perturbed system have effective input characteristics that make them  regularizable (with $\rho(\widetilde{A}) = 0.998 $ and $\rho(\widetilde{A}_\text{new}) = 0.927 $ for ND-PA mode, and $\rho(\widetilde{A}) = 0.998 $ and $\rho(\widetilde{A}_{\text{new}}) = 0.932 $ for ND-UA mode).
	The resulting state trajectories for ND-PA and ND-UA modes are demonstrated in Figures~\ref{fig:x-29-ND-PA} and \ref{fig:x-29-ND-UA}, respectively.
	Without \ac{DGR}, the norm of the state $\|\x_t\|$ would grow rapidly (red curve) as the unknown system is unstable and the original control laws fail.\footnote{Since the LQR solution, in general, may have small stability margins for general parameter perturbations \cite{zhou1996robust}.}
	As the plots suggest, with \ac{DGR} in the feedback loop 
	{\color{violet}(with the choice of $\alpha = 5\times 10^{-7}$)},\footnote{The positive choice for $\alpha$ adjusts the compromise between state regulation and reducing the 2-norm of the input. This may lead to a larger upper bound on the state regulation specially when the system is unstable.} the unstable modes can be suppressed resulting in stabilization of the system (the norm of the states in this case is demonstrated in black and each state is depicted in faded color).

	Up to iteration $t=36$ for longitudinal and $t=30$ for lateral directional dynamics  (shown with vertical dashed-line), enough data is generated in order to estimate the new system dynamics, or apply any other data-driven control using the data, (safely) generated by \ac{DGR} up to this point.
	{\color{PineGreen} In what follows, we first showcase the complementary utility of DGR for identification-and-control; we then illustrate how it can also 
	be incorporated for data-driven control.
	}

	\textcolor{violet}{In particular, the data is informative enough to identify system parameters through least squares denoted by $\hat{A}$.
	Therefore, one stopping criterion--which is also used here--is the point where the estimate of system parameters $\hat{A}$ has converged.
	Then, one can replace \ac{DGR} with a closed-loop infinite horizon \ac{LQR} controller with some cost-weights $Q$ and $R$ which is obtained using the new estimate of the system dynamics. Here we set $Q=I$ and $R = 10^{-7}$ in order to make it comparable to the one-step quadratic cost used for \ac{DGR}.
	In contrast to the original unstable \ac{LQR} controller (red curve), it is shown that the new LQR controller for $\hat{A}$ (blue curve) is stabilizing since we now have a more accurate estimate of the (perturbed) system parameters using the data generated safely by \ac{DGR} in the loop.}

	{\color{PineGreen} Next, while the \ac{DGR} is still in effect, the generated data matrix is not ill-conditioned and thus can be utilized to implement a data-driven control algorithm from that point onward.
	Due to the presence of noise and uncertainty, we have implemented the regularized version of Data-driven \ac{MPC} as in \cite{coulson2019deepc, berberich2021data} with parameters $T_{\text{ini}} = 1$, $N=4$, $Q = 400 I$, $R = 0.05 I$, $\lambda_\sigma =10^4$ and $\lambda_g =1$ for both dynamics, where the input is persistently exciting.
	With \ac{DGR}, after enough data has been generated for each dynamics, the data-driven \ac{MPC} algorithm 
	is initiated; the norm of the corresponding state vector is depicted in yellow dash-dotted line labeled as ``\ac{DGR}+DeePC.''

	On the other hand, one could consider implementing the data-driven \ac{MPC} without DGR. {\color{PineGreen} However, this would require offline data which is not available a priori. Nonetheless, just for the purpose of comparison, this has been implemented based on {\em offline data} obtained from the original unstable plant. The resulting norm of the state has been depicted in cyan labeled as ``Offline data+DeePC''. Due to the ill-posedness of the data matrix resulting from an unstable plant, it is observed that the practical tuning of the parameters could be problematic. This is due to the fact that  maintaining the stability and feasibility of the resulting convex optimization problems is challenging due to conditioning in the dataset (for similar observations see for example~\cite{de2019formulas}). These trajectories are terminated whenever the corresponding optimization problem was not numerically solvable/feasible. We have used the CVXPY package for solving the convex programs derived in all these cases \cite{diamond2016cvxpy}.
	}}

    \begin{figure}[!t]
\centering
\subfloat[]{
    \hspace{-1cm}\includegraphics[trim=-1cm 0 0 2.7cm, clip, width=\columnwidth]{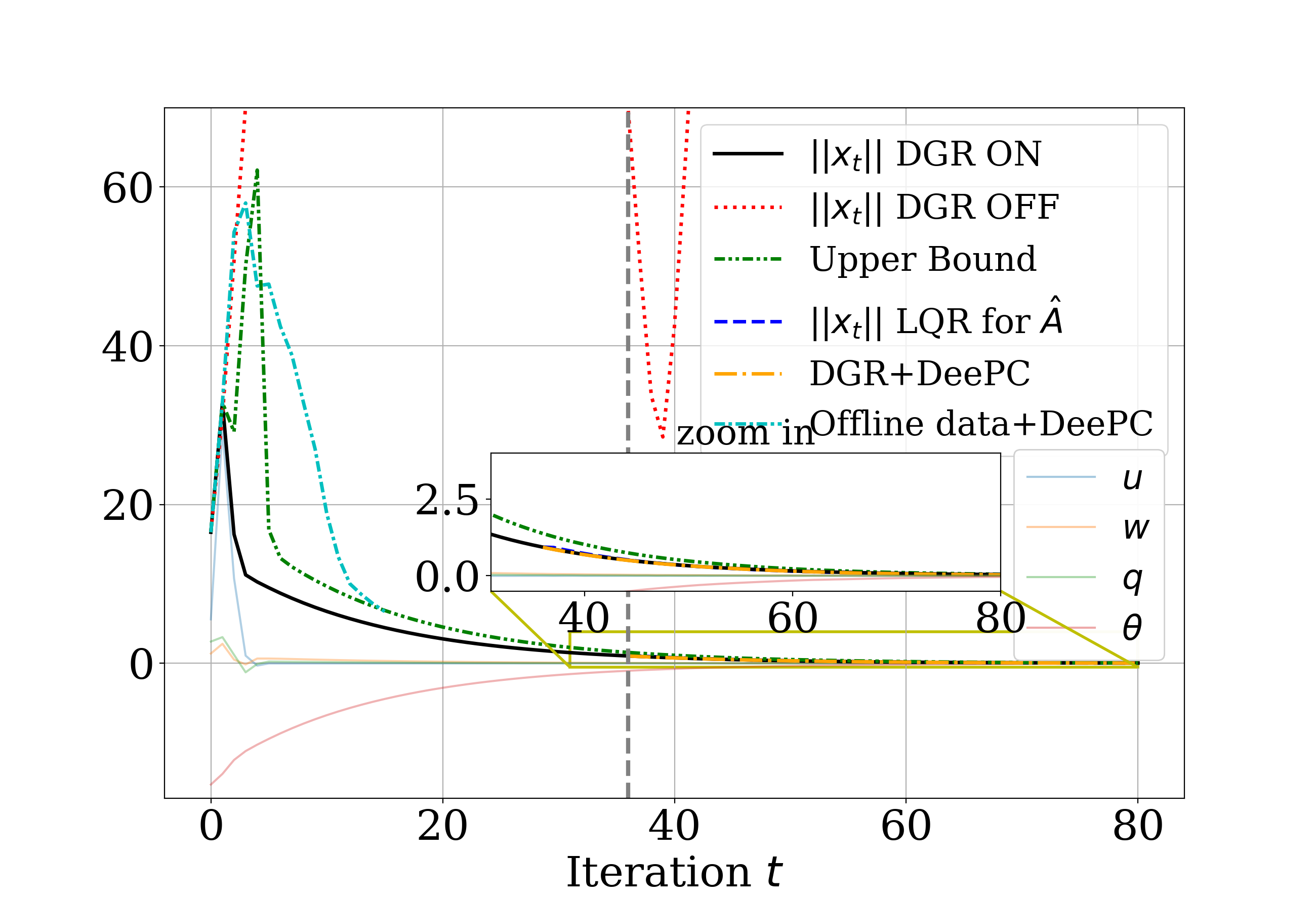}
    \label{fig:x-29-ND-PA-long}
}
\hfil
\subfloat[]{
    \hspace{-1cm}\includegraphics[trim=-1cm 0 0 2.7cm, clip, width=\columnwidth]{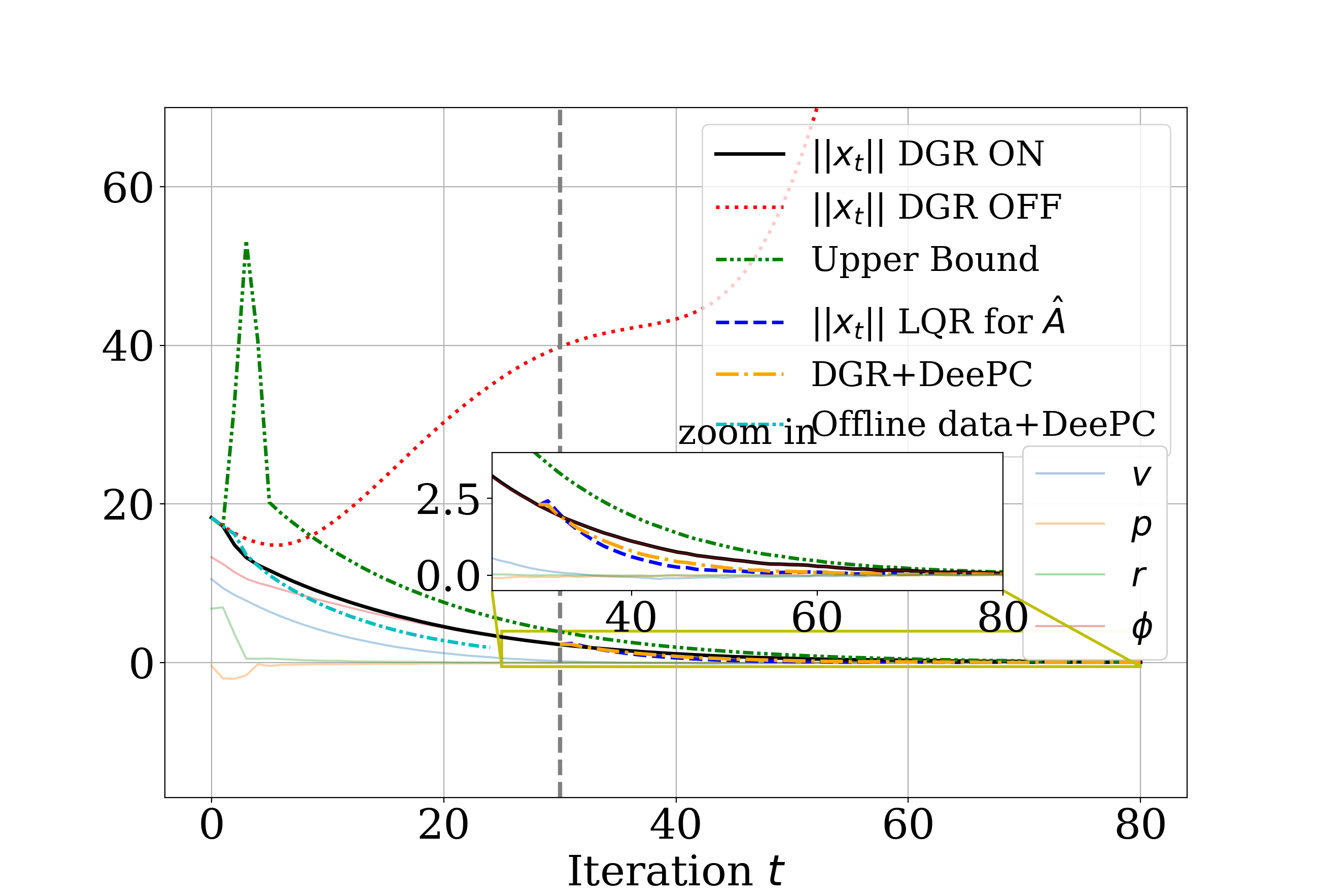}
    \label{fig:x-29-ND-PA-lateral}
}
\caption{The state trajectory of X-29 in ND-PA mode with and without \ac{DGR} for a)  longitudinal control, b) lateral-directional control.}
\label{fig:x-29-ND-PA}
\vspace{-0.3cm}
\end{figure}

	


\begin{figure}[!t]
\centering
\subfloat[]{
    \hspace{-1cm}\includegraphics[trim=-1cm 0 0 2.7cm, clip, width=\columnwidth]{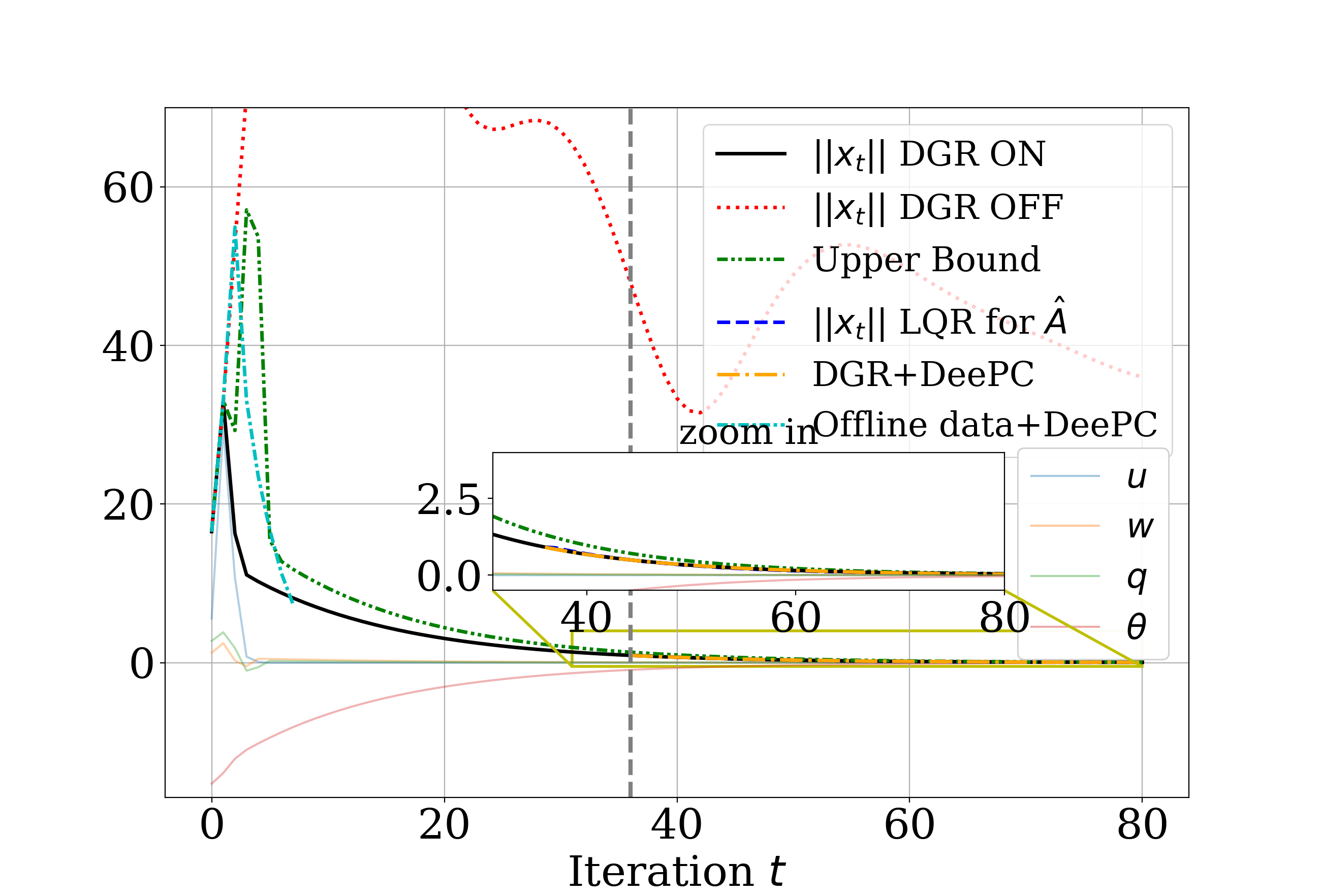}
    \label{fig:x-29-ND-UA-long}
}
\hfil
\subfloat[]{
    \hspace{-1cm}\includegraphics[trim=-1cm 0 0 2.7cm, clip, width=\columnwidth]{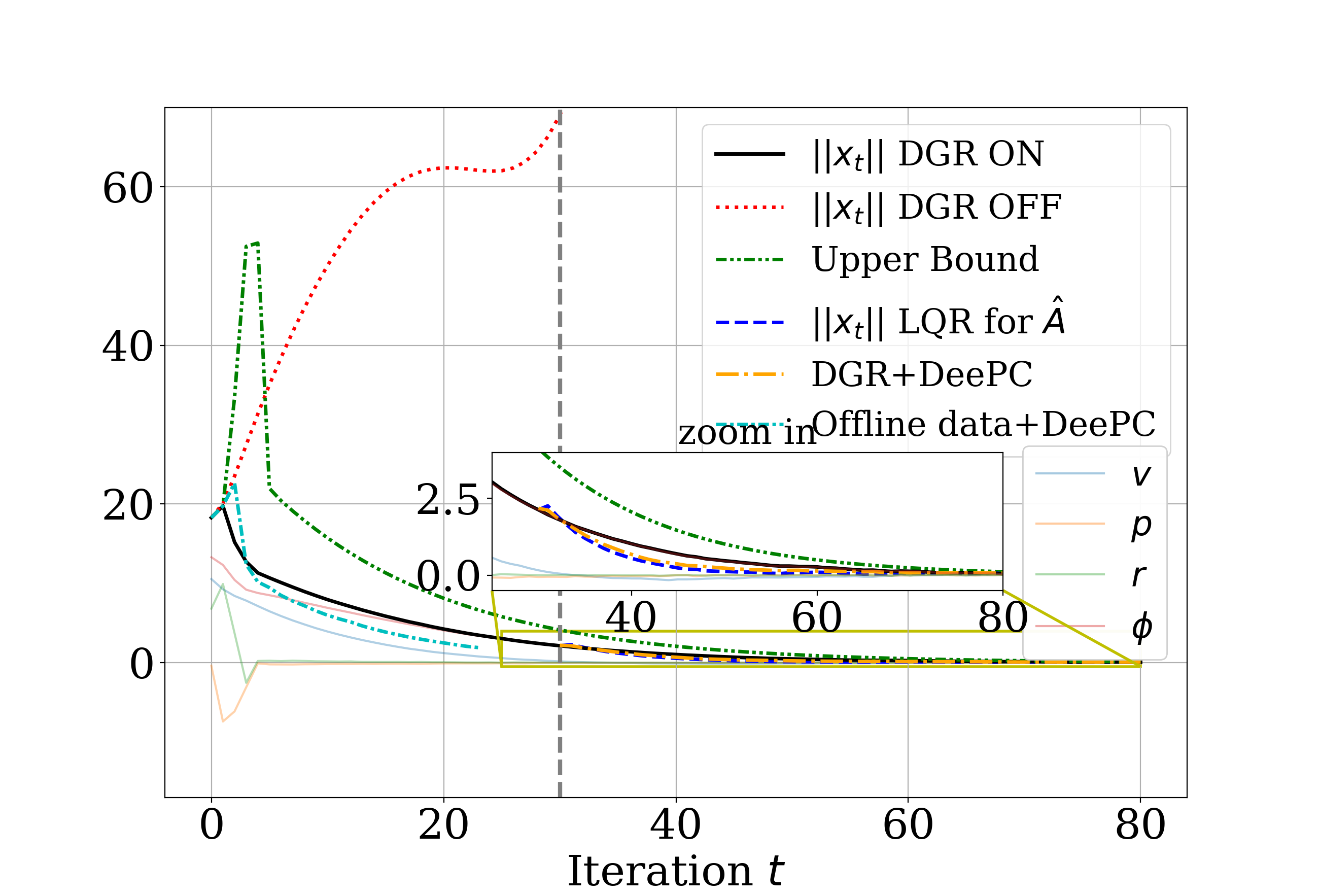}
    \label{fig:x-29-ND-UA-lateral}
}
\caption{\color{violet}The state trajectory of X-29 in ND-UA mode with and without \ac{DGR} for a)  longitudinal control, b) lateral-directional control.}
\label{fig:x-29-ND-UA}
\end{figure}



	\textcolor{violet}{Finally, the convergence of DGR algorithm in terms of the designed controller $K_t$ is illustrated in \Cref{fig:convergence}, where the values are seen to be well behaved after 30 to 40 iterations from the noisy dynamics.
	Furthermore, we note that as the process noise $\oomega_t$ decreases, the error for $K_t$ tends to zero; in the absence of noise, the limiting error is in fact negligible.}
	
	\begin{figure}[!t]
    \centering
    \includegraphics[trim= 0 0 0 0, clip, width=0.9\columnwidth]{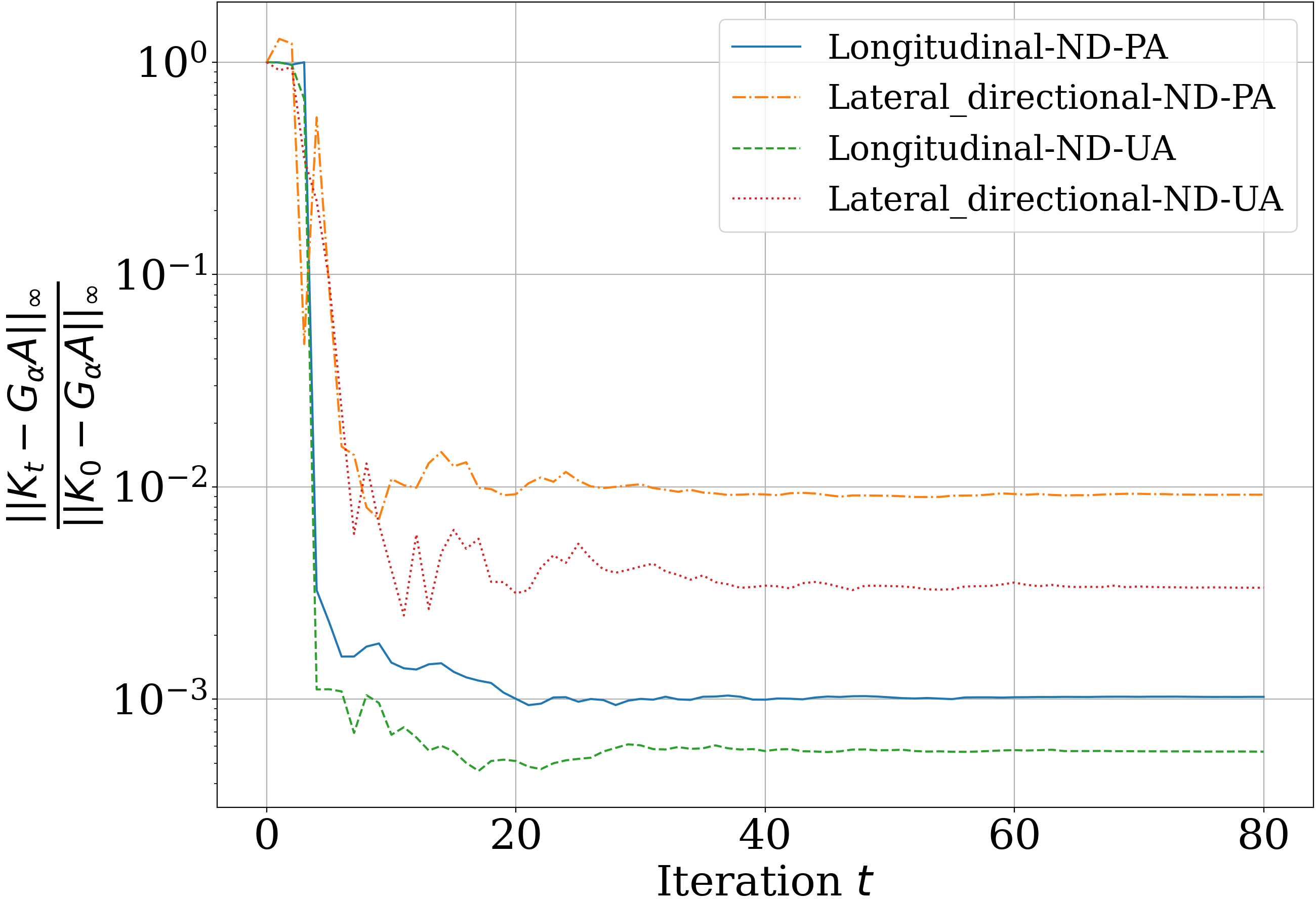}
    \caption{The convergence behavior of \ac{DGR} algorithm for longitudinal and lateral-directional dynamics for both of flight modes.}
    \label{fig:convergence}
    \vspace{-0.3cm}
    \end{figure}
    
	{\color{Blue}
	In these examples, the bound derived in \Cref{thm:recursion}--that requires the knowledge of $A_{\text{new}}$-- is plotted in green for comparison.}
	The behavior of the bound follows our observations in \Cref{rmk:bound_behavior};
	the bound increases as the algorithm initially tries to ``detect" the unstable modes, followed by suppressing these modes for regulation.
	We finally note that for large enough iterations, the rate of change of the upper bound is dictated by $\rho(\widetilde{A}_\text{new})$, which in this case, is slightly less than one.\footnote{The code for these simulations can be found at \url{https://github.com/shahriarta/Data-Guided-Regulation}.}
	

\section{Conclusion}
\label{sec:conclusion}
{    In this paper, we have introduced and characterized ``regularizability,'' a system theoretic notion to quantify the ability of data-driven finite-time regulation for a linear system. Regularizability is distinct from constructs that characterize asymptotic behavior of the system, such as stabilizability and controllability.
    Furthermore, we have proposed \ac{DGR}, an online iterative feedback regulator for a partially unknown, potentially unstable, linear system using streaming data from a single trajectory.
	In addition to regulation of an unknown system, \ac{DGR} leads to informative data that can subsequently be used 
	for data-guided stabilization or system identification.\footnote{
	{\color{violet} This is guaranteed for example when $\alpha = 0$.}}
    Along the way, we have provided another system theoretic notion referred to as ``instability number'' in order to analyze the performance of DGR and derive bounds on the trajectory of the system over a finite-time interval.}
	Subsequently, we presented the application of the proposed online synthesis procedure on a highly maneuverable unstable aircraft. This example underscores {\color{PineGreen} how \ac{DGR} can be integrated with other state-of-the-art data driven methods to achieve better performance through improved numerical conditioning}.

	The extensions of the results presented in this paper to noisy dynamics as well as considering an unknown input matrix---in price of the guaranteed performance from the onset---are deferred to our future work.
	Furthermore, state regulation becomes more challenging when one only relies on partial observation of system's trajectory or when the system is known to have multi-scale dynamics.
	%
	Finally, our setup would be more practical considering input constraints, e.g.,
	rate limits.
	While it is straightforward to address such extensions via convex constraints in the proposed DGR procedure, analysis of the resulting closed loop trajectory
	is more involved.

	
\vspace{-0.5cm}
\appendix{\em Proof of \Cref{lem:Mbound-gen}}:
\label{app:proof-Mbound}
		Let $A = W \Sigma U^\intercal$ be the \ac{SVD} of $A$ where $\Sigma$ is diagonal containing the singular values in descending order and both $W,U \in \mathbb{R}^{n\times n}$ are unitary.
		This implies that,
		\begin{align*}
		M_t(A) &\textstyle = \sup_{\{\v_i\}_1^t \in \mathcal{O}_t^n} \hspace{2mm} \|\Sigma U^\intercal \v_1\| \, \|\Sigma U^\intercal \v_2\| \, \hspace{1mm}\cdots \hspace{1mm}  \|\Sigma U^\intercal \v_t\| \\
		&\textstyle = \sup_{\{\v_i\}_1^t \in \mathcal{O}_t^n} \hspace{2mm} \|\Sigma \v_1\| \, \|\Sigma \v_2\| \, \hspace{1mm}\cdots \hspace{1mm}  \|\Sigma \v_t\| ,
		\end{align*}
		where the last equality is due to the fact that $\{U^\intercal \v_i\}_1^t \in \mathcal{O}_t^n$ only if $\{\v_i\}_1^t \in \mathcal{O}_t^n$, since $U$ is unitary.
		For the lower-bound, if $t \leq n$, we can choose $\{\v_i\}_1^t \in \mathcal{O}_t^n$ such that $|\langle \e_1, \v_i \rangle| = 1/\sqrt{t}$ for all $i=1,\cdots,t$.
		This choice is certainly possible as a result of applying Parseval's identity in a $t$-dimensional subspace with orthonormal basis $\{\v_i\}_1^t$ containing the unit vector $\e_1$, in which, $\e_1$ is represented with all coordinates equal to $1/\sqrt{t}$ 
		with respect to this basis.
		We thus conclude that,
		\begin{equation*}
		\scalemath{0.9}{
		M_t(A) \geq |\sigma_1 \langle \e_1, \v_1 \rangle| \, \cdots \, |\sigma_1 \langle \e_1, \v_t \rangle| \geq \big( {\sigma_1}/{\sqrt{t}} \big)^t ,}
		\end{equation*}
		where the left inequality follows from the fact that $\|\Sigma \v\|~\geq~ |\sigma_1 \langle \e_1, \v \rangle|$ for any $\v \in \mathbb{R}^n$.
		For the upper-bound, define $\Sigma_t = \mathrm{diag}(\sigma_1, \dots, \sigma_t)$ and since singular values are in descending order we have,
		\begin{gather*}
		\begin{aligned}
		M_t(A)& \textstyle \leq \sup_{\{\v_i\}_1^t\in \mathcal{O}_t^t} \prod_{i=1}^t \|\Sigma_t \v_i \| \\
		&\textstyle = \sup_{\{\v_i\}_1^t\in \mathcal{O}_t^t} \prod_{i=1}^t \Big[ \sigma_1^2\left|\langle \e_1, \v_i \rangle\right|^2 + \sum_{j=2}^t \left|\sigma_j \langle \e_j, \v_i \rangle\right|^2 \Big]^{\frac{1}{2}}\\
		& \textstyle \leq \sup_{\{\v_i\}_1^t\in \mathcal{O}_t^t} \prod_{i=1}^t \Big[\sigma_1^2\left|\langle \e_1, \v_i \rangle\right|^2 + \delta \Big]^{\frac{1}{2}} .
		\end{aligned}
		\end{gather*}
		Define $\gamma_i =  \langle \e_1, \v_i \rangle$; then by Bessel's inequality $\sum_{i=1}^t \gamma_i^2 \leq 1$ whenever $\{\v_i\}_1^t\in \mathcal{O}_t^t$. Thereby, by denoting $\ogamma \coloneqq [
		\gamma_1 \;
		\hdots \; \gamma_t
		]^\intercal$, we can conclude that
		\begin{gather*}
		\begin{aligned}
		& M_t(A) \leq \textstyle \sup_{\ogamma \in \mathcal{B}_{2}^t} \prod_{i=1}^t \Big[ \sigma_1^2 \gamma_i^2 + \delta \Big]^{\frac{1}{2}}\\
		&\textstyle = \sup_{\ogamma \in \mathcal{B}_{2}^t} \Big[ \sum_{i=1}^{t+1} \sigma_1^{2(t+1-i)} \, \delta^{i-1} \hspace{-1mm} \sum_{\substack{|\alpha| = t+1-i\\ \alpha \in \{0,1\}^t}~} (\gamma_1^2)^{\alpha_1} \cdots (\gamma_t^2)^{\alpha_t} \Big]^{\frac{1}{2}},
		\end{aligned}
		\end{gather*}
		where $\alpha$ is a multi-index of dimension $t$, and the last equality follows by direct computation.
		Now it is easy to see that for a fixed multi-index $\alpha$, if $|\alpha| =m > 0$ and $\alpha \in \{0,1\}^t$ then
		\begin{align*}
		\textstyle
		\sup_{\ogamma \in \mathcal{B}_{2}^t}~ (\gamma_1^2)^{\alpha_1} \cdots (\gamma_t^2)^{\alpha_t} \leq (\frac{1}{m})^m ,
		\end{align*}
        which follows by the symmetry in optimization variables.
		Therefore, we can conclude that
		\begin{gather*}
		\textstyle 
		M_t(A) \leq \Big[ \delta^t + \sum_{i=1}^{t} \big[\frac{\sigma_1^2}{t+1-i}\big]^{(t+1-i)} \, \delta^{i-1} \binom{t}{t+1-i} \Big]^{\frac{1}{2}},
		\end{gather*}
		implying the claimed upperbound.



\vspace{-0.3cm}
\section*{Acknowledgment}
The authors would like to thank Dillon R. Foight and Bijan Barzgaran for valuable discussions pertaining to this work.
The authors also thank the Associate Editor and the anonymous reviewers for their helpful comments on earlier drafts of the manuscript.



\bibliographystyle{ieeetr}
\bibliography{citations}
%
\begin{IEEEbiography}[{\includegraphics[width=1in,height=1.25in,clip,keepaspectratio]{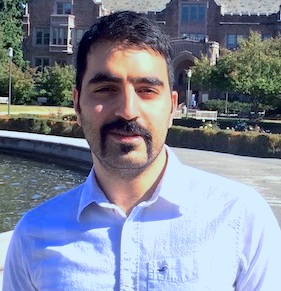}}]{Shahriar Talebi}
    
    (S'17) received his B.Sc. degree in Electrical Engineering from Sharif University of Technology, Tehran, Iran, in 2014, M.Sc. degree in Electrical Engineering from University of Central Florida (UCF), Orlando, FL, in 2017, both in the area of control theory.
    He is currently pursuing a Ph.D. degree in Aeronautics and Astronautics and a M.S. degree in Mathematics at the University of Washington (UW), Seattle, WA. 
    He is a recipient of William E. Boeing Endowed Fellowship, Paul A. Carlstedt Endowment, and Latvian Arctic Pilot--A. Vagners Memorial Scholarship, in 2018-19 at UW, and Frank Hubbard Engineering Scholarship in 2017 at UCF.
    
    His research interests includes game theory and variational analysis, data-driven control and networked dynamical systems. 
    He is also interested in applying the developed techniques to analyze distributed systems operating in cooperative/non-cooperative environments.
\end{IEEEbiography}
\begin{IEEEbiography}[{\includegraphics[width=1in,height=1.25in,clip,keepaspectratio]{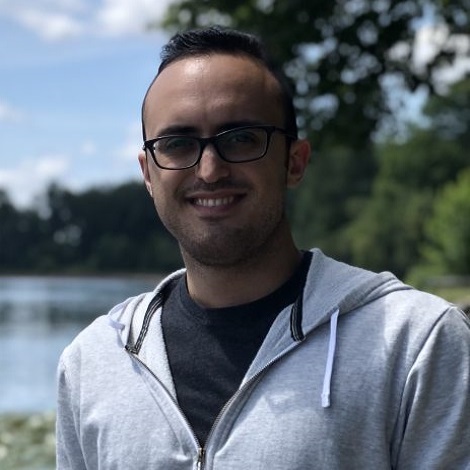}}]{Siavash Alemzadeh}
    (S'17) received his B.S. in Mechanical Engineering from Sharif University of Technology, Tehran, Iran in 2014. He received his M.S. in Mechanical Engineering in 2015 and his Ph.D. from William E. Boeing department of Aeronautics and Astronautics in 2020 from the University of Washington, Seattle, USA. He is currently a data and applied scientist at Microsoft.
    
    His research interests are reinforcement learning, data-driven control of distributed systems, and networked dynamical systems and he is interested in the applications of these areas to transportation, infrastructure networks, and robotics.
\end{IEEEbiography}
\begin{IEEEbiography}[{\includegraphics[width=1in,height=1.25in,clip,keepaspectratio]{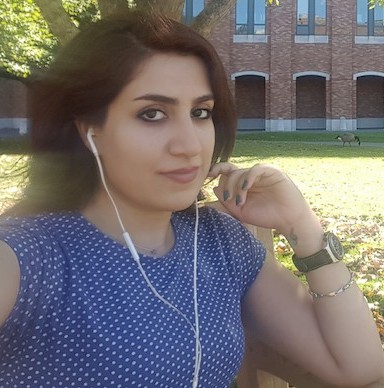}}]{Niyousha Rahimi} (S'20)
    received her B.S. in Mechanical Engineering from Sharif University of Technology, Tehran, Iran in 2016.
    She received her M.S. in Mechanical Engineering from the University of Washington (UW) in 2018.
    She is currently pursuing a Ph.D. in Aeronautics and Astronautics at the University of Washington, WA, USA. She was a recipient of 
    Ruth C. Hertzberg Endowed Fellowship in 2018.
    
    Her research interests are data-driven control, Vision-based Navigation and Stochastic planning. She is also interested in the applications of these areas to robotics and aerospace systems.
\end{IEEEbiography}
\begin{IEEEbiography}[{\includegraphics[width=1in,height=1.25in,clip,keepaspectratio]{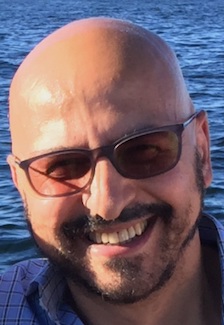}}]{Mehran Mesbahi}
    (F'15) received his Ph.D. from the University of Southern California, Los Angeles, in 1996. He was a member of the Guidance, Navigation, and Analysis group at JPL from 1996-2000 and an Assistant Professor of Aerospace Engineering and Mechanics at the University of Minnesota from 2000-2002.
    He is currently a Professor of Aeronautics and Astronautics, Adjunct Professor of Electrical and Computer Engineering and Mathematics, and Executive Director of Joint Center for Aerospace Technology Innovation at the
    University of Washington.
    He was the recipient of NSF CAREER Award in 2001, NASA Space Act Award in 2004, UW Distinguished Teaching Award in 2005, and UW College of Engineering Innovator Award for Teaching in 2008.
    
    His research interests are distributed and networked aerospace systems, systems and control theory, and learning.
\end{IEEEbiography}



\end{document}